\newtheorem{theorem}{Theorem}
\newtheorem{proposition}[theorem]{Proposition}
\newtheorem{example}[theorem]{Example}
\newtheorem{definition}[theorem]{Definition}
\newtheorem{lemma}[theorem]{Lemma}
\newtheorem{corollary}[theorem]{Corollary}
\newtheorem{notation}[theorem]{Notation}
\newtheorem{convention}[theorem]{Convention}
\newtheorem*{theorem*}{Theorem}{\bfseries}{\itshape}
\newtheorem*{proposition*}{Proposition}{\bfseries}{\itshape}
\newcommand{\CS}[1]{\ensuremath{\mathcal{C}_{#1}}}
\newcommand{\NS}[1]{\ensuremath{\mathcal{N}_{#1}}}
\newcommand{\NN}[2]{\ensuremath{\mathcal{N}_{#1,#2}}}
\newcommand{\Na}{\ensuremath{\mathbb{N}}}
\newcommand{\Css}{\ensuremath{\bm{C}}}
\newcommand{\Ns}{\ensuremath{\bm{N}}}
\newcommand{\Cs}[1]{\ensuremath{\bm{C}_{#1}}}
\newcommand{\CN}[2]{\ensuremath{\mathcal{C}_{#1,#2}}}
\newcommand{\Hs}[2]{\ensuremath{\bm{H}_{#1,#2}}}
\newcommand{\Hss}{\bm{H}}
\newcommand{\Mat}[3]{\ensuremath{\mathcal{M}_{#1,#2}\left( #3 \right)}}
\newcommand{\word}[1]{\ensuremath{\boldsymbol{#1}}}
\newcommand{\uv}{\word{u}}
\newcommand{\vv}{\word{v}}
\newcommand{\wv}{\word{w}}
\newcommand{\supp}[1]{\ensuremath{\mathrm{Supp}(#1)}}
\newcommand{\Ps}{\mathcal{P}\left(\CS{2^m},\preceq_{SH}\right)}
\newcommand{\Po}{\mathcal{P}}
\newcommand{\Rel}{\mathrm{Rel}}
\title{Fast Reliability Ranking of Matchstick Minimal Networks}
\author{Vlad-Florin~Dr{\u a}goi, and Valeriu~Beiu, \IEEEmembership{Senior Member, IEEE}%
\thanks{V. Dr{\u a}goi is with the Faculty of Exact Sciences, ``Aurel Vlaicu'' University of Arad,
2-4 Elena Dragoi Str., 310330 Arad, Romania, and with Normandie Univ, France; UR, LITIS, F-76821 Mont-Saint-Aignan, France.}
\thanks{V. Beiu is with the Faculty of Exact Sciences, ``Aurel Vlaicu'' University of Arad,
2-4 Elena Dragoi Str., 310330 Arad, Romania.}
}
\begin{document}

\maketitle

\begin{abstract}
In this article, we take a closer look at the reliability of large minimal networks constructed by repeated compositions of the simplest possible networks. For a given number of devices $n=2^m$ we define the set of all the possible compositions of series and parallel networks of two devices. We then define several partial orders over this set and  study their properties. As far as we know the ranking problem has not been addressed before in this context, and this article establishes the first results in this direction. The usual approach when dealing with reliability of two-terminal networks is to determine existence or non-existence of uniformly most reliable networks. The problem of ranking two-terminal networks is thus more complex, but by restricting our study to the set of compositions we manage to determine and demonstrate the existence of a poset.

\end{abstract}
\begin{IEEEkeywords}
Heaviside most reliable, minimal two-terminal network, poset of compositions, reliability polynomial,  uniformly most reliable.
\end{IEEEkeywords}

\section{Introduction}
\label{sec:introduction}
One well-known problem in information processing is that of identifying schemes that would allow maximizing reliability, while keeping resources bounded (redundancy factors). Obviously, the design-for-reliability problem becomes more challenging as the system grows larger and is required to function without interruptions for longer times. Another aspect is that enhancing/maximizing reliability should be done with a limited amount of additional (redundant) components. The number of components is the simplest and most obvious cost function, but other cost functions (also known as \textit{figures-of-merit}, or FoMs) have been proposed and used, e.g., the number of wires. It follows that \textit{design-for-reliability} is a constraint optimization problem: {\it maximize system reliability given limited resources}. The problem permeates way beyond computers into most man-made systems, while nature also relies on reliability principles/schemes at different levels (an example here being the human brain, having  $10^{11}$ neurons interconnected by $10^{15}$ synapses, working over many years).

In the following we shall restrict the scope of our discussions to computers. In this context, reliability was established by John von Neumann \cite{1952_vN}. The focus was on how one could design reliable circuits/computers using unreliable logic gates. The schemes proposed have replicated gates followed by voting and/or multiplexing, and led to regular and repetitive building blocks. Another take on this topic was put forward by Edward F. Moore and Claude E. Shannon \cite{1956_MS_1,1956_MS_2}. The major difference was that instead of starting from gates, Moore and Shannon decided to pursue their analysis starting from relays (switching devices). Their results were much more encouraging than \cite{1952_vN}. 

In this article, we analyze particular solutions for designing reliable and regular networks. One of the main motivations is that regular networks bode well with novel array-based designs, e.g., FinFETs \cite{2002_Gep}, vertical FET, gate-all-around FET, and arrays of beyond CMOS devices \cite{2016_Cou}. These can be extrapolated to wireless sensor networks, vehicular/mobile ad hoc networks, and Internet of Things \cite{R94,H_2018}. The solution we are advocating here for growing larger and more reliable networks is by \emph{combining smaller networks through compositions}. The basic building blocks we are going to use here are the smallest networks connected in series and in parallel. One of the main advantages of using series and parallel networks is that they are very easy to evaluate (as their reliability polynomials are easier to compute \cite{2003_KZ}), while compositions of series and parallel networks are inheriting this benefit.

\paragraph{Related work} Moore and Shannon were the first to propose the technique of ``composition'' for building complex networks \cite{1956_MS_1}. They proved that when a network is composed with itself $k$ times the resulting network is {\it significantly more reliable}. When $k$ tends to infinity the reliability of the repeated compositions approaches a Heaviside step function $\theta$. 

Lately, compositions of series and parallel \cite{2018_DCHGB_J}, as well as compositions of hammocks \cite{2018_DCHGB(1)} were advocated and evaluated. The results reported in \cite{2018_DCHGB_J} are promising for several reasons.

Firstly, the reliability polynomials are efficiently computable (for compositions of series and parallel). Secondly, there are series and parallel compositions which are comparable to hammocks (with respect to several different metrics for reliability), and thirdly the reliability polynomials of compositions have compact forms and are sparser than the ones for hammocks. 

 One of the open questions stated in \cite{2018_DCHGB_J} was: {\it What is the most reliable composition given a fixed number of devices?} The trivial solution, which is the worst case scenario, is to generate all compositions of a given size $n$, to compute all the associated reliability polynomials, and to compare them (by means of different FoMs). Here our aim is to give a non-trivial solution to this question by studying the relationships between compositions of series and parallel networks. For doing this we introduce several partial orders over the set of all compositions. 
 
 Recently, ordering the reliability polynomials of series and parallel compositions was investigated in  \cite{DCB18} using simulations. Here, while tackling the same type of problems, we are advancing the theoretical foundation for explaining the simulation results of \cite{DCB18}. In particular, the main contributions of this article are:

 \begin{enumerate}
 \item We give here a theoretical tool based on poset theory, in order to mathematically explain and prove the existence of an ordering relation over the set of reliability polynomials of compositions of series and parallel. 
 \item We introduce a matrix representation for a particular class of networks (introduced by Moore and Shannon in \cite{1956_MS_1}), that allows us to determine and prove several structural properties including duality of these networks. 

 \item We use the structure of the poset to answer fundamental questions regarding the reliability of compositions of series and parallel. 
 \end{enumerate}
\paragraph{Application of the poset of compositions}

\textit{Uniformly most reliable matchstick minimal networks} The problem of finding {\it uniformly most reliable} (UMR) networks or graphs (both terms have been used interchangeably in the literature) is an interesting topic in reliability theory. This property is defined as follows: Among the set of all graphs with $\omega$ vertices and $n$ edges, a graph is UMR if its reliability polynomial is greater than the reliability polynomials of all other graphs from the same set, for every $p\in [0,1]$ (definition from \cite{2017_BGGS,1991_BLS}). Many results are known in this sense. For the {\it all-terminal reliability} problem there are certain conditions for which UMR networks exist \cite{1991_BLS,1981-K,1994_W,2000_AS}, and conditions for which UMR networks do not exist \cite{1981-K,2014_BC,1991_MCPP}. For the {\it two-terminal reliability} problem even fewer results are known \cite{2017_BGGS,BGGS18}. 

This article will focus only on a subclass of two-terminal minimal networks known as {\it matchstick minimal networks} (MMNs), without restriction on the number of vertices $\omega$. Hence, we will say that an MMN made of $n$ edges/devices is UMR if its reliability polynomial is greater than or equal to the reliability polynomials of all MMNs of $n$ edges/devices, for every $p\in [0,1]$. By studying the properties of our posets we will show that the all-parallel composition network is UMR-MMN.

\textit{Heaviside most reliable}
Reliability as per Moore and Shannon's paper \cite{1956_MS_1} should be understood not only as the connectivity ($s,t$-connectedness) but also as the non-connectivity ($s,t$-disconnectedness) of the network, as their networks were intended to replace switching devices (i.e., which have to connect and also to disconnect as needed). Hence, such networks should have their reliability polynomials close to $0$ for some interval $[0,p_0)$, and close to $1$ for the remaining interval $[p_0,1].$ This implies that these networks should have their reliability polynomials as close as possible to a shifted Heaviside step function $\theta(p-p_0)$ with $p_0$ around $1/2.$

That is why, in this paper we define the concept of {\it Heaviside most reliable} (HMR). More precisely, we will say that an MMN for which the reliability polynomial is smaller than the reliability polynomials of all other MMNs for all $p\in[0,p_0)$, and greater than the reliability polynomials of all other MMNs for $p\in [p_0,1],$ is HMR-MMN. Notice that the HMR for $p_0=0$ is UMR. With respect to this definition, we will prove that there are no HMR-MMNs for any $p_0\in (0,1)$.

\textit{Complexity of comparing MMNs}
Another possibly useful application of the results we are going to report here is that, when comparing MMNs by using the structural properties of the posets (i.e., symmetries, rank unimodality, etc.), we are able to reduce the computational complexity. More precisely, we show that results like those in \cite{2018_DCHGB_J} can be obtained by computing the reliability polynomials of a very small number of compositions. This fact brings a significant reduction of the computational complexity. 

It is well-known that in general computing the reliability polynomial of a network is \#P \cite{1979_V,1986_B}. Nevertheless, there are several subclasses of two-terminal networks for which this problem becomes tractable, such as particular ladder networks (e.g., Brech-Colbourn, fan, $K_3$, $K_4$ cylinders, etc.). Still, little is known about the complexity of this problem for MMNs. For the moment, we know that this problem can be solved in polynomial time for compositions of series and parallel \cite{2018_DCHGB_J}, and for series-parallel networks in general \cite{1982_SW}. However, the complexity of computing the reliability polynomials for hammocks is not yet known \cite{2018_CBDP}. That is why this study of posets reveals interesting relationships between MMNs, without computing their associated reliability polynomials.

This article is organized as follows. Section \ref{sec:prelim} starts with formal definitions of MMNs. We also define the main concepts, namely compositions of series and parallel, as well as their basic properties. In Section \ref{sec:prop_compo} we present structural properties of compositions, with emphasis on their duality property. Section \ref{sec:poset} is devoted to posets of reliability polynomials of compositions, while Section \ref{sec:prop_poset} describes the main characteristics of the most promising poset. In Section \ref{sec:applic_poset}~we illustrate two possible applications for UMR-MMNs and HMR-MMNs, before ending the paper with concluding remarks and future directions of research.

\section{Preliminaries}\label{sec:prelim}

\subsection{Minimal two-terminal networks}
\begin{definition}
Let $n$ be a strictly positive integer. We say that $\Ns$ is a two-terminal network of size $n$ if $\Ns$ is a circuit, made of $n$ identical devices, that has two distinguished contacts/terminals: an input or source $S$, and an output or terminus $T$. 
\end{definition}

\begin{figure}[!ht]
\begin{center}
\includegraphics[width=.8\textwidth]{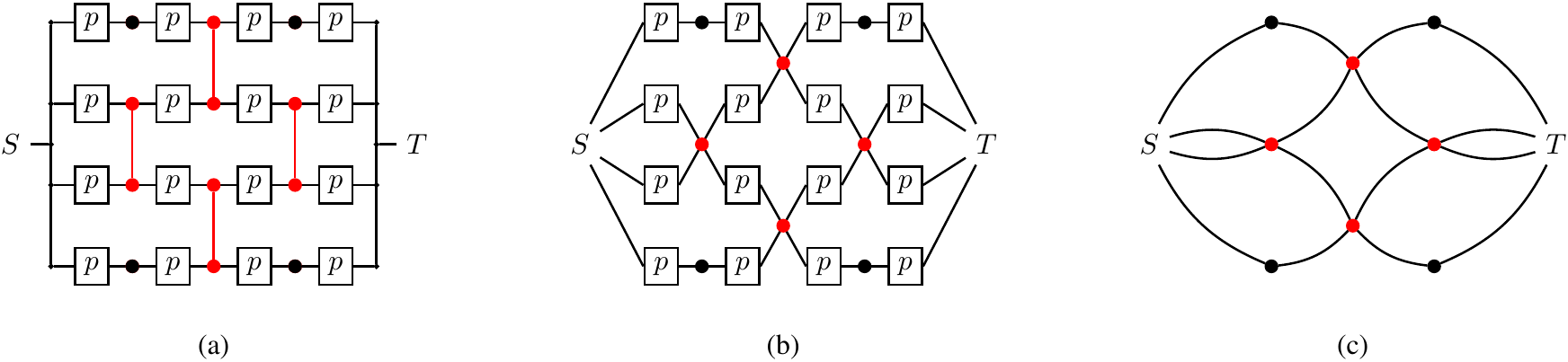}
\caption{A particular MMN, the hammock $\Hs{4}{4}$: (a) brick-wall representation \cite{1956_MS_1}; (b) hammock representation \cite{1956_MS_2}; and (c) graph representation \cite{1956_MS_1}.\label{fig:convetion_graph}}
\end{center}
\end{figure}

We emphasize that the usual way of defining and studying the reliability of a two-terminal network is by graph theoretical models \cite{1995_BCP,C87}. Even though our definition might seem different, actually one can map any circuit onto a graph by establishing a bijection between any device of a circuit and the corresponding edge of the graph, see Fig. \ref{fig:convetion_graph}. In this paper we will consider that only devices/edges can fail with independent and identical probability $q=1-p.$  

Any two-terminal network $\Ns$ can be characterized at least by three parameters: {\it width} ($w$), {\it length} ($l$), and {\it size} $(n)$, where 
$w$ is the size of a ``minimal cut" separating $S$ from $T$;
 $l$ is the size of a ``minimal path" from $S$ to $T$;
 $n$ is related to $l$ and $w$ as $n\ge wl$ (see Theorem 3 in \cite{1956_MS_1}). If $n=wl$ we say that $\Ns$ is a minimal network. 

In the following we will restrict our investigation to a subclass of minimal two-terminal networks, which we are calling MMNs. These should not be confused with matchstick graphs \cite{H_86}, which are different structures from geometric graph theory. 
\begin{definition}\label{def:MMN}
Let $w$ and $l$ be two strictly positive integers. A two-terminal network $\Ns$ is MMN if and only if it can be designed in one of the following two ways. 
Either start with a parallel-of-series (PoS) of width $w$ and length $l$ (as in Fig. \ref{fig:hammock}~(a)) and place vertical matchsticks (wires) arbitrarily; or start with a series-of-parallel (SoP) of width $w$ and length $l$ (as in Fig. \ref{fig:hammock}~(b)) and remove vertical matchsticks (wires) arbitrarily.
\end{definition}

A matchstick is a short wire (red) connecting two vertically adjacent nodes as in Fig. \ref{fig:convetion_graph} (a). By shrinking the matchsticks in Fig. \ref{fig:convetion_graph} (a) down to a single node (red) we obtain the ``$\times$''-crossing representation shown in Fig. \ref{fig:convetion_graph} (b).
\begin{definition} \label{def:matrix_match_min}For any MMN $\Ns$ with $w,l\ge 2$, we define its matchsticks incidence matrix $M_{\Ns}\in \Mat{w-1}{l-1}{\{0,1\}}$, as $M_{\Ns}(i,j)=1$ if there is a matchstick at position $(i,j)$ and $0$ elsewhere.
   \end{definition}
  
 For example, the four MMNs in Fig. \ref{fig:hammock} have matchstick incidence matrices

 \begin{equation*}
 M_{PoS}=\begin{pmatrix}0&0&0\\0&0&0\\0&0&0\end{pmatrix}\quad M_{SoP}=\begin{pmatrix}1&1&1\\1&1&1\\1&1&1\end{pmatrix}
 \quad M_{\Hs{4}{4}}=\begin{pmatrix}0&1&0\\1&0&1\\0&1&0\end{pmatrix}\quad M_{\Hs{4}{4}^+}=\begin{pmatrix}1&0&1\\0&1&0\\1&0&1\end{pmatrix}.
 \end{equation*}

 \begin{convention} The convention that we adopt here is to start indexing the vectors and the matrices with $1$, respectively $(1,1).$ The elements of any set are ordered by lexicographic order, and (when the set contains integers) these are ordered with respect to the natural order on integers.
      \end{convention}

   MMNs with $w=1$ are called \textit{all-series} and do not admit a matchstick incidence matrix. This fact also holds for \textit{all-parallel} networks, that is to say MMNs with $l=1.$
The set of all MMNs of size $n=wl$ will be denoted $\NS{n},$ and we have $\NS{n}=\bigcup\limits_{w|n}^{}\NN{w}{n/w}$  (see \cite{2018_DCHGB_J}).

\begin{figure}[!hb]
\begin{center}
\includegraphics[width=\textwidth]{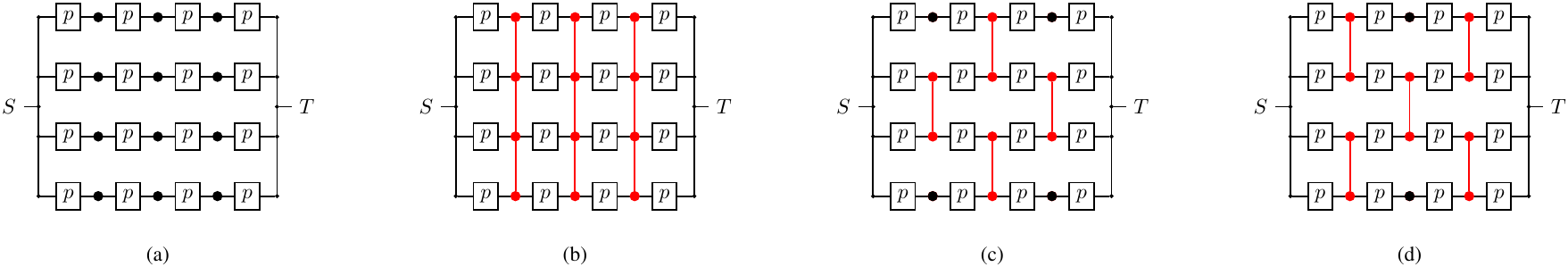}
\caption{Four representative MMNs with $w=l=4$: (a) PoS (no matchsticks); (b) SoP (all possible matchsticks); (c) $\Hs{4}{4}^{}$; and (d) $\Hs{4}{4}^{+}$.\label{fig:hammock}}
\end{center}
\end{figure}

\begin{lemma}
For any strictly positive $w$ and $l$ we have 
\begin{equation}\left |\NN{w}{l}\right|=2^{(w-1)(l-1)}.\label{eq:numb_matchstick}\end{equation}
\end{lemma}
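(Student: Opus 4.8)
The plan is to exhibit an explicit bijection between $\NN{w}{l}$ and the set $\Mat{w-1}{l-1}{\{0,1\}}$ of binary matrices with $w-1$ rows and $l-1$ columns, whose cardinality is plainly $2^{(w-1)(l-1)}$, and then to dispose of the degenerate cases $w=1$ and $l=1$ separately.

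First I would assume $w,l\ge 2$. By Definition~\ref{def:MMN}, every $\Ns\in\NN{w}{l}$ is obtained from the PoS skeleton of width $w$ and length $l$ --- which is fixed once $w$ and $l$ are given --- by placing vertical matchsticks at an arbitrary subset of the interior node pairs, and that skeleton has exactly $w-1$ rows and $l-1$ columns of such positions. Definition~\ref{def:matrix_match_min} records precisely this subset as the matrix $M_{\Ns}\in\Mat{w-1}{l-1}{\{0,1\}}$, so the assignment $\Ns\mapsto M_{\Ns}$ is well defined. It is injective because the underlying PoS skeleton depends only on $w$ and $l$, hence two MMNs of the same width and length with the same matchstick pattern coincide. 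It is surjective by the other half of Definition~\ref{def:MMN}: given $M\in\Mat{w-1}{l-1}{\{0,1\}}$, placing a matchstick exactly at those positions $(i,j)$ with $M(i,j)=1$ yields an MMN of width $w$ and length $l$ whose incidence matrix is $M$ (equivalently, one may start from the SoP skeleton and remove the matchsticks at the zero positions). Therefore $\left|\NN{w}{l}\right|=\left|\Mat{w-1}{l-1}{\{0,1\}}\right|=2^{(w-1)(l-1)}$.

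For the remaining cases, if $w=1$ then $\NN{1}{l}$ contains only the all-series network on $l$ devices, so $\left|\NN{1}{l}\right|=1=2^{0}=2^{(1-1)(l-1)}$; symmetrically $\left|\NN{w}{1}\right|=1=2^{(w-1)(1-1)}$, and the choice $w=l=1$ is subsumed in either formula. This covers all strictly positive $w$ and $l$. The step that I expect to need the most care is checking that the matchstick map is genuinely a bijection in the non-degenerate range: that the admissible interior positions number exactly $(w-1)(l-1)$, that distinct matchstick patterns give distinct networks under the identification used in the paper, and that the PoS-plus-matchsticks and SoP-minus-matchsticks constructions of Definition~\ref{def:MMN} do range over the same family. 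Once that bijection is secured, the cardinality count is immediate.
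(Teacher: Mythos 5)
Your proposal is correct and takes essentially the same approach as the paper: the paper's proof is exactly the one-to-one correspondence between $\NN{w}{l}$ and $\Mat{w-1}{l-1}{\{0,1\}}$ given by the matchstick incidence matrix. Your treatment is somewhat more careful, in that you verify injectivity and surjectivity explicitly and handle the degenerate cases $w=1$ and $l=1$ (where no incidence matrix exists but the formula still gives $2^0=1$), which the paper leaves implicit.
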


\begin{proof}
By Definition \ref{def:matrix_match_min} there is a one-to-one mapping between the set of MMNs of width $w$ and length $l$ and the set of binary matrices $\mathcal{M}_{w-1,l-1}(\{0,1\}).$ From this fact the combinatorial result follows.
\end{proof}

\subsection{Hammocks and compositions of series and parallel}

\paragraph{Hammock networks} MMNs with the well-known ``brick-wall'' pattern (see Fig. \ref{fig:convetion_graph}~(a)) are known as hammocks \cite{1956_MS_1,1956_MS_2,2018_CBDP}. They can be generated starting from PoS \cite{2003_KZ} (see Fig. \ref{fig:hammock}~(a)). If $w$ and $l$ are both even there are two solutions $\Hs{w}{l}$ and $\Hs{w}{l}^{+}$ (see Figs. \ref{fig:hammock}~(c) and (d)), while otherwise we are left only with $\Hs{w}{l}$ (out of all the $2^{(w-1)(l-1)}$ MMNs given by eq. \eqref{eq:numb_matchstick}).

\paragraph{Compositions of series and parallel}\label{sec:compositions}
The composition of two MMNs $\Ns_1$ and $\Ns_2$, denoted $\Ns_1\bullet\Ns_2$ is obtained by replacing each device in $\Ns_1$ by a copy of $\Ns_2$. In this article, we will consider only compositions where $\Ns_1$ and $\Ns_2$ are either two devices in series, or two devices in parallel. In order to be consistent with the existing notations from the literature we will denote two devices in series $\Css^{(0)}$, and two devices in parallel $\Css^{(1)}$. We generalize compositions of $\Css^{(0)}$ and $\Css^{(1)}$ to any $m$-length binary vector $\uv=(u_1,\dots,u_m)\in \{0,1\}^m$ as
\begin{equation}
\Css^{\uv}=\Css^{(u_1)}\bullet\dots\bullet\Css^{(u_m)}.
\end{equation}

\begin{notation} 
We will employ similar notations as for MMNs, namely, $\Cs{2^m}$ is a network from $\CS{2^m},$ the set of all $2^m$-size compositions of $\Css^{(0)}$ and $\Css^{(1)}.$
\end{notation}

We also remember two well-known concepts, for any binary vector $\uv\in \{0,1\}^m$: 
 \begin{itemize}
\item $\supp{\uv},$ is the set of all indices corresponding to non-zero entries of $\uv$;
\item $|\uv|$ is the Hamming weight, i.e., the number of non-zero components of $\uv.$  
\end{itemize}
Notice that
\begin{equation}|\uv|=\#\supp{\uv}.
\end{equation}

For example, $\uv=(1,1,0,1)$ has $\supp{\uv}=\{1,2,4\}$ and $|\uv|=3.$

\begin{proposition}\label{pr:compo_net}
Let $w_1,w_2,l_2$ and $l_2$ be integers strictly larger than $1$, and let $\Ns_1\in \NN{w_1}{l_1}$ and $\Ns_2\in\NN{w_2}{l_2}.$ Then the composition of $\Ns_1$ and $\Ns_2$ is $\Ns=\Ns_1\bullet\Ns_2\in \NN{w_1w_2}{l_1l_2}$, with matchstick incidence matrix
\begin{equation}M_{\Ns}=
\begin{pmatrix}M_{\Ns_2}&\bm{1_{(w_2-1)\times 1}}&\dots&\dots&M_{\Ns_2}&\bm{1_{(w_2-1)\times 1}}&M_{\Ns_2}\\
\bm{0_{1\times(l_2-1)}}&M_{\Ns_1}(1,1)&\dots&\dots&\bm{0_{1\times(l_2-1)}}&M_{\Ns_1}(1,l_1-1)&\bm{0_{1\times(l_2-1)}}\\
\vdots&\vdots&\vdots&\vdots&\vdots&\vdots&\vdots\\
\bm{0_{1\times(l_2-1)}}&M_{\Ns_1}(w_1-1,1)&\dots&\dots&\bm{0_{1\times(l_2-1)}}&M_{\Ns_1}(w_1-1,l_1-1)&\bm{0_{1\times(l_2-1)}}\\
M_{\Ns_2}&\bm{1_{(w_2-1)\times 1}}&\dots&\dots&M_{\Ns_2}&\bm{1_{(w_2-1)\times 1}}&M_{\Ns_2}
\end{pmatrix}.\label{eq:matrix_compo}
\end{equation}
\begin{itemize}
\item When $w_1=1$ $M_{\Ns}$ is obtained as in \eqref{eq:matrix_compo} by tacking only the $1^{st}$ block rows containing $M_{\Ns_2}.$ 
\item When $l_1=1$ $M_{\Ns}$ is obtained as in \eqref{eq:matrix_compo} by tacking only the $1^{st}$ block column containing $M_{\Ns_2}.$   
\item When $w_2=1$ $M_{\Ns}$ is obtained as in \eqref{eq:matrix_compo} by deleting the block rows containing $M_{\Ns_2}.$ 
\item When $l_2=1$ $M_{\Ns}$ is obtained as in \eqref{eq:matrix_compo} by deleting the block column containing $M_{\Ns_2}.$   
\end{itemize}

\end{proposition}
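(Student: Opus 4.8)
The plan is to prove the composition formula by induction, but the cleanest route is to reduce to a single basic step: understanding what happens when we replace every edge of a minimal network $\Ns_1$ by a copy of $\Ns_2$. First I would set up the combinatorial picture of $\Ns = \Ns_1\bullet\Ns_2$ as a graph. In the PoS/grid-coordinate picture, $\Ns_1$ with width $w_1$ and length $l_1$ has its devices sitting on a $w_1\times l_1$ lattice of cells (columns indexed $1,\dots,l_1$, rows $1,\dots,w_1$), with matchstick positions indexed by the interior grid points $(i,j)$, $1\le i\le w_1-1$, $1\le j\le l_1-1$. Replacing each device by a copy of $\Ns_2$ (itself a $w_2\times l_2$ minimal network) substitutes each cell by a $w_2\times l_2$ block, so the resulting network lives on a $w_1w_2 \times l_1l_2$ lattice; this already yields $\Ns\in\NN{w_1w_2}{l_1l_2}$, using $n = w_1w_2\,l_1l_2 = (w_1l_1)(w_2l_2)$ and Theorem~3 of \cite{1956_MS_1} to confirm minimality (the composition of minimal networks is minimal because its width is exactly $w_1w_2$ and its length exactly $l_1l_2$, as can be read off from the block structure — every minimal $S$-$T$ path threads one minimal path of $\Ns_1$, each segment being a minimal path of a copy of $\Ns_2$, and dually for cuts).

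The heart of the argument is then bookkeeping the matchsticks of $\Ns$, i.e.\ which vertically adjacent node pairs of the big lattice are identified (connected by a wire). There are two sources of vertical adjacencies in the composed lattice. \emph{Internal adjacencies:} inside a single copy of $\Ns_2$, the matchsticks are exactly those of $M_{\Ns_2}$ — this explains every $M_{\Ns_2}$ block appearing in \eqref{eq:matrix_compo}. \emph{Boundary adjacencies between horizontally adjacent copies:} two copies of $\Ns_2$ sitting in the same block-row but in consecutive block-columns share a vertical column of $w_2$ nodes (the right boundary of the left copy glued to the left boundary of the right copy, since the composition identifies the terminus of one copy with the source of the next along a path). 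These $w_2$ shared nodes are consecutive, hence the $w_2-1$ interstitial positions all carry wires; this is the $\bm{1}_{(w_2-1)\times 1}$ column vectors between the $M_{\Ns_2}$ blocks, and also the $\bm{1}_{(w_2-1)\times 1}$ entries in the first and last block-rows. \emph{Boundary adjacencies between vertically adjacent copies:} two copies in the same block-column but consecutive block-rows meet along a horizontal boundary; the single node-pair straddling that boundary (at the horizontal position that was a terminal column of $\Ns_2$) is wired precisely when $\Ns_1$ had a matchstick there — giving the $M_{\Ns_1}(i,j)$ scalar entries — and the $l_2-1$ interior positions along that shared horizontal line are \emph{not} wired (the boundary of $\Ns_2$ carries no matchsticks by minimality of $\Ns_2$: its own matchstick matrix is $(w_2-1)\times(l_2-1)$, strictly interior), giving the $\bm{0}_{1\times(l_2-1)}$ row vectors. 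Assembling these three contributions in the lattice order is exactly the block matrix \eqref{eq:matrix_compo}.

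For the degenerate cases I would argue directly from the same lattice picture rather than from the formula. If $w_1=1$ then $\Ns_1=\Css^{(0)}$-type (all-series), so the big lattice has only one block-row: there are no vertical boundaries between block-rows, hence no $M_{\Ns_1}$ scalar rows and no $\bm{0}$ rows survive, leaving only the top block-row of \eqref{eq:matrix_compo} — which is the stated reduction. Symmetrically $l_1=1$ kills everything but the first block-column. If instead $w_2=1$ (so $\Ns_2$ is all-series and admits no matchstick matrix), each copy contributes no internal vertical adjacency, and consecutive copies in a block-row share a single node (width $1$), so the $M_{\Ns_2}$ blocks and the $\bm{1}_{(w_2-1)\times 1}=\bm{1}_{0\times 1}$ columns collapse — delete those block-rows; dually for $l_2=1$.

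\emph{Main obstacle.} The routine part is the block arithmetic; the part that needs genuine care is pinning down exactly \emph{which} node-pairs of the composed lattice are adjacent and getting the indexing conventions consistent — in particular verifying that the horizontal boundaries between vertically stacked copies of $\Ns_2$ sit at the terminal columns (so that $M_{\Ns_1}$'s entries land as scalars, one per boundary, rather than as blocks) and that no matchsticks of $\Ns_2$ ever lie on its own boundary. This last fact is what forces the $\bm{0}_{1\times(l_2-1)}$ rows and is the one place where minimality of $\Ns_2$ (equivalently, that $M_{\Ns_2}$ is indexed by strictly interior grid points per Definition~\ref{def:matrix_match_min}) is really used. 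Once the adjacency structure is established, checking the four degenerate cases and confirming $\Ns\in\NN{w_1w_2}{l_1l_2}$ is immediate.
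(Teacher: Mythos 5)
The paper states Proposition \ref{pr:compo_net} without proof, so there is nothing to compare against on the paper's side; your argument supplies the missing verification, and it is essentially correct. You correctly classify the interior vertical positions of the composed $w_1w_2\times l_1l_2$ lattice into three types (interior to one copy of $\Ns_2$; straddling a horizontal boundary between copies in the same block-row, where the $w_2$ terminal nodes collapse to a single vertex and hence all $w_2-1$ positions read $1$; straddling a vertical boundary between block-rows, where only the shared terminal-node pair can be wired, and only when $M_{\Ns_1}$ says so), and this is exactly the block decomposition of \eqref{eq:matrix_compo}; the four degenerate cases then follow from the same picture. One small correction: the $\bm{0}_{1\times(l_2-1)}$ rows do not come from ``minimality of $\Ns_2$'' forcing its boundary to be matchstick-free --- matchstick positions are interior by Definition~\ref{def:matrix_match_min} regardless of minimality --- but simply from the definition of composition: two copies of $\Ns_2$ replacing distinct devices of $\Ns_1$ share only the terminal nodes inherited from the endpoints of those devices, so interior nodes of vertically adjacent copies are never identified or wired. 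This is the fact you in effect use, and it makes the appeal to minimality superfluous rather than wrong.
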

\begin{proposition}[\cite{2018_DCHGB_J}]\label{pr:param_comp}
Let $m$ be a strictly positive integer and $\Css^{\uv}\in \CS{2^m}.$ Then $\Css^{\uv}$ is an MMN of size $2^m,$ length $l=2^{m-|\uv|}$ and width $w=2^{|\uv|}.$ We have
\[\CS{2^m}=\bigcup_{i=0}^{m}\CN{2^i}{2^{m-i}}.\] 
\end{proposition}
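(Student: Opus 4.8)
The plan is to prove Proposition~\ref{pr:param_comp} by induction on $m$, using Proposition~\ref{pr:compo_net} as the engine that tracks how composition acts on the parameters $(w,l)$. The base case $m=1$ is immediate: $\Css^{(0)}$ is two devices in series, which is an MMN of size $2$, length $l=2=2^{1-0}$ and width $w=1=2^0$, while $\Css^{(1)}$ is two devices in parallel, an MMN of size $2$, length $l=1=2^{1-1}$ and width $w=2=2^1$; in both cases $n=wl=2$, so these are indeed minimal, and $|\uv|\in\{0,1\}$ matches. For the inductive step, I would write $\uv=(u_1,\vv)$ with $\vv=(u_2,\dots,u_m)\in\{0,1\}^{m-1}$, so that $\Css^{\uv}=\Css^{(u_1)}\bullet\Css^{\vv}$, and apply the induction hypothesis to $\Css^{\vv}\in\CS{2^{m-1}}$: it is an MMN of size $2^{m-1}$, width $w_2=2^{|\vv|}$ and length $l_2=2^{m-1-|\vv|}$.

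Next I would split into the two cases $u_1=0$ and $u_1=1$. If $u_1=1$ then $\Css^{(u_1)}=\Css^{(1)}$ has width $w_1=2$ and length $l_1=1$; Proposition~\ref{pr:compo_net} gives $\Css^{\uv}=\Css^{(1)}\bullet\Css^{\vv}\in\NN{w_1w_2}{l_1l_2}=\NN{2\cdot 2^{|\vv|}}{2^{m-1-|\vv|}}=\NN{2^{|\vv|+1}}{2^{m-1-|\vv|}}$. Since $u_1=1$ we have $|\uv|=|\vv|+1$, so the width is $2^{|\uv|}$ and the length is $2^{m-1-|\vv|}=2^{m-|\uv|}$, as claimed. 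Symmetrically, if $u_1=0$ then $\Css^{(u_1)}=\Css^{(0)}$ has $w_1=1$, $l_1=2$, so $\Css^{\uv}\in\NN{w_2}{2l_2}=\NN{2^{|\vv|}}{2^{m-|\vv|}}$, and here $|\uv|=|\vv|$, giving width $2^{|\uv|}$ and length $2^{m-|\uv|}$. In both cases $wl=2^{|\uv|}\cdot 2^{m-|\uv|}=2^m=n$, so the network is minimal, and being a composition of MMNs it is an MMN (one may also invoke Proposition~\ref{pr:compo_net} directly, which asserts the result of the composition lies in the stated $\NN{w}{l}$ and hence is itself an MMN). A minor point to handle cleanly is that $\Css^{\vv}$ may be all-series ($w_2=1$, when $\vv$ is all zeros) or all-parallel ($l_2=1$, when $\vv$ is all ones); these are exactly the degenerate branches listed in Proposition~\ref{pr:compo_net}, and in each the parameter bookkeeping above goes through verbatim since the formulas $w_1w_2$ and $l_1l_2$ are unaffected.

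For the displayed decomposition $\CS{2^m}=\bigcup_{i=0}^{m}\CN{2^i}{2^{m-i}}$, the inclusion $\subseteq$ is an immediate consequence of the first part: every $\Css^{\uv}\in\CS{2^m}$ has $|\uv|=i$ for some $i\in\{0,\dots,m\}$ and then lies in $\CN{2^i}{2^{m-i}}$. For $\supseteq$ I would note that $\CN{2^i}{2^{m-i}}$ is by definition the set of those compositions in $\CS{2^m}$ with width $2^i$ and length $2^{m-i}$, so it is trivially a subset of $\CS{2^m}$; alternatively, one exhibits for each $i$ a concrete witness, e.g. $\uv=(1^i,0^{m-i})$, to see the union is nonempty at each level. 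I do not expect a serious obstacle here; the only place demanding care is making sure the four degenerate cases of Proposition~\ref{pr:compo_net} (any of $w_1,w_2,l_1,l_2$ equal to $1$) are all covered by the induction, but since $\Css^{(0)}$ always contributes $(w_1,l_1)=(1,2)$ and $\Css^{(1)}$ always contributes $(w_1,l_1)=(2,1)$, the outer factor is always in a degenerate branch, and this is handled uniformly by the two-case split above.
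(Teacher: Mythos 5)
Your proof is correct. Note that the paper itself offers no argument for Proposition~\ref{pr:param_comp} --- it is imported verbatim from \cite{2018_DCHGB_J} --- so there is nothing in-paper to compare against; your induction on $m$ via Proposition~\ref{pr:compo_net} is the natural route, the two-case bookkeeping on $u_1$ correctly yields $w=2^{|\uv|}$, $l=2^{m-|\uv|}$, $n=wl=2^m$, and you rightly flag that the outer factor $\Css^{(u_1)}$ always falls into one of the degenerate branches ($w_1=1$ or $l_1=1$) of Proposition~\ref{pr:compo_net}, so the product formulas still apply. The set decomposition then follows by partitioning on $|\uv|$ exactly as you say.
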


\begin{proposition}Let $1\leq i \leq m$ and $\uv=(0^{m-i},1^{i})$ and $\vv=(1^i,0^{m-i}).$ Then $\Css^{\uv}$ is a SoP of $w=2^i$ and $l=2^{m-i}$ and $\Css^{\vv}$ is a SoP of $w=2^i$ and $l=2^{m-i}$. 
\end{proposition}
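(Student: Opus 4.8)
The plan is to unwind the nested composition and read the coarse‑to‑fine structure of $\Css^{\uv}$ and $\Css^{\vv}$ directly off their two homogeneous halves. Observe first that the width and length assertions, $w=2^{i}$ and $l=2^{m-i}$, are immediate from Proposition~\ref{pr:param_comp} since $|\uv|=|\vv|=i$; so the actual content is the identification of $\Css^{\uv}$ as a series‑of‑parallel and of $\Css^{\vv}$ as a parallel‑of‑series of those dimensions. (As printed, the proposition labels $\Css^{\vv}$ a SoP as well; this should read PoS — swapping the zero‑block and the one‑block of the exponent is exactly what interchanges the two roles, as the argument below makes visible.)

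First I would record two elementary facts, each obtained by an immediate induction on $j$ from the definitions of $\Css^{(0)}$, $\Css^{(1)}$ and $\bullet$: the homogeneous composition $\Css^{(0^{j})}$ is $2^{j}$ devices in a single series path (hence width $1$, length $2^{j}$), and $\Css^{(1^{j})}$ is $2^{j}$ devices in parallel (hence width $2^{j}$, length $1$). Indeed, by definition of $\bullet$, $\Css^{(0)}\bullet X$ is two copies of $X$ in series and $\Css^{(1)}\bullet X$ is two copies of $X$ in parallel, and iterating these operations yields the claim.

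Next, using that $\bullet$ is associative (replacing each edge of $\Ns_{1}$ by $\Ns_{2}$ and then each edge by $\Ns_{3}$ is the same as replacing each edge of $\Ns_{1}$ by $\Ns_{2}\bullet\Ns_{3}$), I would write $\Css^{\uv}=\Css^{(0^{m-i})}\bullet\Css^{(1^{i})}$. The skeleton $\Css^{(0^{m-i})}$ is a series chain of $2^{m-i}$ edges, and the composition replaces each edge by a block of $2^{i}$ parallel devices; the result is $2^{m-i}$ parallel blocks chained in series, i.e. a series‑of‑parallel of width $2^{i}$ and length $2^{m-i}$ as in Fig.~\ref{fig:hammock}(b). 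Symmetrically, $\Css^{\vv}=\Css^{(1^{i})}\bullet\Css^{(0^{m-i})}$ has skeleton a parallel bundle of $2^{i}$ edges, each replaced by a series chain of $2^{m-i}$ devices, giving $2^{i}$ arms in parallel each of length $2^{m-i}$, i.e. a parallel‑of‑series of width $2^{i}$ and length $2^{m-i}$ as in Fig.~\ref{fig:hammock}(a). In both cases $n=2^{i}\cdot2^{m-i}=2^{m}=wl$, confirming minimality.

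There is no genuine obstacle here; the care needed is purely bookkeeping. One must fix and respect which factor plays the role of skeleton and which the role of replacement block, since that is precisely the distinction between SoP and PoS. One should also check the boundary case $i=m$: then $m-i=0$, the ``series half'' $\Css^{(0^{0})}$ degenerates to a single device, and $\Css^{\uv}=\Css^{\vv}=\Css^{(1^{m})}$ is the all‑parallel network — which is at once the trivial SoP and the trivial PoS of width $2^{m}$ and length $1$ — so the statement still holds; here, as for any network with $w=1$ or $l=1$, Definition~\ref{def:matrix_match_min} does not apply but the object is still a legitimate MMN. If one prefers not to invoke associativity of $\bullet$, the same conclusions follow by induction on $m$: peel off $u_{1}$ (resp. $v_{1}$) and use that two copies of a SoP$(w,l)$ in series form a SoP$(w,2l)$, while two copies of a PoS$(w,l)$ in parallel form a PoS$(2w,l)$.
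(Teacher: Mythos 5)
Your proof is correct, and the paper itself offers no proof of this proposition (it is stated as an immediate consequence of the definitions and of Proposition~\ref{pr:param_comp}), so your unwinding of the composition into its two homogeneous halves is exactly the intended argument. You are also right that the statement as printed contains a typo: $\Css^{\vv}$ with $\vv=(1^i,0^{m-i})$ is a PoS, not a SoP, which is consistent with how the paper later uses these networks in Proposition~\ref{pr:sop_ham_pos}.
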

\section{Duality Properties of MMNs}\label{sec:prop_compo}
  
A fundamental notion mentioned in \cite{1956_MS_1} is the dual of a network, denoted as $\Ns^{\bot}.$ In order to give our main theorem for duality we introduce the bitwise complement of a binary matrix 
$M_{\Ns}\in \Mat{w-1}{l-1}{\{0,1\}}$ as \begin{equation}\overline M_{\Ns}=\bm{1}_{(w-1)\times (l-1)}\oplus M_{\Ns},
\end{equation} 
where $\bm{1}_{l\times w}$ is the all-ones matrix. 
\begin{theorem}\label{thm:dual_net}
Let $\Ns$ be a $l\times w$ MMN. If $l=1$ or $w=1$ then $\Ns$ and $\Ns^\bot$ are the all-parallel and all-series networks. 
If $w,l\ge 2$ and $M_{\Ns}\in \Mat{w-1}{l-1}{\{0,1\}}$ then $\Ns^{\bot}$ is a $w\times l$ MMN 
\[M_{\Ns^{\bot}}=\left(\overline M_{\Ns}\right)^{t}.\]
\end{theorem}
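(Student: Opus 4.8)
The plan is to reduce the theorem to a structural statement about matchsticks by exploiting the classical fact that dualizing a two-terminal network corresponds, in the brick-wall/graph picture, to interchanging the roles of series and parallel (equivalently, of cuts and paths), which geometrically amounts to a $90^\circ$ rotation of the diagram combined with swapping ``matchstick present'' and ``matchstick absent''. The degenerate cases are immediate: an all-series network of $n$ devices ($w=1$) has as its dual the all-parallel network of $n$ devices ($l=1$) and vice versa, since series and parallel are dual operations and there are no matchsticks to track; this disposes of the first sentence.

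For the main case $w,l\ge 2$, first I would fix precisely what ``dual'' means here, presumably via the planar dual of the graph representation of Fig.~\ref{fig:convetion_graph}(c) with $S$ and $T$ swapped, so that a minimal network of width $w$ and length $l$ dualizes to a minimal network of width $l$ and length $w$ (this is already asserted in the statement and is the Moore--Shannon duality, Theorem~3-type result in \cite{1956_MS_1}). Then I would argue that under this planar duality the matchstick at position $(i,j)$ of $\Ns$, which is an edge connecting two vertically adjacent nodes, becomes in $\Ns^\bot$ a ``horizontal'' edge in the transposed grid; and conversely the \emph{absence} of a matchstick at $(i,j)$ in $\Ns$ (so the two adjacent strands run past each other without connecting) dualizes to the \emph{presence} of a connecting edge in $\Ns^\bot$. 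This is the content of the bitwise-complement: $M_{\Ns^\bot}$ has a $1$ exactly where $M_{\Ns}$ has a $0$. Finally, because duality transposes the grid (width $\leftrightarrow$ length), the index $(i,j)$ of $\Ns$ corresponds to index $(j,i)$ of $\Ns^\bot$, which is exactly the transpose; combining the two gives $M_{\Ns^\bot}=(\overline{M_{\Ns}})^{t}$. Throughout I would keep careful track of the convention (Convention in the excerpt) that indexing starts at $(1,1)$, so that the $(w-1)\times(l-1)$ matchstick matrix of $\Ns$ becomes an $(l-1)\times(w-1)$ matrix for $\Ns^\bot$, consistent with $\Ns^\bot$ being an $l\times w$ MMN.

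An alternative, cleaner route that avoids hand-waving about planar duality is to proceed by induction using the composition structure: by Proposition~\ref{pr:compo_net} every MMN built within our framework decomposes via the composition operation, and duality is known to interact with composition by $(\Ns_1\bullet\Ns_2)^\bot=\Ns_1^\bot\bullet\Ns_2^\bot$, while on the two base blocks $(\Css^{(0)})^\bot=\Css^{(1)}$ and $(\Css^{(1)})^\bot=\Css^{(0)}$. One would then check that the block-matrix formula \eqref{eq:matrix_compo} is compatible with ``transpose and complement'': complementing and transposing the right-hand side of \eqref{eq:matrix_compo} should reproduce the same formula with $\Ns_i$ replaced by $\Ns_i^\bot$ and $w_i\leftrightarrow l_i$ swapped — the all-ones connector blocks $\bm{1}_{(w_2-1)\times 1}$ complement to all-zeros and, after transposition, land exactly where the all-zeros blocks $\bm{0}_{1\times(l_2-1)}$ sit in the swapped formula, and vice versa. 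The main obstacle is precisely this bookkeeping verification: one must confirm that the complement of the $\bm{1}$-connector blocks becomes the $\bm{0}$-connector blocks (and conversely) \emph{and} that transposition permutes the block rows and columns correctly so that the diagonal $M_{\Ns_1}$ entries in \eqref{eq:matrix_compo} end up in the positions dictated by the formula for $\Ns_1^\bot$; handling the four degenerate sub-cases ($w_i=1$ or $l_i=1$) in parallel with the generic case is where the argument is most error-prone, but each is a routine specialization of the block computation.
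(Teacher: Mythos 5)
Your first route is essentially the paper's own proof. The paper realizes the MMN as a resistor circuit and takes its electrical (planar) dual: a matchstick is an interior node, nodes become loops in the dual and loops become nodes, and the loops are naturally indexed by the transposed grid, which gives exactly $M_{\Ns^{\bot}}=\left(\overline M_{\Ns}\right)^{t}$, with the $l=1$ or $w=1$ cases dismissed as trivial. You and the paper agree on this argument down to the same level of informality about what ``dual'' means, so there is nothing to add there.

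Your ``alternative, cleaner route'' has a genuine gap and should not be presented as an option for this theorem. Proposition~\ref{pr:compo_net} states that the composition of two MMNs is again an MMN with a prescribed block matchstick matrix; it does \emph{not} state that every MMN decomposes as a composition of $\Css^{(0)}$ and $\Css^{(1)}$. Most MMNs are not compositions: compositions only exist for sizes $n=2^m$, and even for such $n$ the matchstick matrix of a composition has the rigid block structure of eq.~\eqref{eq:matrix_compo} (all-ones connector columns and all-zeros connector rows), which, for instance, the hammock $\Hs{4}{4}$ with matrix $\bigl(\begin{smallmatrix}0&1&0\\1&0&1\\0&1&0\end{smallmatrix}\bigr)$ does not have. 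An induction over the composition structure therefore proves the duality formula only on the subclass $\CS{2^m}$, whereas Theorem~\ref{thm:dual_net} is asserted for arbitrary MMNs with arbitrary $w,l\ge 2$. (Note also that in the paper's logical order the compatibility $(\Ns_1\bullet\Ns_2)^\bot=\Ns_1^\bot\bullet\Ns_2^\bot$ of Lemma~\ref{lem:dual_comp} is \emph{deduced from} Theorem~\ref{thm:dual_net}, so one needs the general planar-duality argument in any case.) Stick with your first route.
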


In order to prove this theorem we will consider any MMN as an electrical circuit where we associate a resistance to each device, and $S$ and $T$ are connected as in Fig. \ref{fig:dual}. This is a resistor circuit which admits a dual that can be computed using Kirchoff's laws. 

\begin{figure}[!h]
\begin{center}
\includegraphics[width=0.45\textwidth]{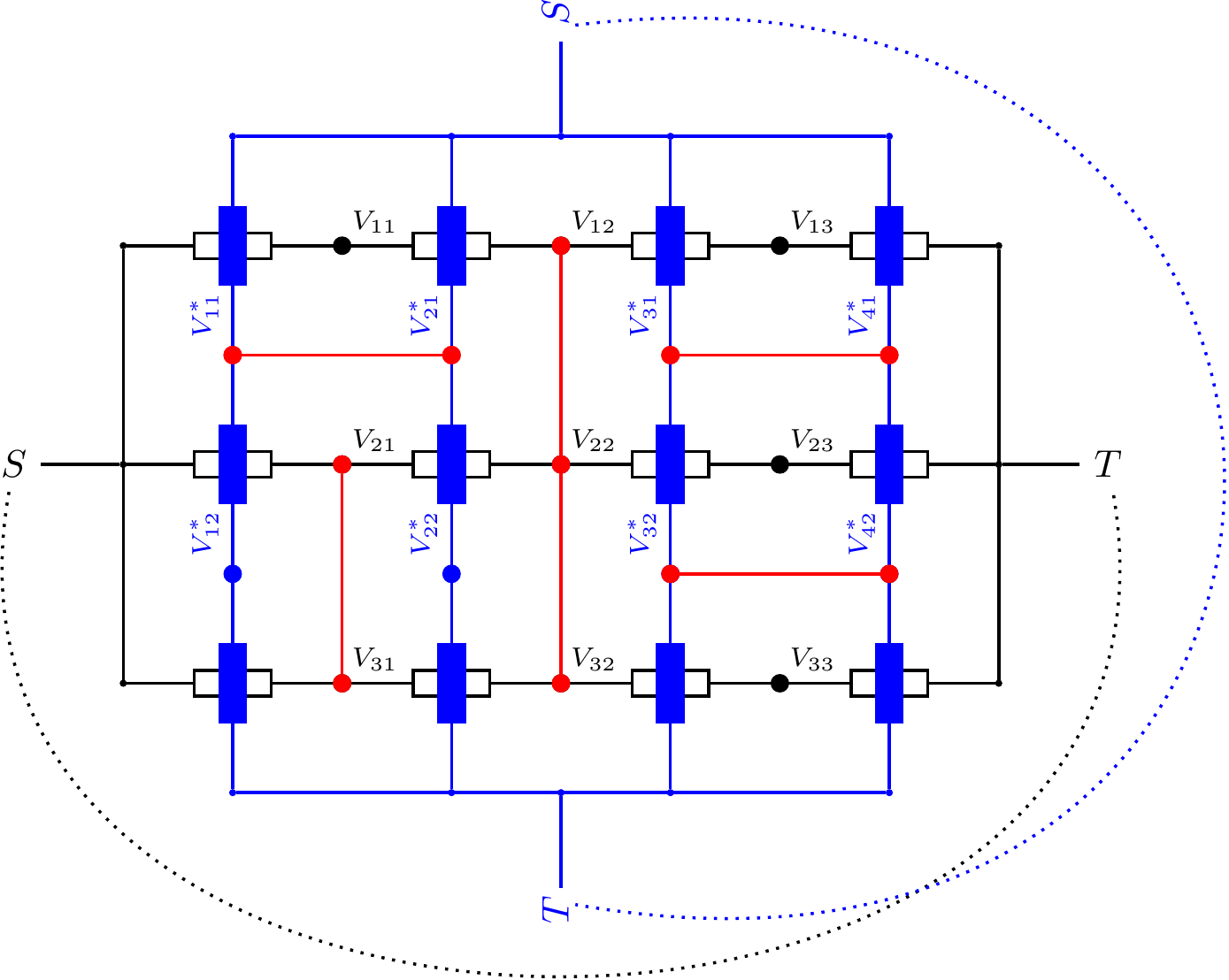}
\caption{An MMN (black) with vertical matchsticks (red), and its dual (blue) with horizontal matchsticks (also red).\label{fig:dual}}
\end{center}
\end{figure}
\begin{proof}
Let $\Ns$ be an MMN of width $w$ and length $l$. If $l=1$ or $w=1$ the result is trivial. 

Consider that $l,w\ge 2$, i.e., $\Ns$ admits an $M_{\Ns}$. A matchstick in $\Ns$ is a node in the electrical circuit and, as nodes become loops in $\Ns^{\bot}$, we will show that whenever $M_{\Ns}(i,j)=1$ we have $M_{\Ns^\bot}(j,i)=0.$ As loops become nodes in $\Ns^{\bot}$, any sequence of zeros in $M_{\Ns}$ will translate into a sequence of ones in $M_{\Ns^\bot}.$ 

Label the interior node of $\Ns$ by $V_{i,j}$ from left to right for $j$ (or from $S$ to $T$), and from top to bottom for $i$ (see Fig.~\ref{fig:dual}), where $1\leq i \leq w$ and $1\leq j\leq l-1.$ Notice that any vertically adjacent devices/resistors belong to a loop. All loops in $\Ns$ will become nodes in $\Ns^{\bot}.$ We will label these nodes $V_{i^*,j^*}^*$, where $1\leq i^*\leq l$ and $1\leq j^*\leq w-1$, $j^*$ being counted from top to bottom, and $i^*$ from left to right. Therefore, $\Ns^{\bot}$ is a network of width $l$ and length $w.$ 

For an arbitrary $(i,j)$ suppose that there is a matchstick in $\Ns$ between $V_{i,j}$ and $V_{i+1,j}$, i.e., $M_{\Ns}(i,j)=1.$ Since a matchstick corresponds to a node, this means that it will become a loop in $\Ns^\bot$ between $V_{j^*,i^*}^*$ and $V_{(j+1)^*,i^*}^*$, i.e., $M_{\Ns^\bot}(j^*,i^*)=0.$ Also, when there is no matchstick in $\Ns$ between $V_{i,j}$ and $V_{i+1,j}$, there is a matchstick in $\Ns^\bot$ between $V_{j^*,i^*}^*$ and $V_{(j+1)^*,i^*}^*.$ Since this holds for arbitrary $i$ and $j$, the proof is concluded.
\end{proof}
We can now determine the dual of the composition of two MMNs.
\begin{lemma}\label{lem:dual_comp}
Let $\Ns_1$ and $\Ns_2$ be two MMNs. Then we have $\left(\Ns_1\bullet\Ns_2\right)^\bot=\Ns_1^\bot\bullet\Ns_2^\bot.$
\end{lemma}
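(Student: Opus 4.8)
\textbf{Proof plan for Lemma \ref{lem:dual_comp}.}

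The plan is to reduce the identity $\left(\Ns_1\bullet\Ns_2\right)^\bot=\Ns_1^\bot\bullet\Ns_2^\bot$ to a purely combinatorial statement about the matchstick incidence matrices, using Theorem \ref{thm:dual_net} to translate $\bot$ into the operation $M\mapsto(\overline M)^t$, and Proposition \ref{pr:compo_net} to translate $\bullet$ into the explicit block-matrix construction in \eqref{eq:matrix_compo}. First I would dispose of the degenerate cases: if either $\Ns_1$ or $\Ns_2$ is all-series or all-parallel (i.e. has $w=1$ or $l=1$), then by Theorem \ref{thm:dual_net} the dual swaps all-series with all-parallel, and one checks directly that composing with an all-series (resp. all-parallel) network just stacks copies in length (resp. width); here both sides of the identity collapse to the same all-series/all-parallel/PoS/SoP network by the bulleted special cases of Proposition \ref{pr:compo_net}, and the parameter bookkeeping $(w_1w_2,\,l_1l_2)\mapsto(l_1l_2,\,w_1w_2)$ from Theorem \ref{thm:dual_net} matches on both sides. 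The substance is the generic case $w_1,w_2,l_1,l_2\ge 2$.

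In the generic case, set $M_1=M_{\Ns_1}$, $M_2=M_{\Ns_2}$, and let $M=M_{\Ns_1\bullet\Ns_2}$ be the block matrix of \eqref{eq:matrix_compo}: it is a $(w_1w_2-1)\times(l_1l_2-1)$ array built from an $w_1\times l_1$ grid of blocks, where the ``diagonal-type'' blocks are copies of $M_2$, the ``row-separator'' blocks are all-ones columns $\bm 1_{(w_2-1)\times 1}$ decorated with entries of $M_1$, and the ``column-separator'' blocks are all-zero rows $\bm 0_{1\times(l_2-1)}$. I would compute the left-hand side $M_{(\Ns_1\bullet\Ns_2)^\bot}=(\overline M)^t$ by distributing complement and transpose over the block structure: complementing turns each $M_2$-block into $\overline{M_2}$, each all-ones separator column into an all-zeros column, and each all-zeros separator row into an all-ones row; transposing then turns the $w_1\times l_1$ block grid into an $l_1\times w_1$ grid, columns become rows, and each block $\overline{M_2}$ becomes $(\overline{M_2})^t=M_{\Ns_2^\bot}$. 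Separately I would compute the right-hand side $M_{\Ns_1^\bot\bullet\Ns_2^\bot}$ by applying \eqref{eq:matrix_compo} with $\Ns_1$ replaced by $\Ns_1^\bot$ (which has width $l_1$, length $w_1$, and matrix $(\overline{M_1})^t$) and $\Ns_2$ replaced by $\Ns_2^\bot$ (width $l_2$, length $w_2$, matrix $(\overline{M_2})^t$): this is again an $l_1\times w_1$ grid of blocks, with diagonal-type blocks $M_{\Ns_2^\bot}$, row-separator columns $\bm 1_{(l_2-1)\times 1}$ decorated by entries of $(\overline{M_1})^t$, and column-separator rows $\bm 0_{1\times(w_2-1)}$. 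Then I would verify the two resulting $(l_1l_2-1)\times(w_1w_2-1)$ matrices are entrywise equal, block by block: diagonal-type blocks match since both are $M_{\Ns_2^\bot}$; the separator columns/rows match because complement-then-transpose sends the all-ones column separators of $M$ to all-zeros rows and the all-zeros row separators to all-ones columns, which is exactly the separator pattern \eqref{eq:matrix_compo} produces for $\Ns_1^\bot\bullet\Ns_2^\bot$; and the decorations agree because the $(i,j)$ entry of $M_1$ sits in the left-hand side at the transposed-and-complemented position $(j,i)$ of $\overline{M_1}$, which coincides with the entry $(\overline{M_1})^t(j,i)$ that \eqref{eq:matrix_compo} places in $M_{\Ns_1^\bot\bullet\Ns_2^\bot}$.

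The main obstacle I expect is purely notational rather than conceptual: keeping the index conventions straight while commuting transpose past the block decomposition, since transpose both reverses the order of blocks and transposes within each block, and one must confirm that the separator rows of $M$ (which become separator columns after transposing) land in the positions where Proposition \ref{pr:compo_net} expects column separators for the composed dual. A clean way to avoid index gymnastics is to observe that the block pattern in \eqref{eq:matrix_compo} is determined by three ingredients — the inner matrix $M_{\Ns_2}$ placed on a ``$\Ns_1$-shaped skeleton'', the all-ones gadgets marking $\Ns_1$-columns, and the all-zeros gadgets marking $\Ns_1$-rows — and that each of the three operations (complement of $M_2$, transpose of $M_2$, transpose of the $\Ns_1$-skeleton together with $0\leftrightarrow 1$ on its separators) is exactly the operation that replaces $\Ns_i$ by $\Ns_i^\bot$; making this correspondence precise and then citing Theorem \ref{thm:dual_net} twice finishes the proof.
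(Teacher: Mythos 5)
Your proposal is correct and follows exactly the route the paper takes: the paper's proof is the single line ``This follows from Proposition~\ref{pr:compo_net} and Theorem~\ref{thm:dual_net},'' and your block-by-block verification that complement-and-transpose of the matrix in~\eqref{eq:matrix_compo} reproduces the composition formula for $\Ns_1^\bot\bullet\Ns_2^\bot$ (including the degenerate $w=1$ or $l=1$ cases) is precisely the computation that one-liner leaves implicit. Your write-up is in fact more complete than the paper's.
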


This follows from Proposition \ref{pr:compo_net} and Theorem \ref{thm:dual_net}. 

\begin{proposition}
Let $m$ be a strictly positive integer and $\uv\in \{0,1\}^m.$ Then $\Css^{\overline{\uv}}$ has width $w=2^{m-|\uv|}$, length $l=2^{|\uv|}$ and 
\begin{equation}\left(\Css^{\uv}\right)^\bot=\Css^{\overline{\uv}}.
\end{equation}
\end{proposition}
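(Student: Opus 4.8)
The plan is to derive this proposition directly from the tools already established, namely Lemma \ref{lem:dual_comp}, Proposition \ref{pr:param_comp}, and the base-case behavior of duality on the two generating networks $\Css^{(0)}$ and $\Css^{(1)}$. First I would observe that $\Css^{(0)}$ (two devices in series) has width $1$ and length $2$, while $\Css^{(1)}$ (two devices in parallel) has width $2$ and length $1$; these are the all-series and all-parallel networks on two devices, so by Theorem \ref{thm:dual_net} we have $\left(\Css^{(0)}\right)^\bot=\Css^{(1)}$ and $\left(\Css^{(1)}\right)^\bot=\Css^{(0)}$. Equivalently, for a single bit $b\in\{0,1\}$ we have $\left(\Css^{(b)}\right)^\bot=\Css^{(\overline b)}$, where $\overline b = 1\oplus b$.

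Next I would apply Lemma \ref{lem:dual_comp} inductively on $m$. Writing $\Css^{\uv}=\Css^{(u_1)}\bullet\dots\bullet\Css^{(u_m)}$ and using the lemma repeatedly (the composition $\bullet$ is associative at the level of networks, which is implicit in the notation $\Css^{\uv}$), one gets
\begin{equation}
\left(\Css^{\uv}\right)^\bot=\left(\Css^{(u_1)}\right)^\bot\bullet\dots\bullet\left(\Css^{(u_m)}\right)^\bot=\Css^{(\overline{u_1})}\bullet\dots\bullet\Css^{(\overline{u_m})}=\Css^{\overline{\uv}}.
\end{equation}
Here $\overline{\uv}=(\overline{u_1},\dots,\overline{u_m})$ is exactly the bitwise complement of $\uv$, so this is the claimed identity. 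A clean way to present it is a one-line induction: the base case $m=1$ is the observation above, and the inductive step peels off the first factor, $\Css^{\uv}=\Css^{(u_1)}\bullet\Css^{\uv'}$ with $\uv'=(u_2,\dots,u_m)$, and applies Lemma \ref{lem:dual_comp} together with the induction hypothesis.

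Finally, for the width and length claim I would invoke Proposition \ref{pr:param_comp} applied to the vector $\overline{\uv}$: since $|\overline{\uv}|=m-|\uv|$, the network $\Css^{\overline{\uv}}$ has width $w=2^{|\overline{\uv}|}=2^{m-|\uv|}$ and length $l=2^{m-|\overline{\uv}|}=2^{|\uv|}$, which is precisely the stated pair of parameters and is also consistent with Theorem \ref{thm:dual_net}, under which duality swaps width and length (here $\Css^{\uv}$ has width $2^{|\uv|}$ and length $2^{m-|\uv|}$). The only mild subtlety — and the one place I would be careful — is checking that Lemma \ref{lem:dual_comp} and the notation $\Css^{\uv}$ interact correctly when some intermediate factor has width or length equal to $1$ (an all-series or all-parallel block), since Theorem \ref{thm:dual_net} treats those as a separate case; but since $\left(\Css^{(b)}\right)^\bot=\Css^{(\overline b)}$ holds for $b\in\{0,1\}$ regardless, and Lemma \ref{lem:dual_comp} is stated for arbitrary MMNs, the induction goes through without needing to split into cases. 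I do not expect any genuine obstacle here; the proposition is essentially a corollary of Lemma \ref{lem:dual_comp}.
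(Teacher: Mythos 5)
Your proposal is correct and follows essentially the same route as the paper, whose proof is literally the one-liner ``Follows directly from Lemma \ref{lem:dual_comp} and Theorem \ref{thm:dual_net}''; you have simply filled in the details (the base case $\left(\Css^{(b)}\right)^\bot=\Css^{(\overline b)}$ from Theorem \ref{thm:dual_net}, the induction via Lemma \ref{lem:dual_comp}, and the parameter count via Proposition \ref{pr:param_comp}). No gaps.
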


\begin{proof}
Follows directly from Lemma \ref{lem:dual_comp} and Theorem \ref{thm:dual_net}.
\end{proof}

\section{Posets of Reliability}\label{sec:poset}

\subsection{Reliability polynomials}

The reliability of a two-terminal network is defined as the probability that the source $S$ and the terminus $T$ are connected (also known as $s,t$-connectness) \cite{C87}. A classical convention for the reliability polynomial is to use $\Rel(p),$ where $p \in [0,1]$ is the probability that a device is closed. Since the reliability polynomial is associated to a network $\Ns$ ($\Hss$ or $\Css$ in particular), we shall use the notation $\Rel(\Ns; p)$, which gives $\Rel(\Css; p)$ and $\Rel(\Hss; p)$ for compositions, and respectively hammocks. 

Here, we are going to rely on the following form of the reliability polynomial
\begin{equation}\Rel(\Ns;p)=\sum\limits_{i=0}^nN_i(\Ns)\;p^i(1-p)^{n-i}.\label{rel:N_form}\end{equation}

 The coefficients $N_i(\Ns)$ in eq. \eqref{rel:N_form} are integers satisfying the relation $0\le N_i(\Ns)\leq \binom{n}{i}$ for any $n$-size network $\Ns$ (see \cite{C87}), and additionally 
\begin{proposition}\label{prop:ineq_N_form}
Let $\Ns_1$ and $\Ns_2$ be arbitrary two-terminal networks of size $n$. If $N_i(\Ns_1)\leq N_i(\Ns_2),\;\forall \; 0\leq i\leq n$ then $\Rel(\Ns_1;p)\leq \Rel(\Ns_2;p).$   
\end{proposition}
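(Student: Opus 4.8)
The plan is to work directly from the binomial-basis form \eqref{rel:N_form}. Write
\[
\Rel(\Ns_2;p)-\Rel(\Ns_1;p)=\sum_{i=0}^{n}\bigl(N_i(\Ns_2)-N_i(\Ns_1)\bigr)\,p^i(1-p)^{n-i}.
\]
Under the hypothesis each coefficient $N_i(\Ns_2)-N_i(\Ns_1)$ is a nonnegative integer. For every $p\in[0,1]$ the quantities $p^i(1-p)^{n-i}$ are nonnegative as well, being products of nonnegative reals. Hence the right-hand side is a sum of nonnegative terms, so it is $\ge 0$ for all $p\in[0,1]$, which is exactly the claim $\Rel(\Ns_1;p)\le\Rel(\Ns_2;p)$.

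The only points that need a word of care are the boundary cases $p=0$ and $p=1$, where some factors $p^i$ or $(1-p)^{n-i}$ vanish; this is harmless since we only need nonnegativity, not strict positivity, and the convention $0^0=1$ (relevant to the $i=0$ and $i=n$ terms at the endpoints) is the standard one already implicit in writing the reliability polynomial in the form \eqref{rel:N_form}. One should also note that the implication is one-directional: equality of the two reliability polynomials as functions on $[0,1]$ would force equality of all coefficients $N_i$ because $\{p^i(1-p)^{n-i}\}_{i=0}^n$ is the (linearly independent) Bernstein basis of degree $n$, but we do not need the converse here.

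There is essentially no obstacle: the statement is an immediate consequence of the nonnegativity of the Bernstein basis polynomials on $[0,1]$, and the proof is a two-line termwise comparison. The main thing to get right is simply to phrase it cleanly — stating that each difference of coefficients is nonnegative, each monomial $p^i(1-p)^{n-i}$ is nonnegative on $[0,1]$, and therefore the difference of the polynomials is a sum of nonnegative terms. No induction, no structural properties of MMNs, and no appeal to Theorem \ref{thm:dual_net} or Proposition \ref{pr:compo_net} are required; the result is purely about polynomials written in this particular basis.
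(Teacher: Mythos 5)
Your argument is correct and is exactly the justification the paper intends: the paper states this proposition without proof precisely because the difference $\Rel(\Ns_2;p)-\Rel(\Ns_1;p)$ is a sum of terms $\bigl(N_i(\Ns_2)-N_i(\Ns_1)\bigr)p^i(1-p)^{n-i}$, each nonnegative on $[0,1]$. Your remarks on the endpoints and on the linear independence of the Bernstein basis are fine but not needed for the one-directional claim.
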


Computing $\Rel(\Css; p)$ can be done using the following theorem.
\begin{theorem}[\cite{2018_DCHGB_J}]\label{thm:reliab_comp}
Let $m$ be a strictly positive integer and $\uv=(u_1,\dots,u_{m})\in \{0,1\}^m.$
Then: 
\begin{equation}\Rel(\Css^{\uv};p)=\Rel(\Css^{(u_1)})\circ \dots \circ \Rel(\Css^{(u_{m})};p),\end{equation}
where $\Rel(\Css^{(0)};p)=p^2$ and $\Rel(\Css^{(1)};p)=1-(1-p)^2.$
\end{theorem}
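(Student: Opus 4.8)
\textbf{Proof plan for Theorem~\ref{thm:reliab_comp}.}
The plan is to prove the composition formula by induction on $m$, using the defining property that $\Css^{\uv}$ is obtained from $\Css^{(u_1)}$ by replacing each of its two devices with a copy of $\Css^{(u_2)}\bullet\dots\bullet\Css^{(u_m)}$. The base case $m=1$ is exactly the two stated identities: a series connection of two devices is $s,t$-connected iff both devices are closed, giving $\Rel(\Css^{(0)};p)=p^2$, and a parallel connection is $s,t$-connected iff at least one device is closed, giving $\Rel(\Css^{(1)};p)=1-(1-p)^2$. First I would record the general substitution principle for two-terminal networks: if every device of a network $\Ns$ is independently ``closed'' with probability $r$, then the probability that $\Ns$ is $s,t$-connected is exactly $\Rel(\Ns;r)$ — this is just the definition of the reliability polynomial evaluated at $r$ together with the independence of the devices.

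Next, for the inductive step, set $\vv=(u_2,\dots,u_m)$, so that $\Css^{\uv}=\Css^{(u_1)}\bullet\Css^{\vv}$. In the network $\Css^{(u_1)}\bullet\Css^{\vv}$, each of the two devices of $\Css^{(u_1)}$ has been replaced by an independent copy of $\Css^{\vv}$; each such copy is itself an $s,t$-subnetwork, and the copies share no devices, so they succeed (become $s,t$-connected) independently, each with probability $\Rel(\Css^{\vv};p)$. Crucially, $\Css^{(u_1)}\bullet\Css^{\vv}$ is $s,t$-connected iff the ``macro-network'' $\Css^{(u_1)}$ is $s,t$-connected when each of its two macro-devices is declared closed precisely when the corresponding copy of $\Css^{\vv}$ is internally $s,t$-connected — this is where the series/parallel structure of $\Css^{(u_1)}$ matters: in the series case both copies must connect, in the parallel case at least one must. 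By the substitution principle applied to the two-device network $\Css^{(u_1)}$ with success probability $r:=\Rel(\Css^{\vv};p)$ per macro-device, this probability equals $\Rel(\Css^{(u_1)};r)=\Rel(\Css^{(u_1)})\circ\Rel(\Css^{\vv};p)$. By the induction hypothesis $\Rel(\Css^{\vv};p)=\Rel(\Css^{(u_2)})\circ\dots\circ\Rel(\Css^{(u_m)};p)$, and substituting this in completes the step, since function composition $\circ$ is associative.

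The main obstacle is not a computation but making the substitution principle precise: one must argue carefully that replacing each device of $\Css^{(u_1)}$ by a copy of $\Css^{\vv}$ yields a network whose $s,t$-connectivity event is the event ``$\Css^{(u_1)}$ is $s,t$-connected'' pulled back along the map that sends each macro-device to the internal connectivity status of its copy, and that these statuses are mutually independent with the same law. This requires checking that the series/parallel composition identifies the right terminals (the $T$ of the first copy with the $S$ of the second in the series case, and the two $S$'s and two $T$'s pairwise in the parallel case) so that no extra connectivity is created or destroyed; once this is granted, the result is an instance of the classical fact that for two-terminal networks $\Rel(\Ns_1\bullet\Ns_2;p)=\Rel(\Ns_1;\Rel(\Ns_2;p))$ whenever $\Ns_1$ is series-parallel, here specialized to $\Ns_1\in\{\Css^{(0)},\Css^{(1)}\}$. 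I would phrase this as a short lemma and then the theorem follows by the induction above.
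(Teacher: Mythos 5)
Your proposal is correct. Note that the paper itself gives no proof of Theorem~\ref{thm:reliab_comp} --- it is imported from \cite{2018_DCHGB_J} --- so there is no in-paper argument to compare against; your induction on $m$ via the Moore--Shannon substitution principle $\Rel(\Ns_1\bullet\Ns_2;p)=\Rel(\Ns_1;\Rel(\Ns_2;p))$ is the standard and correct route, and you rightly isolate the one point that needs care: the internal vertices of each copy of $\Css^{\vv}$ are shared with nothing else, so an $S$--$T$ path can only traverse a copy from its source to its terminus, making the macro-connectivity event depend on the copies only through their individual (independent) internal connectivity. One small remark: the substitution principle holds for an arbitrary outer network $\Ns_1$, not just series or parallel ones; the series/parallel structure of $\Css^{(u_1)}$ is needed only to write $\Rel(\Css^{(u_1)};\cdot)$ explicitly as $r^2$ or $1-(1-r)^2$, not to justify the substitution itself.
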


\subsection{Partial orders}

Inspired by basic techniques from order theory (see Chapter 3 in \cite{S_2011}) we will define several partial orders for $\CS{2^m}$. We recall that a \textit{partially ordered set (poset)} is a set with a binary relation, which is \emph{reflexive, transitive} and \emph{antisymmetric}. Any pair of elements in a poset are either comparable (i.e., in relation to one another), or incomparable. In this subsection we will define several order relations for the set of compositions. For the relations that we define here it is straightforward to check reflexivity, transitivity and antisymmetry.  When comparing two MMNs we say that $\Ns_1$ is {\it more reliable} than $\Ns_2$ if

\begin{equation}\forall \; 0\le p\le 1\quad \Rel(\Ns_2;p) \le \Rel(\Ns_1;p).
\end{equation}
Using this convention we say that $\Css^{\uv}$ and $\Css^{\vv}$ are comparable, and simply write $\uv\le \vv$ or $\vv\le\uv$ if and only if for any $p\in [0,1]$ we have either $\Rel(\Css^{\uv};p)\leq \Rel(\Css^{\vv};p)$  or $\Rel(\Css^{\uv};p)\geq \Rel(\Css^{\vv};p)$, i.e., 
 \begin{equation}\uv\leq \vv\Leftrightarrow \Rel(\Css^{\uv})\leq \Rel(\Css^{\vv}).\label{eq:ord_def_1}\end{equation}

\begin{figure}[!ht]
\centering
    \subfloat[{$m=4$}
  \label{fig:m=4}]{%

       \includegraphics[height=0.25\textwidth]{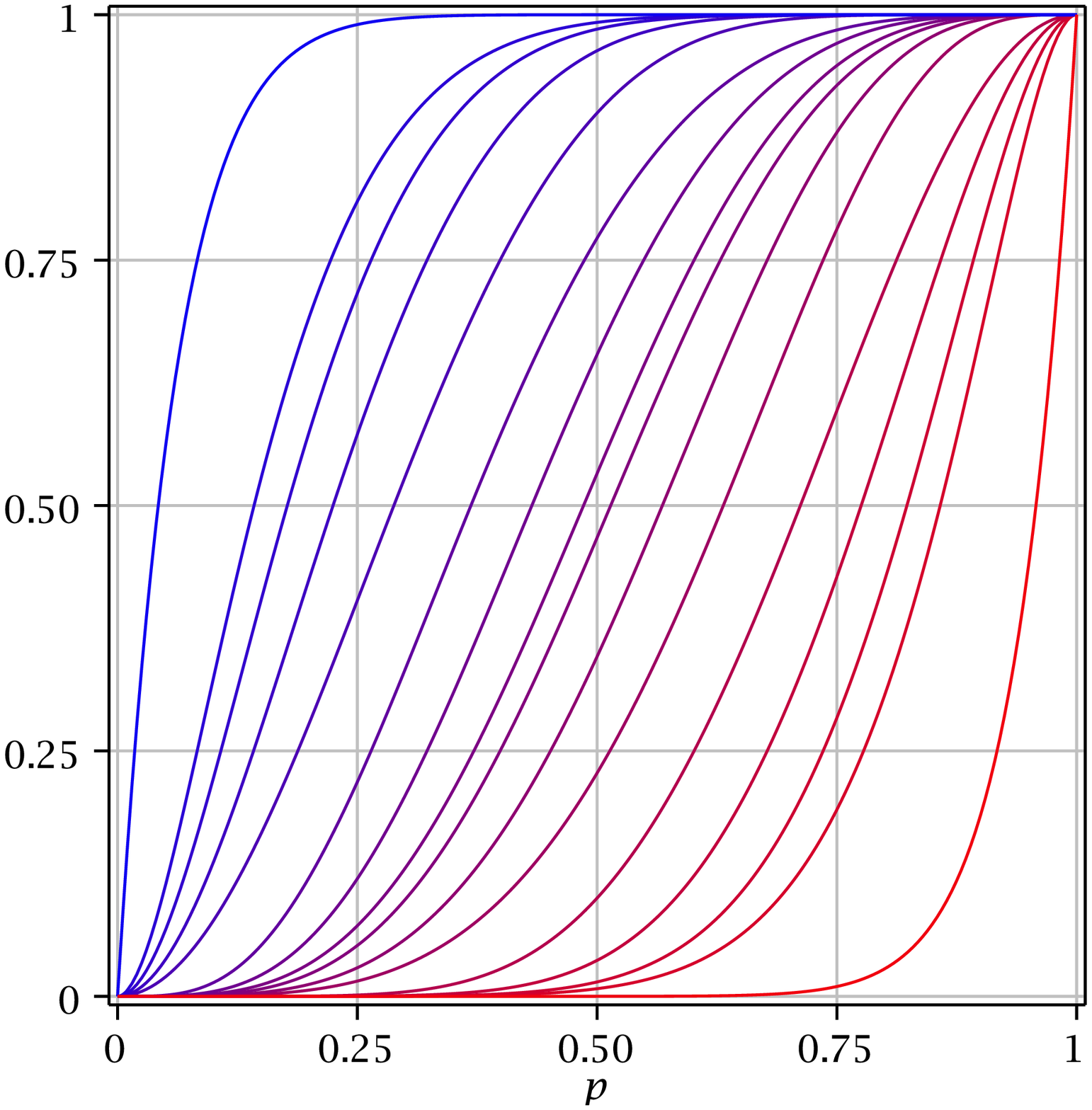}
     }
    \hfill
       \subfloat[{$m=5$}
     \label{fig:m=5}]{%
       \includegraphics[height=0.25\textwidth]{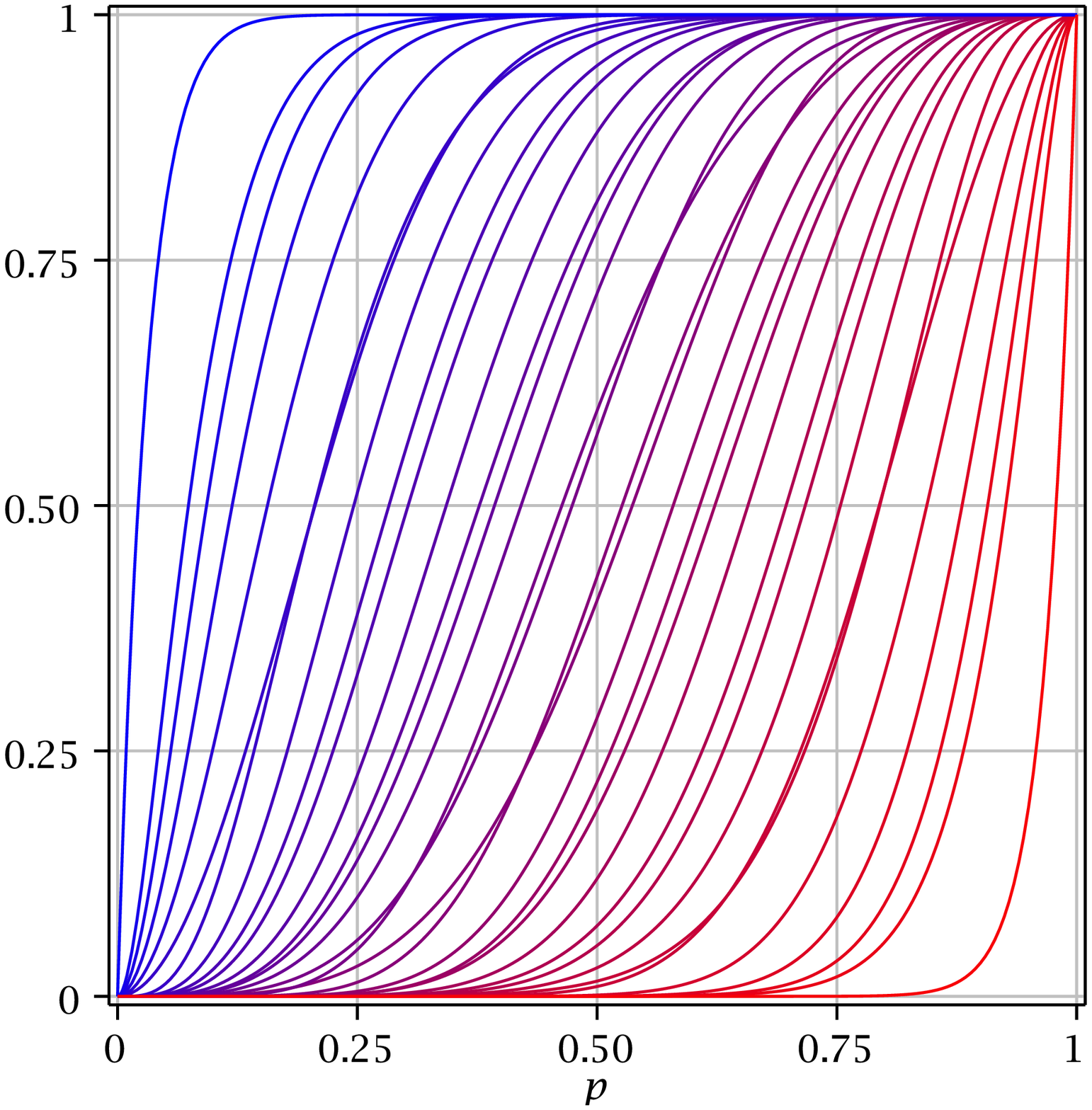}
     } 
      \hfill
     \subfloat[{$m=6$}
     \label{fig:m=6}]{%
       \includegraphics[height=0.25\textwidth]{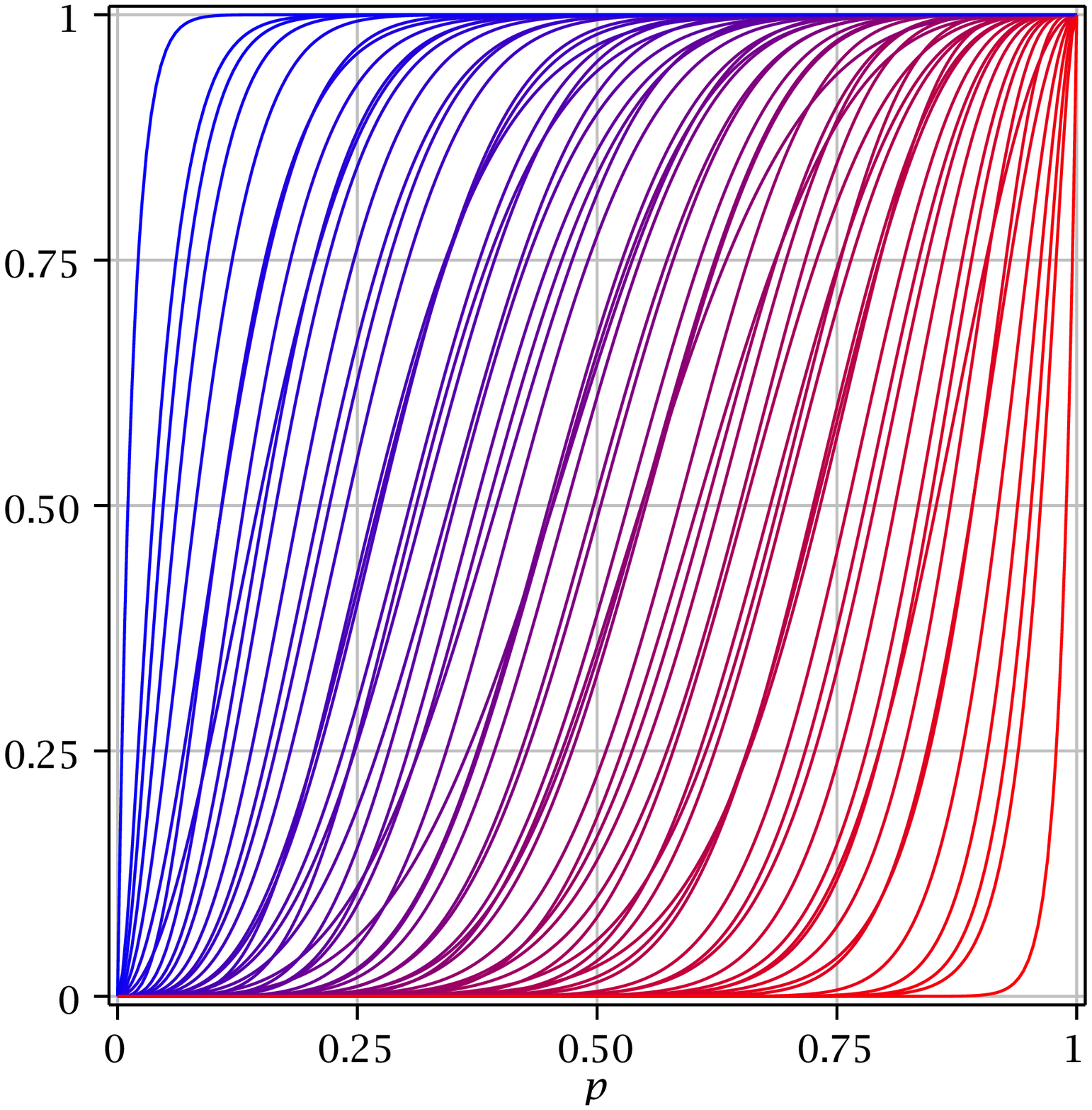}
     }
     \caption{Reliability polynomials for all compositions in $\CS{2^m}$.}
     \label{fig:rel_comp}
   \end{figure}

\subsubsection{Preliminary simulations}

Our simulations have shown that the poset induced by the order given by eq. \eqref{eq:ord_def_1} does not look like any known poset. Moreover, there is no trivial algebraic relation over the set of binary vectors that obeys the aforementioned order. Figs. \ref{fig:m=4} and \ref{fig:m=5} plot the reliability polynomials for all the compositions when $m=4$, and respectively $m=5$. These simulations show that up to $m = 4$ the order on the compositions is total. Starting from $m = 5$ the order is partial, the compositions that are no longer comparable are $(0,0,0,0,1)$ with $(1,1,0,0,0)$, $(0,0,0,1,1)$ with $(1,1,0,1,0)$, and $(0,0,1,0,1)$ with $(1,1,1,0,0)$. As $m$ grows the number of non-comparable compositions in the poset increases (see $m=6$ in Fig. \ref{fig:m=6}).

In the next subsection, we will introduce partial orders less fine than the pointwise order, i.e., the orders to be defined here are such that if $\uv\preceq \vv$ then $\uv\leq \vv$.  
 
\subsubsection{A general order}

The first partial order that naturally comes to mind when dealing with the reliability of MMNs results from their incidence matrices. 

\begin{definition}Let $\Ns_1$ and $\Ns_2$ be two MMNs having the same $w$ and $l$. We say that $\Ns_1\preceq_M\Ns_2$ if and only if \[\forall 1\leq i\le w-1\;, \; 1\le j\le l-1\quad M_{\Ns_1}(i,j)\le M_{\Ns_2}(i,j).\]
\end{definition}
 
The meaning of $\preceq_M$ is that whenever there is a matchstick at a position $(i,j)$ in $\Ns_1$ there has to be a matchstick at the same position in $\Ns_2.$ Therefore, $\Ns_2$ has at least the same number of matchsticks as $\Ns_1$ and can also be understood as ``the matchsticks of $\Ns_1$ perfectly overlap those of $\Ns_2$.''

\begin{lemma}\label{lem:n_i_preceq_0}
Let $\Ns_1$ and $\Ns_2$ be two MMNs having the same $w$ and $l$ such that $\Ns_1\preceq_M\Ns_2.$ Then we have 
\[\forall \; 0\leq i\leq wl \; N_i(\Ns_1)\leq N_i(\Ns_2).\]
\end{lemma}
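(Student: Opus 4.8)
The plan is to exploit the bijection between MMNs with fixed $w,l$ and binary matrices, and to reduce $\preceq_M$ to a single matchstick flip. Concretely, if $\Ns_1\preceq_M\Ns_2$ then $M_{\Ns_2}$ is obtained from $M_{\Ns_1}$ by switching some $0$ entries to $1$; by transitivity of the desired inequality $N_i(\Ns_1)\le N_i(\Ns_2)$, it suffices to treat the case where $M_{\Ns_2}$ and $M_{\Ns_1}$ differ in exactly one position $(i_0,j_0)$, with $M_{\Ns_1}(i_0,j_0)=0$ and $M_{\Ns_2}(i_0,j_0)=1$. So the whole statement comes down to: adding a single matchstick can only increase each coefficient $N_i$.

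\textbf{Key steps.} First I would fix notation: recall that $N_i(\Ns)$ counts the number of ``states'' (subsets of the $n=wl$ edges declared operational, equivalently $0/1$ edge-assignments of weight $i$) for which $S$ and $T$ are connected. Adding a matchstick at $(i_0,j_0)$ means identifying the two vertically adjacent interior nodes $V_{i_0,j_0}$ and $V_{i_0+1,j_0}$ — it adds no edge, it merges two vertices. So $\Ns_1$ and $\Ns_2$ have the \emph{same} edge set, and a state is a weight-$i$ subset $A$ of that common edge set. The core claim is then a purely graph-theoretic monotonicity fact: if $G_2$ is obtained from $G_1$ by contracting/identifying two vertices (neither of which is $S$ or $T$, and here $S\ne T$ so no self-short occurs at the terminals), then for every edge subset $A$, if $S$ and $T$ are connected in $(G_1,A)$ then they are connected in $(G_2,A)$. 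This is immediate: any $S$--$T$ path using edges of $A$ in $G_1$ survives in $G_2$ (vertex identification never destroys paths). Summing over all weight-$i$ subsets $A$ gives $N_i(\Ns_1)\le N_i(\Ns_2)$ for each $i$, and then transitivity over the sequence of single flips finishes the proof.

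\textbf{Main obstacle.} The one subtlety I want to get exactly right is the claim that a matchstick is a vertex-merge rather than an extra unit-cost edge, since the matrix $M_{\Ns}$ records matchstick presence but the size $n=wl$ is unchanged by matchsticks (this is consistent with Definition \ref{def:MMN} and eq. \eqref{eq:numb_matchstick}: matchsticks are ``free'' wires, not counted among the $n$ devices). Given that, the combinatorial identity $\Rel(\Ns;p)=\sum_i N_i(\Ns)p^i(1-p)^{n-i}$ with the \emph{same} $n$ for $\Ns_1$ and $\Ns_2$ holds, and the inequality on the $N_i$'s is exactly what Proposition \ref{prop:ineq_N_form} wants as input. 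The only place a reader might object is whether the two merged nodes could already be connected to each other through $A$ in $G_1$, in which case merging changes nothing; but that case only strengthens (makes an equality in) the inequality, so it causes no harm. An alternative, slightly more pedestrian route avoiding the contraction language: directly compare $S$--$T$ path sets or cut sets of $\Ns_1$ and $\Ns_2$ and observe every $S$--$T$ path of $\Ns_1$ is an $S$--$T$ path of $\Ns_2$, hence every operational state of $\Ns_1$ is operational in $\Ns_2$; I would present the contraction argument as the clean one and mention this as a remark.
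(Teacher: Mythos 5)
Your argument is correct and complete. Note that the paper states this lemma without any proof at all (it is used immediately to derive Theorem \ref{thm:order_0}), so there is no authorial argument to compare against; what you have written supplies exactly the missing justification. The two load-bearing observations are the right ones: first, that a matchstick is a perfectly reliable wire and hence, in the graph model, an identification of two interior vertices rather than an extra device, so $\Ns_1$ and $\Ns_2$ share the same edge set of size $n=wl$ and the coefficients $N_i$ count weight-$i$ operational states over a common ground set; second, that vertex identification (away from creating an $S$--$T$ short, which cannot happen here since matchsticks join interior nodes within a column) maps every $S$--$T$ path of $\Ns_1$ restricted to a state $A$ to an $S$--$T$ walk of $\Ns_2$ restricted to the same $A$, so the operational states of $\Ns_1$ form a subset of those of $\Ns_2$ and $N_i(\Ns_1)\leq N_i(\Ns_2)$ termwise. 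Your reduction to a single flipped matrix entry is harmless but not needed, since the containment-of-path-sets argument works in one step for an arbitrary number of added matchsticks; either way the proof stands.
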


\begin{theorem}\label{thm:order_0}
Let $\Ns_1$ and $\Ns_2$ be two MMNs having the same $w$ and $l$ such that $\Ns_1\preceq_M\Ns_2.$ Then \[\Rel(\Ns_1)\leq \Rel(\Ns_2).\]
\end{theorem}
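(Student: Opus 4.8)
The plan is to derive Theorem~\ref{thm:order_0} as an immediate consequence of the two results that precede it, namely Lemma~\ref{lem:n_i_preceq_0} and Proposition~\ref{prop:ineq_N_form}. The logical chain is short: the hypothesis $\Ns_1\preceq_M\Ns_2$ feeds directly into Lemma~\ref{lem:n_i_preceq_0}, which yields $N_i(\Ns_1)\le N_i(\Ns_2)$ for every $0\le i\le wl$; this coefficientwise inequality is precisely the hypothesis of Proposition~\ref{prop:ineq_N_form}, whose conclusion is $\Rel(\Ns_1;p)\le \Rel(\Ns_2;p)$ for all $p\in[0,1]$, i.e.\ $\Rel(\Ns_1)\le \Rel(\Ns_2)$ in the ordering notation of \eqref{eq:ord_def_1}. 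So the proof is essentially a two-line citation of the previous two items.

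Since the real content has been pushed into Lemma~\ref{lem:n_i_preceq_0}, let me also sketch how I would prove \emph{that} lemma, as it is the genuine obstacle. Recall $N_i(\Ns)$ counts the number of $i$-edge subsets (states with exactly $i$ working devices) for which $S$ and $T$ are connected. The natural approach is to exhibit an injection: for each working-state subgraph of $\Ns_1$ on $i$ edges that connects $S$ to $T$, I want to associate a working state of $\Ns_2$ on the same $i$ edges that also connects $S$ to $T$. Because $\Ns_1$ and $\Ns_2$ share the same $w,l$ and the same underlying PoS skeleton, their edge sets are in canonical bijection (vertical matchsticks aside), so the ``same $i$ edges'' makes sense once one fixes the identification coming from Definition~\ref{def:MMN}. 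The key monotonicity fact is that adding matchsticks (wires) to a two-terminal network can only \emph{increase} $s,t$-connectivity: a matchstick is a short-circuit between two vertically adjacent nodes, so merging those nodes can only create, never destroy, $s$--$t$ paths. Hence if a sub-state of $\Ns_1$ is operational, the corresponding sub-state of $\Ns_2$ (which has all the matchsticks of $\Ns_1$ plus possibly more, by $\preceq_M$) is operational too. This gives $N_i(\Ns_1)\le N_i(\Ns_2)$ termwise.

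Care must be taken with the bookkeeping of which "edges" are counted in the exponent $i$ of \eqref{rel:N_form}: here $n=wl$ counts only the horizontal devices (the matchsticks are perfect wires that never fail, per the convention fixed just before Definition~\ref{def:MMN}), so the $2^{n}$ states are indexed by subsets of the $wl$ fallible devices, and the matchstick pattern is a fixed feature of the network, not part of the state. With that reading, $\preceq_M$ says $\Ns_2$ is obtained from $\Ns_1$ by adding some always-closed wires, and monotonicity of reliability under edge addition (a standard fact, e.g.\ via the structure function being nondecreasing) gives $\Rel(\Ns_1;p)\le\Rel(\Ns_2;p)$ at once — which would even short-circuit the Lemma/Proposition route. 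The main obstacle I anticipate is purely notational: making precise the identification of the device sets of $\Ns_1$ and $\Ns_2$ and checking that the matchstick additions do not alter $w$, $l$, or $n$, so that the two reliability polynomials are honestly polynomials of the same degree in the same variable $p$ and the coefficientwise comparison of Proposition~\ref{prop:ineq_N_form} applies verbatim.

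\begin{proof}
By Lemma~\ref{lem:n_i_preceq_0}, the hypothesis $\Ns_1\preceq_M\Ns_2$ implies $N_i(\Ns_1)\le N_i(\Ns_2)$ for all $0\le i\le wl$. Since $\Ns_1$ and $\Ns_2$ have the same size $n=wl$, Proposition~\ref{prop:ineq_N_form} applies and yields $\Rel(\Ns_1;p)\le \Rel(\Ns_2;p)$ for every $p\in[0,1]$, that is, $\Rel(\Ns_1)\le \Rel(\Ns_2)$.
\end{proof}
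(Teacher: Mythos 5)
Your proof is correct and is essentially identical to the paper's own one-line argument: both derive the theorem by feeding the hypothesis $\Ns_1\preceq_M\Ns_2$ into Lemma~\ref{lem:n_i_preceq_0} and then applying Proposition~\ref{prop:ineq_N_form}. Your additional sketch of why Lemma~\ref{lem:n_i_preceq_0} holds (matchsticks are perfect wires, so adding them can only increase $s,t$-connectivity state by state) is a reasonable bonus, since the paper leaves that lemma unproved.
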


\begin{proof}
The proof follows from Lemma \ref{lem:n_i_preceq_0} and Proposition \ref{prop:ineq_N_form}.
\end{proof}

Notice that in \cite{1956_MS_1}, the authors pointed out that hammocks are ``midway'' between a PoS and a SoP with respect to  the number of matchsticks (see Fig. ~\ref{fig:hammock}). This implies the following ordering among hammocks, PoS and SoP.

\begin{proposition}\label{pr:sop_ham_pos}
Let $m\ge 2$ and let $1\leq i\leq m-1$. We have 

\begin{equation*}\Rel\left(\Css^{\left(1^{i}0^{m-i}\right)}\right)\le \Rel\left(\Hs{2^{i}}{2^{m-i}}\right) \le\Rel\left(\Css^{\left(0^{m-i}1^{i}\right)}\right).
\end{equation*}
\end{proposition}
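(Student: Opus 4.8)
The plan is to reduce the entire chain to the matchstick partial order $\preceq_M$ and to apply Theorem~\ref{thm:order_0} twice. The first step is to identify the two extreme compositions structurally: I claim $\Css^{(1^i0^{m-i})}$ is a parallel-of-series (PoS) and $\Css^{(0^{m-i}1^i)}$ is a series-of-parallel (SoP). By Proposition~\ref{pr:param_comp} both have width $w=2^i$ and length $l=2^{m-i}$, which is exactly the width and length of the hammock $\Hs{2^i}{2^{m-i}}$. Since $\preceq_M$ and Theorem~\ref{thm:order_0} only compare MMNs sharing the same $w$ and $l$, these dimension computations are what make the three networks comparable in this order in the first place, so I would record them explicitly.

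The core of the argument is to show that the PoS has the all-zeros matchstick matrix and the SoP the all-ones matchstick matrix, by iterating the block formula of Proposition~\ref{pr:compo_net} through its two relevant special cases. Composing on the left with the parallel block $\Css^{(1)}$ (which has $l_1=1$) stacks two copies of the current matrix vertically, separated by a zero row; composing on the left with the series block $\Css^{(0)}$ (which has $w_1=1$) places two copies side by side, separated by an all-ones column. Starting from the all-series core $\Css^{(0^{m-i})}$ (empty matchstick matrix) and prepending $i$ outer parallel compositions produces a $(2^i-1)\times(2^{m-i}-1)$ matrix that is identically $\bm{0}$, identifying $\Css^{(1^i0^{m-i})}$ as PoS. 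Dually, starting from the all-parallel core $\Css^{(1^i)}$ (empty matchstick matrix) and prepending $m-i$ outer series compositions produces the all-ones matrix $\bm{1}$, identifying $\Css^{(0^{m-i}1^i)}$ as SoP. One could also simply invoke the preceding proposition, which already labels these compositions (correcting its evident typo that calls both SoP).

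With $M_{\Css^{(1^i0^{m-i})}}=\bm{0}$ and $M_{\Css^{(0^{m-i}1^i)}}=\bm{1}$, the two required $\preceq_M$ relations become trivial: every entry of $M_{\Hs{2^i}{2^{m-i}}}$ lies in $\{0,1\}$, so $\bm{0}\preceq_M M_{\Hs{2^i}{2^{m-i}}}\preceq_M \bm{1}$ entrywise. Applying Theorem~\ref{thm:order_0} to the left relation yields $\Rel(\Css^{(1^i0^{m-i})})\le \Rel(\Hs{2^i}{2^{m-i}})$, and to the right relation yields $\Rel(\Hs{2^i}{2^{m-i}})\le \Rel(\Css^{(0^{m-i}1^i)})$; chaining these gives the claim. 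The only genuine obstacle is the bookkeeping in the second paragraph, namely tracking which special case of Proposition~\ref{pr:compo_net} applies to series versus parallel composition and checking that the iterated block structure collapses to a constant matrix of the correct $(2^i-1)\times(2^{m-i}-1)$ dimensions; once that identification is made, the reliability inequalities are immediate and no polynomial computation is needed.
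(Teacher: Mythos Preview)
Your proposal is correct and follows essentially the same approach as the paper: verify via Proposition~\ref{pr:param_comp} that the two compositions and the hammock share the same width $2^i$ and length $2^{m-i}$, identify the compositions as the PoS and SoP (so that their matchstick matrices are $\bm{0}$ and $\bm{1}$), and then apply Theorem~\ref{thm:order_0} to the entrywise inequalities $\bm{0}\preceq_M M_{\Hs{2^i}{2^{m-i}}}\preceq_M \bm{1}$. The paper's proof is terser---it simply invokes the matrices without deriving them---whereas you spell out the iteration of Proposition~\ref{pr:compo_net} to justify that the matrices are constant; this extra bookkeeping is sound but not strictly necessary given the preceding proposition already names these compositions as PoS and SoP.
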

 
 \begin{proof}
 By  Proposition \ref{pr:param_comp} we check that the networks $\Css^{\left(1^{i}0^{m-i}\right)}$ (PoS) and $\Css^{\left(0^{m-i}1^{i}\right)}$ (SoP) have the same dimensions as $\Hs{2^{i}}{2^{m-i}}$, while afterwards we use $M_{\Hs{2^{i}}{2^{m-i}}}, M_{\Css^{\left(0^{m-i}1^{i}\right)}}$ and $M_{\Css^{\left(1^{i}0^{m-i}\right)}}$ in Theorem \ref{thm:order_0} to prove the result. 
 \end{proof}
 
 An even tighter inequality can be established.
 \begin{proposition}\label{pr:sop_comp_ham}
 
For $m\ge 3$ and $2\leq i\leq m-2$ we have 
\begin{equation*} \Rel\left(\Css^{\left(1^{i-1}0^{m-i-1}10\right)}\right)\le\Rel(\Hs{2^{i}}{2^{m-i}}).
\end{equation*}
\end{proposition}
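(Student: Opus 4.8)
The plan is to prove Proposition~\ref{pr:sop_comp_ham} by exhibiting an MMN that is sandwiched, in the $\preceq_M$ order, between $\Css^{\left(1^{i-1}0^{m-i-1}10\right)}$ and $\Hs{2^{i}}{2^{m-i}}$, after first arguing that all three networks live in the same set $\NN{2^i}{2^{m-i}}$ so that $\preceq_M$ (and hence Theorem~\ref{thm:order_0}) applies. By Proposition~\ref{pr:param_comp}, $\Css^{\left(1^{i-1}0^{m-i-1}10\right)}$ has $|\uv|=i$, so it is indeed a $w=2^i$, $l=2^{m-i}$ MMN, matching the dimensions of $\Hs{2^i}{2^{m-i}}$. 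Since Theorem~\ref{thm:order_0} only compares networks whose matchstick matrices are entrywise comparable, the real content is to show
\[
M_{\Css^{\left(1^{i-1}0^{m-i-1}10\right)}} \;\le\; M_{\Hs{2^{i}}{2^{m-i}}}
\]
entrywise, i.e., every matchstick present in the composition $\Css^{\left(1^{i-1}0^{m-i-1}10\right)}$ is also present in the hammock of the same size.

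First I would compute the matchstick incidence matrix of $\Css^{\left(1^{i-1}0^{m-i-1}10\right)}$ explicitly using Proposition~\ref{pr:compo_net}. Writing $\uv = (1^{i-1}0^{m-i-1}10)$, note that the last composition factor is $\Css^{(0)}$ (two devices in series, $w_2=1$), so by the ``$w_2=1$'' bullet of Proposition~\ref{pr:compo_net} the outer block rows get deleted; the second-to-last factor is $\Css^{(1)}$ (two devices in parallel, $l_2=1$), deleting the block column; and the leading $1^{i-1}$ and trailing $0^{m-i-1}$ build up a ``parallel-of-series'' skeleton. The upshot I expect is that this composition is a specific hammock-like network whose matchsticks are concentrated in a sparse, structured pattern — essentially a coarsening of the PoS $\Css^{(1^i 0^{m-i})}$ obtained by replacing the last two series steps with one parallel-then-series step, which introduces matchsticks only along one ``grid line'' of the $(2^i-1)\times(2^{m-i}-1)$ array.

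Next I would invoke the explicit structure of the brick-wall hammock $\Hs{2^i}{2^{m-i}}$. Since $w=2^i$ and $l=2^{m-i}$ are both powers of $2$ (hence even, as $i\ge 2$ and $m-i\ge 2$ force $w,l\ge 4$), the hammock $\Hs{w}{l}$ exists, and its incidence matrix is the standard checkerboard-type brick pattern (as in the $M_{\Hs{4}{4}}$ example in the excerpt, generalized). The key combinatorial claim is that every position $(i,j)$ that carries a matchstick in $M_{\Css^{\left(1^{i-1}0^{m-i-1}10\right)}}$ also carries a matchstick in the brick-wall pattern of $\Hs{2^i}{2^{m-i}}$. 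I would verify this by matching the recursive/block description of the composition's matrix against the periodic brick-wall rule, row by row: the matchsticks of the composition occur at positions dictated by the single ``$1$'' in the late part of $\uv$, and these positions should fall on the matchstick columns of the brick wall. Once the entrywise inequality $M_{\Css^{\uv}}\le M_{\Hs{2^i}{2^{m-i}}}$ is established, Theorem~\ref{thm:order_0} immediately yields $\Rel(\Css^{\uv})\le \Rel(\Hs{2^i}{2^{m-i}})$, completing the proof.

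The main obstacle I anticipate is the bookkeeping in the third step: pinning down the exact matchstick pattern of $\Css^{\left(1^{i-1}0^{m-i-1}10\right)}$ from the nested block formula \eqref{eq:matrix_compo} and then checking containment in the brick-wall pattern of the hammock, since the composition's matrix is built by $m-1$ successive block substitutions and the indices need to be tracked carefully through the deletions of block rows/columns (for the $\Css^{(0)}$ and $\Css^{(1)}$ factors) and the $\bm{1}$-block insertions. A clean way to organize this is to proceed by induction on $m$ (with $i$ fixed relative to $m$), peeling off the innermost composition factor and using Lemma~\ref{lem:dual_comp} or Proposition~\ref{pr:compo_net} to relate the $m$ case to a smaller instance, so that the entrywise comparison reduces to a base case plus a block-monotonicity step. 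An alternative, possibly cleaner route is to note that $\Css^{\left(1^{i-1}0^{m-i-1}10\right)}$ is itself a composition of $\Css^{\left(1^{i-1}0^{m-i-1}\right)}$ with the $4$-device network $\Css^{(10)}$, and that $\Css^{(10)}\preceq_M$ (the relevant sub-block of) $\Hs{4}{2}$-type networks, so that Lemma~\ref{lem:dual_comp}-style functoriality of composition with respect to $\preceq_M$ (which would need a short separate lemma: composition is monotone in each argument for $\preceq_M$) propagates the inequality up to the full size — reducing everything to a $4$-device verification plus one monotonicity lemma.
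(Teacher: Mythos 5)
Your approach is exactly the paper's: the paper proves this (by reference to the proof of Proposition \ref{pr:sop_ham_pos}) by checking via Proposition \ref{pr:param_comp} that $\Css^{\left(1^{i-1}0^{m-i-1}10\right)}$ has the same $w=2^i$, $l=2^{m-i}$ as the hammock, and then comparing incidence matrices through Theorem \ref{thm:order_0}; indeed, writing the network as $\Css^{\left(1^{i-1}0^{m-i-1}\right)}\bullet\Css^{(10)}$ (your "alternative route"), eq. \eqref{eq:matrix_compo} puts its matchsticks exactly at positions $(r,c)$ with $r$ odd and $c$ even, which is a subset of the brick-wall positions $r+c$ odd of $M_{\Hs{2^i}{2^{m-i}}}$, so the entrywise inequality holds directly and no intermediate "sandwiched" MMN is needed.
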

  
The proof is similar to the previous one.

If some MMNs with identical $w$ and $l$ can easily be compared by means of $\preceq_M,$ what happens in the case of MMNs having different parameters? Our simulations show that several particular cases of MMNs, although having different parameters, are comparable. In particular, compositions of series and parallel are such cases. 
\subsection{Partial orders for compositions}
 
\begin{definition}Let $\uv$ and $\vv$ be two binary vectors of size $m.$ We define $\uv\preceq_S\vv$ if and only if $\supp{\uv}\subseteq\supp{\vv}.$
Equality holds only for $\uv=\vv.$
\end{definition}

\begin{definition}Let $l\leq m$ be two strictly positive integers and $\uv,\vv$ be two binary vectors of size $m$ such that $|\uv|=|\vv|=l.$  Let $\supp{\uv}=\{s_1,\dots,s_l\}$ and $\supp{\vv}=\{t_1,\dots,t_l\},$ with $s_1<\dots<s_l$ and $t_1<\dots<t_l$. We define $\uv\preceq_H\vv$ if and only if $\forall 1\le i\le l, \;s_i\le t_i.$
\end{definition}

We combine $\preceq_S$ and $\preceq_H$ in a natural manner and define the order ``$\preceq_{SH}$'' as being equal to
\begin{itemize}\item $\preceq_S$ when comparing vectors with different Hamming weights;
\item  $\preceq_H$ when comparing vectors having the same Hamming weight.
\end{itemize}

 The orders that we define and prove here (i.e., $\preceq_S$, $\preceq_H$ and $\preceq_{SH}$) have also been used in other fields \cite{BDOT16,M_2017}. The first one ($\preceq_S$) was proposed in the context of Boolean functions \cite{C10}, more precisely for computing the Algebraic Normal Form of a Boolean function using the Fast Mobius Transform \cite[Section 2.1]{C10}. In a completely different field, $\preceq_S$ was used  to tighten the bounds on the error block probability of a polar code designed for a binary erasure channel (\cite[Section VI]{MT09}).  In \cite{S16,BDOT16,D17}, $\preceq_{SH}$ was used to prove degradation of communication channels for polar codes, while in \cite{He_2017} and \cite{M_2017}, it was used to optimize construction of polar codes for different type of channels. In \cite{2010_GMO}, Gordon, Miller and Ostapenko used $\preceq_{SH}$ for solving the closest pair problem in large datasets by means of optimal hash functions. The order $\preceq_{SH}$ was also used in a cryptographic application \cite{BCDOT16}.  
 
 \medskip
Now, we are ready to introduce one of our main results.

\begin{theorem}\label{thm:order}
Let $\uv$ and $\vv$ be two binary vectors of size $m$. Then we have
\[\uv\preceq_{SH}\vv\Rightarrow \uv\le \vv.\]
\end{theorem}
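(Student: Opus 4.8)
The plan is to argue purely analytically with the reliability polynomials, using Theorem~\ref{thm:reliab_comp}. Write $f_0(x)=x^2$ and $f_1(x)=1-(1-x)^2=2x-x^2$, so that $\Rel(\Css^{\uv};p)=(f_{u_1}\circ\cdots\circ f_{u_m})(p)$. First I would record three elementary facts about $[0,1]$: (a) both $f_0$ and $f_1$ map $[0,1]$ into $[0,1]$ and are non-decreasing there, since $f_0'=2x\ge 0$ and $f_1'=2-2x\ge 0$; (b) $f_1(x)-f_0(x)=2x(1-x)\ge 0$; and (c) $(f_0\circ f_1)(x)-(f_1\circ f_0)(x)=(2x-x^2)^2-(2x^2-x^4)=2x^2(1-x)^2\ge 0$. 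From (a) it follows that every composition of the $f_i$'s is a non-decreasing self-map of $[0,1]$, which gives the basic comparison principle I will use repeatedly: if $\Rel(\Css^{\uv};p)=G(A(p))$ and $\Rel(\Css^{\vv};p)=G(B(p))$ for some composition $G$ of $f_i$'s and some $A,B$ with $A(p)\le B(p)$ on $[0,1]$, then $\uv\le\vv$.

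Next I would reduce $\preceq_{SH}$ to elementary moves and invoke transitivity of $\le$. When $|\uv|\neq|\vv|$, i.e. $\supp{\uv}\subsetneq\supp{\vv}$, connect $\uv$ to $\vv$ by a chain in which each step flips a single $0$ into a $1$; it then suffices to treat one such step, say $\uv$ and $\vv$ agreeing except $u_k=0$, $v_k=1$. With $G=f_{u_1}\circ\cdots\circ f_{u_{k-1}}$ and $y=(f_{u_{k+1}}\circ\cdots\circ f_{u_m})(p)\in[0,1]$, fact (b) gives $f_0(y)\le f_1(y)$ and the comparison principle finishes. When $|\uv|=|\vv|=l$, connect $\uv$ to $\vv$ by a chain each step of which moves a single $1$ one position to the right into an adjacent $0$. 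To see such a chain exists and terminates at $\vv$: take the largest index $i$ with $s_i<t_i$; then $s_i+1\le m$ and (unless $i=l$) $s_i+1<s_{i+1}$ because $s_{i+1}=t_{i+1}>t_i>s_i$, so the cell $s_i+1$ is currently empty, and moving the $1$ from $s_i$ to $s_i+1$ produces a word still $\preceq_H\vv$ with $\sum_j(t_j-s_j)$ strictly decreased; induction on this nonnegative integer does the rest. For one such elementary right move, say $w_k=1$, $w_{k+1}=0$ becoming $w_k'=0$, $w_{k+1}'=1$, set $G=f_{w_1}\circ\cdots\circ f_{w_{k-1}}$ and $z=(f_{w_{k+2}}\circ\cdots\circ f_{w_m})(p)\in[0,1]$; then $\Rel(\Css^{\word{w}};p)=G\big((f_1\circ f_0)(z)\big)$ and $\Rel(\Css^{\word{w}'};p)=G\big((f_0\circ f_1)(z)\big)$, so fact (c) and the comparison principle give $\word{w}\le\word{w}'$.

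The only genuinely computational ingredient is fact (c), the inequality $f_1\circ f_0\le f_0\circ f_1$ on $[0,1]$, which reduces to the factorization $2x^2(1-x)^2$; everything else is bookkeeping. I expect the main obstacle to be the combinatorial lemma in the equal-weight case — that any $\uv\preceq_H\vv$ can be realized by a sequence of elementary right moves that remain $\preceq_H\vv$ — because one must check both that the target cell of each move is genuinely empty and that the procedure cannot stall before reaching $\vv$; the choice ``largest $i$ with $s_i<t_i$'' is exactly what makes this go through, via the small interlacing estimate $s_{i+1}=t_{i+1}>t_i>s_i$. Assembling the two cases with transitivity of $\le$ yields $\uv\preceq_{SH}\vv\Rightarrow\uv\le\vv$.
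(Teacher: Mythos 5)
Your proof is correct, and while it follows the same broad strategy as the paper --- split $\preceq_{SH}$ into the two cases $\preceq_S$ and $\preceq_H$, decompose each into elementary moves, and push a local inequality through the composition using monotonicity of $f_0$ and $f_1$ --- the key lemma is obtained differently. In the equal-weight case the paper moves one support element from $k_{i+1}$ all the way to $j_{i+1}$ in a single step (Lemma~\ref{lem:pos_bit}), and justifies the resulting block inequality $f_1\circ f_0^{k}\le f_0^{k}\circ f_1$ by recognizing the two sides as reliability polynomials of a PoS and a SoP of width $2$ and invoking the matchstick-matrix order via Theorem~\ref{thm:order_0}, which in turn rests on the coefficientwise bound of Lemma~\ref{lem:n_i_preceq_0} (stated there without proof). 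You instead reduce everything to adjacent transpositions and verify the single needed inequality $f_1\circ f_0\le f_0\circ f_1$ by the explicit factorization $(f_0\circ f_1)(x)-(f_1\circ f_0)(x)=2x^2(1-x)^2$, which makes the argument fully self-contained and elementary; your fact (c) also recovers the paper's block inequality by iterating $f_1 f_0^k=(f_1 f_0)f_0^{k-1}\le f_0(f_1 f_0^{k-1})$, so nothing is lost. The price you pay is the extra combinatorial lemma that any $\preceq_H$-comparable pair is joined by a chain of right-moves into genuinely empty cells that never overshoots $\vv$, and you handle exactly the two points where this could fail --- emptiness of the target cell and termination --- via the choice of the largest $i$ with $s_i<t_i$ and the interlacing bound $s_{i+1}=t_{i+1}>t_i>s_i$. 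Both proofs also need the monotone-composition principle (the paper's Lemma~\ref{lem:compo_bhattacharya}); your ``comparison principle'' is the special case where the inner functions coincide, and you correctly note that it requires the outer composition $G$ to be non-decreasing, which your fact (a) supplies.
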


The proof of this theorem is given in Appendix \ref{app:A}. This was already state without proof in \cite{DCB18}.
The poset of compositions will be shorthanded as $\Ps$ and, for convenience, we will use $\uv$ instead of $\Css^{\uv}.$ 

\section{Properties of the Poset}\label{sec:prop_poset}

In order to give the structural properties of $\Ps$ we remember several fundamental concepts from poset theory. A quick bibliographic search shows that this poset is isomorphic to a well-known one, denoted as $M(n)$ in \cite[Section 4.1.2]{S_91}, where it is called {\it partitions into distinct summands}. This poset has the following main properties: {\it rank unimodal, rank symmetric} and {\it Sperner property}.

\begin{figure}[!h]
\begin{center}
\includegraphics[width=\textwidth]{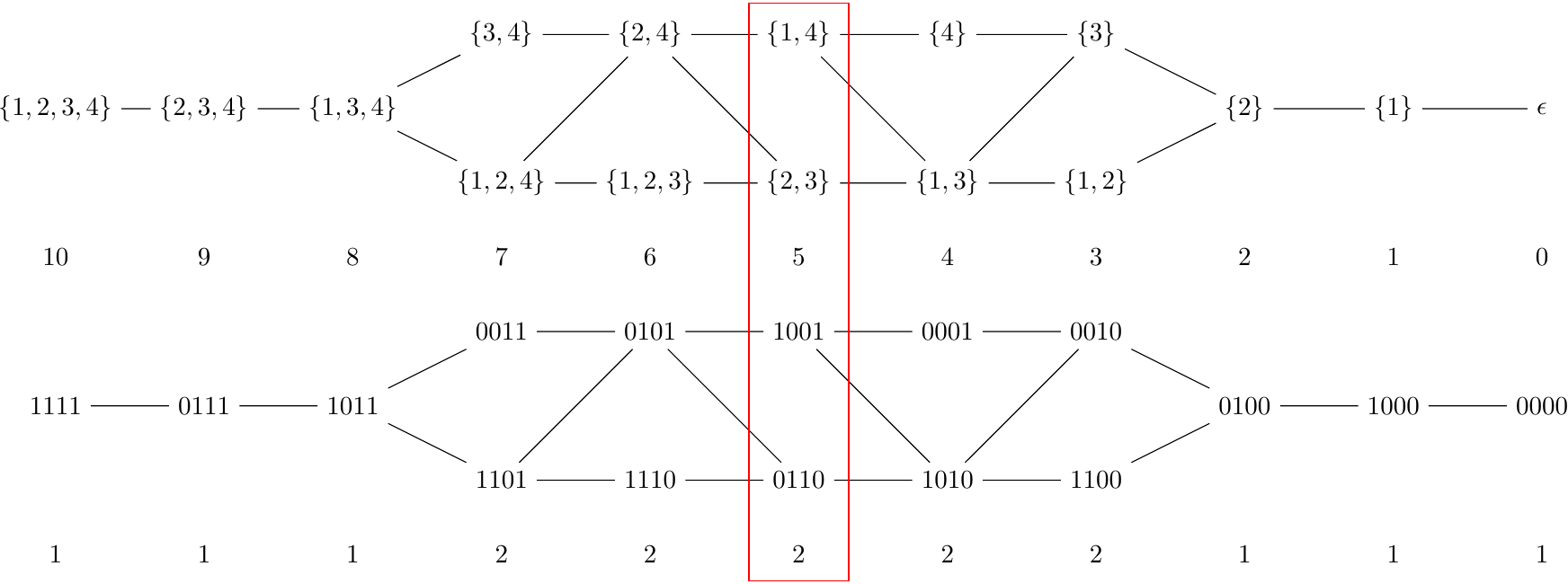}
\caption{The Hasse diagram for $\Ps$ for $m=4$, the rank values for each level of the poset, and $\#\Po_i$. \label{fig:mon_order_4}}
\end{center}
\end{figure}

\subsection{Preliminaries} 

\begin{definition}
Let $\Po$ be a poset. We call any subset of $\Po$ a chain, if and only if it is totally ordered. Any subset of $\Po$ is called an antichain if no pair of elements in it are comparable. 
\end{definition}

For example, for $m=4$ we have that $\{(0101),(1110)\}$ and $\{(1110),(1001)\}$ are two antichains, while $\\ \{(0010),(1010),(0110),(0101),(0011),(1011)\}$ is a chain (see Fig. \ref{fig:mon_order_4}).

\begin{definition}
A poset $\Po$ is bounded from above/below if there is an element $x$ of $\Po$ such that any other element of $\Po$ is smaller/larger than $x.$ When $\Po$ admits both an upper and a lower bound we simply say that $\Po$ is bounded. 
\end{definition}

\begin{definition}
Let $\Po$ be a poset. We say that $\Po$ is graded if $\Po$ can be equipped with a (rank) function $\rho:\Po\to \Na$ that satisfies: 
\begin{itemize}
\item if $x$ is minimal then $\rho(x)=0$.
\item if $y$ covers $x$ then $\rho(y)=\rho(x)+1$.
\end{itemize}
\end{definition}

Any graded poset $\Po$ can be partitioned into $\Po=\bigcup_{i=1}^r\Po_i$, where $\Po_i$ is the rank $i$ of $\Po.$ This implies that any maximal length chain in $\Po$ passes through exactly one element in each $\Po_i.$  Any $\Po_i$ is also an antichain, since all the elements of $\Po_i$ have rank $i$ and thus are not comparable. For a graded poset $\Po$ the maximum number of elements in an antichain is lower bounded by the maximum of $\#\Po_i$.   

\begin{definition}
Let $\Po$ be a graded poset with $\Po=\bigcup_{i=1}^r\Po_i.$ We say that \begin{itemize}
\item $\Po$ is rank symmetric if $\# \Po_i=\# \Po_{r-i},$ for all $i$;
\item $\Po$ is rank unimodal if the sequence $\{\#\Po_i\}_{1\le i\le r}$ is unimodal;
\item $\Po$ is Sperner if $\max_{A}\#A=\max_{i}\#\Po_i,$ where $A$ runs through the set of all antichains.  
\end{itemize}
\end{definition}

Hence, in a \emph{Sperner} poset the largest rank provides an antichain of maximum cardinality. Notice that there may exist other antichains of maximum cardinality as well. So, if $\Po$ is rank symmetric, rank unimodal, and \emph{Sperner}, then $\max_{A}\#A=\#\Po_{r/2}$ when $r$ is even, and $\max_{A}\#A=\#\Po_{\lfloor r/2\rfloor}=\#\Po_{\lceil r/2\rceil}$ when $r$ is odd.

\subsection{Applications to compositions}
\subsubsection{General properties of the poset of compositions}

\begin{proposition}
$\Ps$ is bounded, where $\uv=(0,\dots,0)$ and $\vv=(1,\dots,1)$ are the minimum, and respectively the maximum elements.
\end{proposition}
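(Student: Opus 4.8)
The plan is to verify directly, against the definition of $\preceq_{SH}$, that $(0,\dots,0)$ sits below and $(1,\dots,1)$ sits above every element of $\CS{2^m}$. I would fix an arbitrary $\uv\in\{0,1\}^m$ and treat the two candidate extrema in turn; in each case the argument splits according to whether $\uv$ coincides with the extremum, where reflexivity finishes things, or not.

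For the lower bound, suppose $\uv\neq(0,\dots,0)$. Then $|(0,\dots,0)|=0$ while $|\uv|\ge 1$, so the two vectors have different Hamming weights and the relation $\preceq_{SH}$ reduces to $\preceq_S$. Since $\supp{(0,\dots,0)}=\emptyset\subseteq\supp{\uv}$, we obtain $(0,\dots,0)\preceq_S\uv$, hence $(0,\dots,0)\preceq_{SH}\uv$. For the upper bound, suppose $\uv\neq(1,\dots,1)$. Then $|\uv|\le m-1 < m=|(1,\dots,1)|$, so again the weights differ and $\preceq_{SH}$ reduces to $\preceq_S$; now $\supp{\uv}\subseteq\{1,\dots,m\}=\supp{(1,\dots,1)}$ yields $\uv\preceq_S(1,\dots,1)$, hence $\uv\preceq_{SH}(1,\dots,1)$. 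Adjoining the trivial equality cases shows $(0,\dots,0)$ is a minimum and $(1,\dots,1)$ a maximum, so $\Ps$ is bounded.

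The only subtlety worth flagging is the case distinction built into $\preceq_{SH}$: one must notice that as soon as $\uv$ differs from the extremum under consideration their Hamming weights are forced to differ (because the weights $0$ and $m$ are realised only by $(0,\dots,0)$ and $(1,\dots,1)$ respectively), so the $\preceq_H$ branch is never invoked and the containment of supports---trivial for $\emptyset$ and for $\{1,\dots,m\}$---does all the work. Beyond this bookkeeping there is no real obstacle; the proposition is an immediate formal consequence of how the order $\preceq_{SH}$ was defined.
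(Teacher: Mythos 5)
Your proof is correct and matches the paper's approach: the paper simply states that boundedness follows directly from the definition of $\preceq_{SH}$, and you have spelled out exactly that verification, correctly noting that the weight-$0$ and weight-$m$ extremes force the $\preceq_S$ branch whenever the compared vector is not the extremum itself.
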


\begin{proof}
This follows directly from the definition of the order $\preceq_{SH}.$
\end{proof}

 We can also prove that $\Ps$ is graded by specifying a rank function.
\begin{proposition}
Let $m$ be a strictly positive integer. $\Ps$ is graded, where \[\forall \uv\in \{0,1\}^m\quad \rho(\uv)=\sum\limits_{i\in \supp{\uv}}^{}i.\]
\end{proposition}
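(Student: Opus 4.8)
The goal is to show that $\rho(\uv)=\sum_{i\in\supp{\uv}}i$ is a legitimate rank function for the poset $\Ps$, i.e., that it is minimized (equal to $0$) on the minimum element and that it increases by exactly one along every covering relation. The first condition is immediate: the minimum element is $\uv=(0,\dots,0)$, which has empty support, so $\rho(\uv)=0$. Since we have already established that $\Ps$ is bounded with a unique minimum, there is only one minimal element to check. The entire content of the proof is therefore the covering condition.

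\textbf{Main step — analyzing covers.} Suppose $\vv$ covers $\uv$ in $\Ps$; I would split into the two cases built into the definition of $\preceq_{SH}$. First, if $|\uv|\ne|\vv|$, the order is $\preceq_S$: $\supp{\uv}\subsetneq\supp{\vv}$ with nothing strictly in between. A proper containment of supports with no intermediate set forces $\supp{\vv}=\supp{\uv}\cup\{k\}$ for a single index $k\notin\supp{\uv}$ (otherwise one could insert an element and break the cover). Then $\rho(\vv)-\rho(\uv)=k$, so a cover in the $\preceq_S$ part can only occur when $k=1$, i.e., when $\uv$ has a $0$ in position $1$ and $\vv$ is obtained by flipping it; in that case $\rho(\vv)=\rho(\uv)+1$ as required. (One should note that when $k>1$ there is in fact an intermediate element, e.g. replacing the new index by $1$, so these are genuinely not covers — consistent with the claim.) Second, if $|\uv|=|\vv|$, the order is $\preceq_H$: writing $\supp{\uv}=\{s_1<\dots<s_l\}$ and $\supp{\vv}=\{t_1<\dots<t_l\}$ with $s_i\le t_i$ for all $i$, a cover must be a \emph{minimal} such step. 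Here $\rho(\vv)-\rho(\uv)=\sum_i(t_i-s_i)\ge 0$, with equality iff $\uv=\vv$. I would argue that if this sum is $\ge 2$ one can always find an intermediate vector $\wv$ with $\uv\preceq_H\wv\preceq_H\vv$ and $\wv\ne\uv,\vv$ — concretely, pick the largest index $i$ with $s_i<t_i$, and replace $t_i$ in $\supp{\vv}$ by $t_i-1$ (checking this yields a valid support of the same weight, still dominating $\uv$ coordinatewise and dominated by $\vv$); iterating shows any $\preceq_H$-relation with gap $\ge 2$ factors through an intermediate element. Hence a $\preceq_H$-cover has $\sum_i(t_i-s_i)=1$, i.e., exactly one $s_i$ is increased by $1$, giving $\rho(\vv)=\rho(\uv)+1$.

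\textbf{Expected obstacle.} The routine-but-delicate point is the "no intermediate element" argument in the $\preceq_H$ case: one must verify that the candidate intermediate support obtained by decrementing the largest mismatched index is genuinely a set of distinct positive integers (no collision with $t_{i-1}$ or with elements of $\supp{\vv}$, and staying $\ge 1$), and that it still sits between $\uv$ and $\vv$ in the $\preceq_H$ order after re-sorting. This is where a careless argument could fail; handling the boundary cases ($t_i-1$ colliding with another support element, or $t_i=1$ being impossible since then $s_i=t_i$) needs a short case check. Once that is in place, combining the two cases shows $\rho$ increments by $1$ across every cover, which together with $\rho((0,\dots,0))=0$ proves $\Ps$ is graded with the stated rank function. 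An alternative, perhaps cleaner route is to invoke the isomorphism with the poset $M(n)$ of partitions into distinct summands mentioned just before this proposition, for which the sum-of-parts function is the classical rank; but since that isomorphism is only asserted in passing, I would prefer to give the direct argument above to keep the section self-contained.
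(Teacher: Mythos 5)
Your overall strategy (verify the minimum, then show every cover increments $\rho$ by one, splitting according to whether the cover is a $\preceq_S$-step or a $\preceq_H$-step) is the same as the paper's, but your treatment of the $\preceq_S$ case has a genuine gap, and it is exactly the case on which the paper spends most of its proof. You correctly reduce a $\preceq_S$-cover to $\supp{\vv}=\supp{\uv}\cup\{k\}$ and note $\rho(\vv)-\rho(\uv)=k$, but the ``so'' in ``so a cover can only occur when $k=1$'' is backwards --- that is what must be proved, not a consequence of the rank formula --- and the only argument you offer is the parenthetical ``replace the new index by $1$''. That intermediate element exists only when $1\notin\supp{\uv}$; when $1\in\supp{\uv}$ your candidate collapses back to $\uv$ and nothing has been shown, yet this is precisely the dangerous situation, since a cover adding $k\ge 2$ would make $\rho$ jump by $k\ge2$ and destroy gradedness. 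Concretely, for $\uv=(1,0,0)$ and $\vv=(1,0,1)$ your argument does not exclude that $\vv$ covers $\uv$ (it does not: $\wv$ with $\supp{\wv}=\{3\}$ gives $\uv\preceq_H\wv\preceq_S\vv$). The paper closes this case by proving that no $\preceq_S$-cover exists at all when $1\in\supp{\uv}$, via a two-case construction of an intermediate element of the form $\uv\preceq_H\wv\preceq_S\vv$; a uniform fix for your write-up is to take $\supp{\wv}=\left(\supp{\uv}\setminus\{1\}\right)\cup\{k\}$, which one checks always lies strictly between $\uv$ and $\vv$.

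On the $\preceq_H$ side you at least name the obstruction yourself: decrementing the largest mismatched $t_i$ can collide with $t_{i-1}$ (e.g.\ $\supp{\uv}=\{1,2\}$, $\supp{\vv}=\{2,3\}$, where $t_2-1=t_1$), so the construction as stated does not always produce a valid support, and the ``short case check'' you defer is where the actual work lies (for instance, decrement the smallest element of the maximal block of consecutive integers ending at $t_i$). You are no worse off than the paper here --- it simply asserts that a $\preceq_H$-cover moves a single support element up by one --- but since you undertook to justify that assertion, the justification needs to be completed, either directly or by invoking the isomorphism with $M(n)$ stated just before the proposition, for which this cover description is classical.
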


\begin{proof}
First, notice that the minimum element of $\Ps$ is the all-zeros vector, which implies that $\rho(0,\dots,0)=0.$ For the second condition we need to check that if $\vv$ covers $\uv$ then $\rho(\vv)=\rho(\uv)+1.$ For any $\uv$ in the poset, $\vv$ might cover $\uv$ either by $\preceq_S$ or by $\preceq_H.$ 
\begin{itemize}
\item Suppose that $\vv$ covers $\uv$ and $\uv\preceq_H\vv$, and let $\supp{\uv}=\{s_1,\dots,s_l\}$ and $\supp{\vv}=\{t_1,\dots,t_l\}.$ This implies that $\exists \;1\le i_0\le l$ such that $s_{i_0}+1=t_{i_0}$ and $\forall \;1\le i\ne i_{0}\le l, s_i=t_i.$ Hence we have $\rho(\vv)=\rho(\uv)+1.$ 
\item Suppose that $\vv$ covers $\uv$ and $\uv\preceq_S\vv.$ By definition we have $\supp{\uv}\subset\supp{\vv}.$ Notice that unless $1\not \in \supp{\uv}$, $\vv$ can not cover $\uv$ by $\preceq_S.$ We will prove this claim by contradiction. Suppose that $1\in \supp{\uv}$ and $\exists \vv$ that covers $\uv$ by $\preceq_S.$ We distinguish two cases: 
\begin{enumerate}\item 
The first case is when $\supp{\uv}=\{1,2,\dots,l\}.$ This implies that $\{1,\dots,l\}\subset\supp{\vv}$ and the smallest $\vv$ which satisfies this condition is such that $\supp{\vv}=\{1,\dots,l+1\}.$ Now, by letting $\wv$ be such that $\supp{\wv}=\{2,3,\dots,l+1\}$, $\uv\preceq_H\wv\preceq_S\vv$, which is impossible since $\vv$ covers $\uv.$
\item The second case is when the elements in $\supp{\uv}$ are not necessarily consecutive integers from $1$ to $l.$ This implies that there are at least two elements in $\supp{\uv},$ $s_i$ and $s_{i+1}$, such that $s_i$ is the smallest element satisfying $s_{i}\le s_{i+1}-2$. The smallest $\vv$ that is $\uv\preceq_S\vv$ satisfies $\supp{\vv}=\supp{\uv}\cup\{s_{i}+1\}$. Now let $\wv$ be such that $\supp{\wv}=\{2,\dots,s_{i}+1\}\cup\{s_{i+1},\dots,s_l\}.$ Obviously, we have $\uv\preceq_H\wv\preceq_S\vv$, which is impossible.    
\end{enumerate}

Since $1\not\in \supp{\uv}$ and $1$ is the smallest element that one could add to $\supp{\uv}$ in order to obtain an element $\vv$ such that $\uv\preceq_S\vv$, the proof is concluded.
\end{itemize}
\end{proof}
\begin{theorem}[\cite{S_91}]\label{thm:pos_sperner}
Let $m$ be a strictly positive integer. Then the set of compositions ordered by $\preceq_{SH}$ is rank unimodal, rank symmetric and Sperner.
\end{theorem}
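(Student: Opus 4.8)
The plan is to establish a poset isomorphism between $\Ps$ and the poset $M(m)$ of "partitions into distinct summands" studied in \cite[Section 4.1.2]{S_91}, and then simply quote the theorem of Stanley (originally due to the hard Lefschetz theorem / the $sl_2$ representation-theoretic argument of Proctor) that $M(m)$ is rank unimodal, rank symmetric and Sperner. The key observation is that each binary vector $\uv \in \{0,1\}^m$ corresponds bijectively to the subset $\supp{\uv} \subseteq \{1,\dots,m\}$, and that such a subset is exactly a partition of the integer $\rho(\uv)=\sum_{i\in\supp{\uv}} i$ into distinct parts each of size at most $m$. So the underlying set of $\Ps$ is in bijection with $M(m)$ (using the convention in \cite{S_91} that $M(n)$ collects all partitions into distinct parts $\le n$, with the rank being the number being partitioned). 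What remains is to check that the order $\preceq_{SH}$ corresponds, under this bijection, to the covering relation of $M(m)$, namely: one partition covers another iff they differ by moving a single box, i.e. either by incrementing one part by $1$ (when that keeps the parts distinct and $\le m$) or by adjoining a new part equal to $1$.

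First I would spell out the covering relation of $\preceq_{SH}$ explicitly, which in fact the paper has essentially already done in the proof that $\Ps$ is graded: $\vv$ covers $\uv$ either (i) via $\preceq_H$, when $\supp{\uv}=\{s_1<\dots<s_l\}$, $\supp{\vv}=\{t_1<\dots<t_l\}$, and there is a unique $i_0$ with $t_{i_0}=s_{i_0}+1$ and $t_i=s_i$ otherwise; or (ii) via $\preceq_S$, when $1\notin\supp{\uv}$ and $\supp{\vv}=\supp{\uv}\cup\{1\}$. Case (i) is precisely "increment one part of the distinct-part partition by one (legally)" and case (ii) is precisely "add a new smallest part $1$"; in both cases $\rho$ increases by exactly $1$. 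Conversely every covering in $M(m)$ arises this way. Hence the rank function $\rho$ matches the rank in $M(m)$ and the Hasse diagrams coincide, giving the poset isomorphism $\Ps \cong M(m)$. Once this is in place, rank symmetry ($\#\Po_i = \#\Po_{r-i}$ with $r = \binom{m+1}{2}$, realized by the complementation $\uv \mapsto \overline{\uv}$, which by the earlier Proposition is the duality map and sends rank $i$ to rank $r-i$), rank unimodality, and the Sperner property are exactly the content of \cite[Theorem 4.6 / Section 4.1.2]{S_91}.

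The main obstacle is not any computation but the bookkeeping of matching conventions: one must be careful that "distinct summands bounded by $m$" is the right model (Stanley's $M(n)$ is usually defined with parts $\le n$), that the grading in \cite{S_91} is by the sum being partitioned rather than by the number of parts, and that a covering in the abstract integer-partition poset cannot "jump" in a way not captured by (i)–(ii) — this is where the contradiction argument already given in the grading proof (ruling out $\preceq_S$-covers when $1\in\supp{\uv}$, by interposing a $\wv$ with $\uv\preceq_H\wv\preceq_S\vv$) does the real work and should be cited. A secondary point worth a sentence: one should note that $\preceq_{SH}$ is genuinely the transitive closure generated by these covers, i.e. that $\preceq_S$ and $\preceq_H$ together do define a partial order on all of $\{0,1\}^m$ (the paper asserts reflexivity/transitivity/antisymmetry are "straightforward to check"), so that the Hasse diagram determines the poset. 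Modulo these conventions, the proof is a one-line appeal to \cite{S_91} after the isomorphism is identified.
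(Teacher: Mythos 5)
Your proposal follows essentially the same route as the paper, which simply observes that $\Ps$ is isomorphic to the poset $M(n)$ of \emph{partitions into distinct summands} from \cite[Section 4.1.2]{S_91} and quotes Stanley's result; you merely spell out the support-to-partition bijection and the matching of covering relations in more detail than the paper does. This is correct and consistent with the paper's treatment.
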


For example, when $m=4$ (see Fig. \ref{fig:mon_order_4}) the sequence of $\#\Po_i$ is $1,1,1,2,2,2,2,2,1,1,1$, which is rank symmetric and rank unimodal.

In the following we will answer four natural questions related to $\Ps$:
\begin{itemize}
\item  Which is the maximum length of a chain? 
\item How to construct a chain of maximum length?
\item Which is the middle of the poset?
\item  How to identify at least one element from the middle of the poset?
\end{itemize}

\begin{figure}[!ht]
      \centering
    \subfloat[{$m=5$}]{%
    \centering
       \includegraphics[width=0.25\textwidth,height=0.25\textwidth]{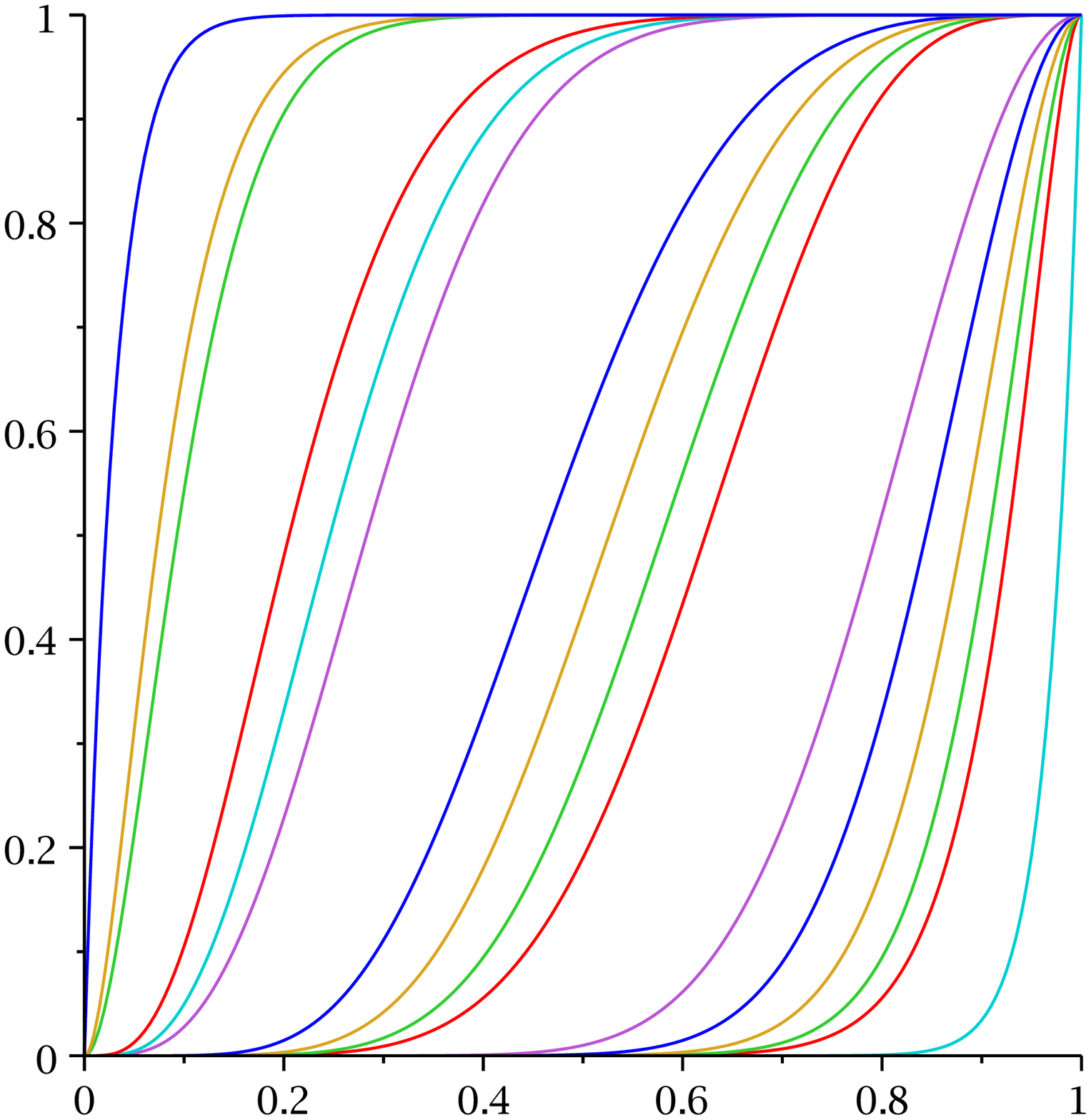}    \label{fig:chain_m=5}
     }
     \hspace{2cm}
     \subfloat[{$m=6$}]{%
     \centering
       \includegraphics[width=0.25\textwidth,height=0.25\textwidth]{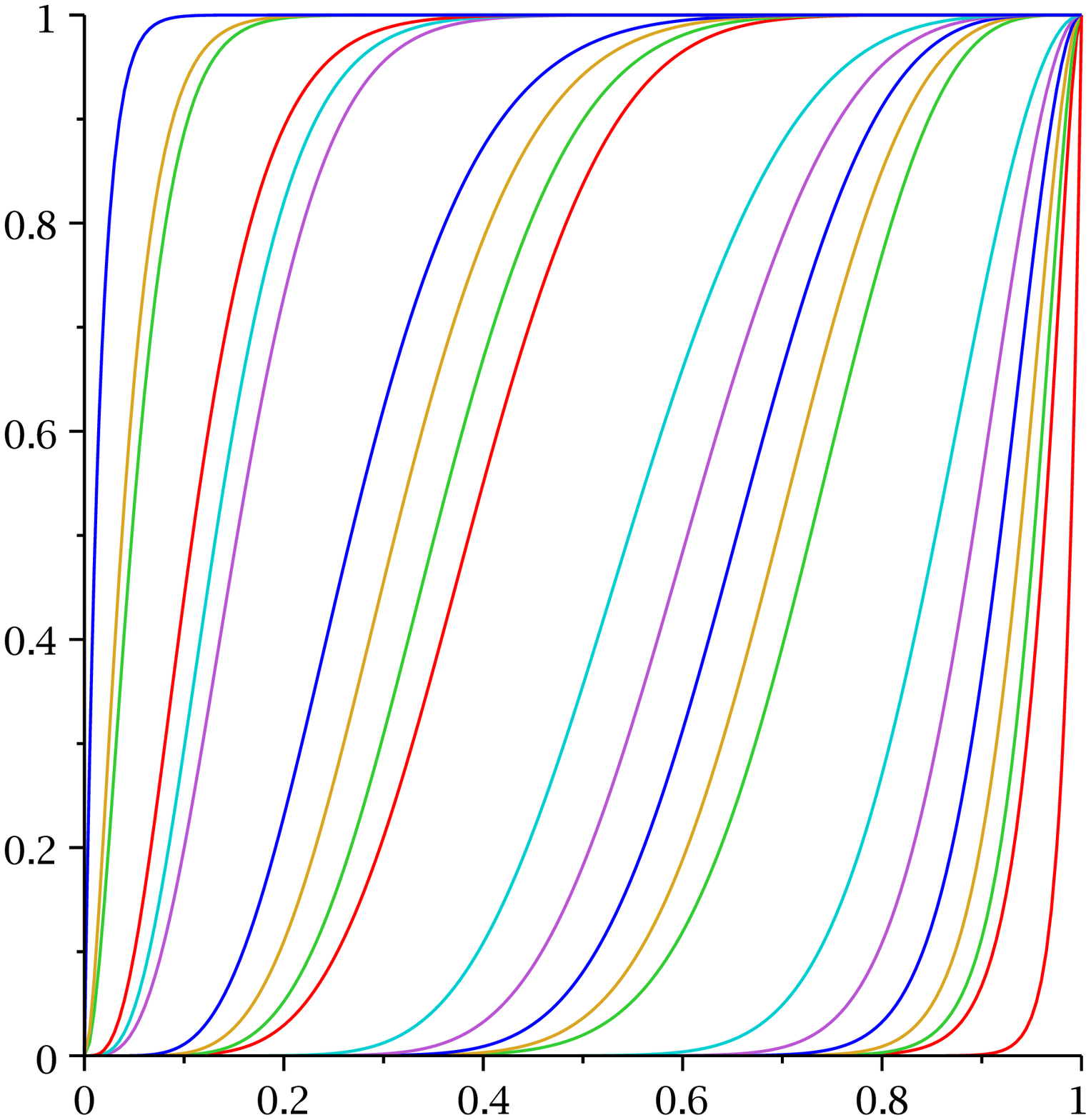}  \label{fig:chain_m=6}
     }
     \caption{$\Rel(\Css^{\uv})$ for $\uv\in \mathcal{S}$, constructed using the algorithm from Proposition \ref{pr:algo_max_chain}.}
     \label{fig:chains}
   \end{figure}

  \subsubsection{Maximum length chain of the poset}
\begin{corollary}Let $m$ be a strictly positive integer. Then the maximum length of a chain in $\Ps$ equals $\binom{m+1}{2}$.
\end{corollary}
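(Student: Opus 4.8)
The plan is to read off both bounds directly from the two structural facts just established for $\Ps$: that it is bounded, with $\hat 0=(0,\dots,0)$ and $\hat 1=(1,\dots,1)$ as minimum and maximum, and that it is graded by $\rho(\uv)=\sum_{i\in\supp{\uv}}i$.

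For the \emph{upper bound}, let $\uv_0\prec_{SH}\uv_1\prec_{SH}\dots\prec_{SH}\uv_k$ be an arbitrary chain. Since $\Ps$ is graded, a strict relation $\uv_{j}\prec_{SH}\uv_{j+1}$ can be refined into a saturated sub-chain along which $\rho$ increases by exactly $1$ at each covering step, so $\rho(\uv_j)<\rho(\uv_{j+1})$; iterating gives $0\le\rho(\uv_0)<\rho(\uv_1)<\dots<\rho(\uv_k)$. All these are integers, and the largest possible rank is $\rho(1,\dots,1)=\sum_{i=1}^{m}i=\binom{m+1}{2}$, whence $k\le\binom{m+1}{2}$. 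So no chain has length exceeding $\binom{m+1}{2}$.

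For the \emph{lower bound}, pick any maximal chain $\mathcal{C}$ in the finite poset $\Ps$. Maximality forces $\mathcal{C}$ to contain both $\hat 0$ and $\hat 1$ (otherwise one of them could be adjoined) and forces every pair of consecutive elements of $\mathcal{C}$ to be a covering pair (otherwise an element could be inserted). Consequently $\rho$ increases by exactly $1$ along each step of $\mathcal{C}$, so $\mathcal{C}$ has exactly $\rho(\hat 1)-\rho(\hat 0)=\binom{m+1}{2}$ links, i.e., length $\binom{m+1}{2}$; one may also exhibit such a chain explicitly, as in the algorithm of Proposition~\ref{pr:algo_max_chain}, which passes through every rank level $0,1,\dots,\binom{m+1}{2}$. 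Combining the two bounds gives that the maximum chain length is precisely $\binom{m+1}{2}$.

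I do not expect a genuine obstacle here: the only point deserving a careful sentence rather than a hand-wave is the implication $\uv\prec_{SH}\vv\Rightarrow\rho(\uv)<\rho(\vv)$, which is exactly where the grading of $\Ps$ (hence the hypotheses behind Theorem~\ref{thm:pos_sperner}) is used; everything else is the standard fact that in a finite bounded graded poset all maximal chains have length equal to the top rank.
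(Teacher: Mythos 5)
Your proof is correct and follows essentially the same route as the paper's: both rely on the grading of $\Ps$ and compute $\rho(1,\dots,1)=\binom{m+1}{2}$ as the length of a saturated chain from bottom to top. You are somewhat more explicit about the existence half (that a maximal chain must be saturated and run from $\hat 0$ to $\hat 1$), which the paper asserts more briefly, but the substance is identical.
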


\begin{proof}
First, notice that any chain can traverse any rank $i$ at most once. Hence, the maximum length chain has to traverse each rank once and only once, meaning that it starts at rank $0$ (given by  $\uv=(0,\dots,0)$) and walks through one element in each rank $i$ till it reaches the maximum element (which is $\uv=(1,\dots,1)$). 

Second, in order to compute the length of this chain we use the rank function $\rho$ for the maximum element, $\rho(1,\dots,1)=\sum\limits_{i=1}^mi=\binom{m+1}{2}.$
\end{proof}

\begin{proposition}\label{pr:algo_max_chain}
Define the following algorithm
\begin{algorithm}[!h]
\SetKwInOut{Input}{Input}
\SetKwInOut{Output}{Output}
\Input{ A strictly positive integer $m$}
\Output{ A maximum length chain $\mathcal{S}$ for $\Ps$}
 
 $\mathcal{S}=\{\}$\;
$ k = 0$\;
 \For{$j$ from $1$ to $m$}{
   \For {$i$ from $0$ to $m-j$}{
    $\mathcal{S}=\mathcal{S}\cup \{k+2^i\}$\;}
    $k=k+2^{m-j}$\;
   }
   \textbf{Return} $\mathcal{S}$ in binary\;
\end{algorithm}

This algorithm constructs a maximum length chain  $\mathcal{S}$ for $\Ps$. 
\end{proposition}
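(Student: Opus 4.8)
The plan is to track the running value of the variable $k$ through the outer loop, describe exactly which binary vectors are appended to $\mathcal{S}$ in each iteration, and then check that, listed in the order of appearance, they form an increasing sequence for $\preceq_{SH}$ whose length already meets the bound of the preceding Corollary. First I would record that, writing $k_j$ for the value of $k$ at the start of outer iteration $j$, we have $k_1=0$ and $k_{j+1}=k_j+2^{m-j}$, hence $k_j=\sum_{j'=1}^{j-1}2^{m-j'}$; under the paper's convention linking an integer to a binary vector this means $\supp{k_j}=\{m-j+2,\dots,m\}$, a set of size $j-1$. In iteration $j$ the algorithm appends $k_j+2^i$ for $i=0,\dots,m-j$; since each exponent $i\le m-j$ corresponds to a bit position $i+1\le m-j+1$ strictly below every set bit of $k_j$, the addition produces no carry and $\supp{k_j+2^i}=\{i+1\}\cup\supp{k_j}$, a disjoint union of size $j$. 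Thus every vector produced in iteration $j$ has Hamming weight exactly $j$, and $\mathcal{S}$ receives $m-j+1$ of them.

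Next I would show $\mathcal{S}$ is a chain. List its elements in order of appearance as $x_1,x_2,\dots,x_N$ and verify $x_t\preceq_{SH}x_{t+1}$ for every $t$. If $x_t$ and $x_{t+1}$ come from the same iteration $j$, say $x_t=k_j+2^i$ and $x_{t+1}=k_j+2^{i+1}$, then both have weight $j$ and their supports, written in increasing order, are $(i+1,m-j+2,\dots,m)$ and $(i+2,m-j+2,\dots,m)$, which agree everywhere except in the first coordinate, where $i+1\le i+2$; hence $x_t\preceq_H x_{t+1}$. If instead $x_t$ is the last element of iteration $j$ and $x_{t+1}$ the first of iteration $j+1$, then $x_t=k_j+2^{m-j}=k_{j+1}$ and $x_{t+1}=k_{j+1}+2^0$, so $\supp{x_t}=\supp{k_{j+1}}\subseteq\supp{k_{j+1}}\cup\{1\}=\supp{x_{t+1}}$ (and the weights differ, $j$ versus $j+1$), whence $x_t\preceq_S x_{t+1}$. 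In both cases $x_t\preceq_{SH}x_{t+1}$, and since $\preceq_{SH}$ is a partial order, transitivity gives $x_s\preceq_{SH}x_t$ for all $s\le t$; so $\mathcal{S}$ is totally ordered.

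Then I would settle maximality by a count: $|\mathcal{S}|=\sum_{j=1}^{m}(m-j+1)=\binom{m+1}{2}$. Adjoining the minimum element $(0,\dots,0)$ still yields a chain, now of cardinality $\binom{m+1}{2}+1$, i.e.\ of length $\binom{m+1}{2}$, which is the maximum by the Corollary. As an independent check (and to see that the chain hits each level once), one notes that along $x_1,\dots,x_N$ the rank function $\rho$ of the grading Proposition rises by exactly $1$ at each step — in the within-iteration case the smallest support element increases by one, in the inter-iteration case the element $1$ is added to the support — while $\rho(x_1)=\rho((1,0,\dots,0))=1$ and $\rho(x_N)=\rho((1,\dots,1))=\binom{m+1}{2}$, so $\mathcal{S}$ meets every rank $1,\dots,\binom{m+1}{2}$ exactly once.

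The argument is essentially bookkeeping, so the only real care needed is with the indexing convention relating integers to binary vectors — so that the supports $\supp{k_j}$ and $\supp{k_j+2^i}$ are read off correctly and, crucially, so that the additions in the algorithm are seen to be carry-free (this is what keeps each iteration confined to a single Hamming weight) — together with the verification that the ``handoff'' vectors between consecutive iterations are genuinely $\preceq_S$-related and that no rank is skipped there. Beyond that there is no obstacle: comparability propagates by transitivity, and maximality is immediate from the element count and the Corollary.
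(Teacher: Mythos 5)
Your proof is correct and follows essentially the same route as the paper's: identify the elements produced in iteration $j$ as $k_j+2^i$, count $m-j+1$ of them per iteration for a total of $\binom{m+1}{2}$, show they are totally ordered by $\preceq_H$ within an iteration and that the handoff between consecutive iterations is a $\preceq_S$ step, then conclude by transitivity. Your write-up is in fact somewhat more careful than the paper's (the explicit carry-free-addition argument, the rank-increment check showing the chain is saturated, and the remark about adjoining $(0,\dots,0)$ to reconcile the element count with the corollary's notion of length), but it is not a different proof.
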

\begin{proof}
This algorithm builds a set $\mathcal{S}=\bigcup_{j=1}^{m}\mathcal{S}_j,$ where $\mathcal{S}_j=\{k_j+2^i\;|\; 0\leq i\leq m-j\},$ with $k_j=\sum_{l=1}^{j}2^{m-l+1}\pmod{2^m}.$ The first set $\mathcal{S}_1$ is the set of all possible powers of $2$. The second set is the set of all integers that can be written as sum of $2^{m-1}$ plus any other power of $2$, and so on. Notice that the sets $\mathcal{S}_j$ are pairwise disjoints.
By simple inspection of the sets $\mathcal{S}_j$ we have 
\begin{equation}
 \forall \; 1\leq j\leq m \quad \#\mathcal{S}_j=m+1-j.
\end{equation}

Hence, we obtain
\begin{equation*}
\#\mathcal{S}=\sum_{j=1}^{m}\#\mathcal{S}_j=\sum_{j=1}^{m}(m+1-j)=\binom{m+1}{2}.
\end{equation*}

We still need to check whether $\mathcal{S}$ is a chain in $\Ps$. If we expand each element in $\mathcal{S}_j$ for any fixed $j$, we notice that they are totally ordered with respect to $\preceq_H$. On top of that, the first element of $\mathcal{S}_{j+1}$ is larger than the last element of $\mathcal{S}_j$ with respect to the order $\preceq_S$, which concludes the proof.
\end{proof}


   \medskip
   
\subsubsection{The middle of the poset}
\begin{corollary}Let $m$ be a strictly positive integer. Then the middle of $\Ps$ is either $\binom{m+1}{2}/2$ (when $m$ is a multiple of $4$), or  $\lfloor\binom{m+1}{2}/2\rfloor$ and $\lceil\binom{m+1}{2}/2\rceil$ (otherwise).
\end{corollary}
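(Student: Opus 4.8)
The plan is to derive this corollary directly from the preceding corollary together with the structural theorem (Theorem \ref{thm:pos_sperner}), so that essentially no new work is required. Recall that the maximum length of a chain in $\Ps$ is $\binom{m+1}{2}$, hence the poset has rank $r=\binom{m+1}{2}$ and is partitioned into ranks $\Po_0,\Po_1,\dots,\Po_r$. Since $\Ps$ is rank symmetric and rank unimodal, the sequence $\{\#\Po_i\}$ attains its maximum at the middle index (or the two middle indices), and by the Sperner property the largest antichain is realized there. So the statement to prove is just the arithmetical identification of ``the middle index'' of $\{0,1,\dots,r\}$.

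First I would compute $r=\binom{m+1}{2}=\tfrac{m(m+1)}{2}$ and determine its parity as a function of $m\bmod 4$. A quick case check: if $m\equiv 0\pmod 4$ then $m(m+1)/2$ is even; if $m\equiv 3\pmod 4$ then $m+1\equiv 0\pmod 4$ so again $r$ is even; if $m\equiv 1\pmod 4$ then $r=(m)(m+1)/2$ with $m$ odd and $(m+1)/2$ odd, so $r$ is odd; and if $m\equiv 2\pmod 4$ then $m/2$ is odd and $m+1$ is odd, so $r$ is odd. Thus $r$ is even exactly when $m\equiv 0$ or $3\pmod 4$. When $r$ is even the unique middle rank is $\Po_{r/2}=\Po_{\binom{m+1}{2}/2}$; when $r$ is odd the two middle ranks are $\Po_{\lfloor r/2\rfloor}$ and $\Po_{\lceil r/2\rceil}$, which by rank symmetry have equal cardinality. (The corollary as stated groups $m\equiv 3\pmod 4$ with the ``otherwise'' floor/ceiling case, but since $\lfloor\binom{m+1}{2}/2\rfloor=\lceil\binom{m+1}{2}/2\rceil=\binom{m+1}{2}/2$ when $r$ is even, the two formulations agree and no separate argument is needed.)

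Second, I would invoke Theorem \ref{thm:pos_sperner}: rank symmetry gives $\#\Po_i=\#\Po_{r-i}$, rank unimodality then forces the maximum of the sequence to occur at $i=r/2$ (if $r$ even) or at $i\in\{\lfloor r/2\rfloor,\lceil r/2\rceil\}$ (if $r$ odd), and the Sperner property identifies this maximal rank with a maximum antichain. This is exactly the discussion already recorded after the definition of Sperner in Section \ref{sec:prop_poset}, so the corollary is an immediate specialization with $r=\binom{m+1}{2}$.

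I do not anticipate a genuine obstacle here; the only mild subtlety is bookkeeping the parity of $\binom{m+1}{2}$ and making sure the phrase ``multiple of $4$'' in the statement is interpreted correctly — namely that $\binom{m+1}{2}$ is even precisely for $m\equiv 0,3\pmod 4$, while the floor/ceiling description remains valid (and collapses to a single value) in the even case, so the two clauses of the corollary are consistent. The rest is a one-line appeal to the previously established corollary and Theorem \ref{thm:pos_sperner}.
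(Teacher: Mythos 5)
Your proof is correct and follows the same (essentially only) route the paper intends: the corollary is an immediate consequence of the preceding corollary giving the maximal rank $r=\binom{m+1}{2}$ together with the graded, rank-symmetric and rank-unimodal structure from Theorem \ref{thm:pos_sperner}, and the paper supplies no further argument. Your parity bookkeeping is also right --- $\binom{m+1}{2}$ is even exactly for $m\equiv 0,3\pmod 4$, so the ``multiple of $4$'' clause is slightly understated, but as you observe the floor/ceiling clause collapses to the single value $r/2$ when $m\equiv 3\pmod 4$ (consistent with the single middle rank $4k^2+7k+3$ computed in Proposition \ref{pr:middle_u} for $m=4k+3$), so the statement remains correct as written.
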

The middle of the poset corresponds to the sequence of maximum numbers of subsets of $\{1,2,\dots,m\}$ that share the same sum  (\href{https://oeis.org/A025591}{A025591} \text{ in } \cite{OEIS}) : 
\[1, 1, 1, 2, 2, 3, 5, 8, 14, 23, 40, \dots\]


\begin{proposition}\label{pr:middle_u}
Let $k$ and $m$ be strictly positive integers with $\uv\in \{0,1\}^m$ such that $$\uv=\left\{
\begin{array}{lllll}
(0^{k}&1^{2k}&0^{k})&& m=4k\\
(0^{k+1}&1^{2k}&0^{k})&& m=4k+1\\
(0^{k}&1^{2k}&0^{k+1}&1)& m=4k+2\\
(0^{k}&1^{2k+2}&0^{k+1})&& m=4k+3\\
\end{array}
\right..
$$

Then, both $\uv$ and $\overline{\uv}$ are middle rank elements of $\Ps$. 
\end{proposition}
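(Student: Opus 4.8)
The plan is to verify directly that the explicit vector $\uv$ given in each of the four congruence classes of $m$ modulo $4$ has rank $\rho(\uv)$ equal to the middle value of the poset, namely $\lceil \binom{m+1}{2}/2 \rceil$ (and that $\overline{\uv}$, which by the rank–symmetry already established sits at the complementary rank, lands at $\lfloor \binom{m+1}{2}/2\rfloor$). Since the poset $\Ps$ is graded by $\rho(\uv)=\sum_{i\in\supp{\uv}} i$, and is rank symmetric and rank unimodal by Theorem~\ref{thm:pos_sperner}, every rank level between $0$ and $\binom{m+1}{2}$ is nonempty; hence it suffices to exhibit one vector of the target rank, and the claimed middle-rank membership follows. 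So the whole proof reduces to an arithmetic identity: compute $\sum_{i\in\supp{\uv}} i$ for the stated $\uv$ and check it equals $\lceil\binom{m+1}{2}/2\rceil$.

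First I would set up notation for the four cases and read off the support of $\uv$ as a union of one or two blocks of consecutive integers. For instance, in the case $m=4k$ the support is the contiguous block $\{k+1, k+2, \dots, 3k\}$, so $\rho(\uv)=\sum_{i=k+1}^{3k} i = \tfrac{(3k)(3k+1)}{2} - \tfrac{k(k+1)}{2} = \tfrac{9k^2+3k-k^2-k}{2} = \tfrac{8k^2+2k}{2}=4k^2+k$. On the other side, $\binom{m+1}{2}=\binom{4k+1}{2}=\tfrac{4k(4k+1)}{2}=2k(4k+1)=8k^2+2k$, so $\binom{m+1}{2}/2=4k^2+k$, which matches $\rho(\uv)$ exactly; this confirms the $m=4k$ line and explains why the middle is a single value there. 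The remaining three cases are handled the same way, being careful about the split blocks: for $m=4k+1$ the support is $\{k+2,\dots,3k+1\}$; for $m=4k+2$ it is $\{k+1,\dots,3k\}\cup\{m\}=\{k+1,\dots,3k\}\cup\{4k+2\}$; and for $m=4k+3$ it is $\{k+1,\dots,3k+2\}$. In each case I would compute the sum of the consecutive block(s) via the standard $\sum_{i=a}^{b} i = \tfrac{(a+b)(b-a+1)}{2}$ formula, add the isolated index when present, and compare with $\lceil\binom{m+1}{2}/2\rceil$; the floor/ceiling enters precisely because $\binom{m+1}{2}$ is odd when $m\equiv 1,2\pmod 4$ and even when $m\equiv 0,3\pmod 4$ (checking parity of $\binom{m+1}{2}=\tfrac{m(m+1)}{2}$ is itself a short case split).

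Finally, for $\overline{\uv}$: since $\supp{\overline{\uv}} = \{1,\dots,m\}\setminus\supp{\uv}$, we have $\rho(\overline{\uv}) = \binom{m+1}{2} - \rho(\uv)$, so once $\rho(\uv)=\lceil\binom{m+1}{2}/2\rceil$ is established, $\rho(\overline{\uv}) = \lfloor\binom{m+1}{2}/2\rfloor$ automatically, and both values are (by definition) the middle rank(s) of the poset as identified in the preceding corollary. The main ``obstacle'' is really just bookkeeping: keeping the four cases, their block decompositions, and the parity of $\binom{m+1}{2}$ straight, and confirming that the slightly irregular-looking vector in the $m=4k+2$ case (with its trailing isolated $1$) indeed has the right sum — this is the one place where a naive guess of a single contiguous block would fail, so it deserves an explicit check. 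No deep structural input beyond gradedness and rank-symmetry from Theorem~\ref{thm:pos_sperner} is needed.
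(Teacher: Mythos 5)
Your proposal is correct and follows essentially the same route as the paper: compute $\rho(\uv)=\sum_{i\in\supp{\uv}}i$ case by case, compare with $\binom{m+1}{2}/2$ (taking floor/ceiling when $\binom{m+1}{2}$ is odd, i.e.\ $m\equiv 1,2\pmod 4$), and deduce the claim for $\overline{\uv}$ from $\rho(\uv)+\rho(\overline{\uv})=\binom{m+1}{2}$. The only nitpick is that the stated $\uv$ does not always land on the ceiling (e.g.\ for $m=4k+1$ one gets $4k^2+3k=\lfloor\binom{m+1}{2}/2\rfloor$), but since both the floor and the ceiling are middle ranks this does not affect the argument.
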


Notice that Proposition \ref{pr:middle_u} can be restated in an equivalent form by using $\supp{\uv}$. For example, when $m=4k$ one would rather say that a middle rank element of $\Ps$ is $\uv$ with $\supp{\uv}=\{k+1,\dots,2k\}.$ One can check that $\uv$ is a middle rank element simply by computing the sum of the elements in $\supp{\uv}.$

\begin{proof} 
We prove each possible case separately.
\begin{itemize}
\item $m=4k$ with $k\ge 1.$ In this case the middle rank equals $k(4k+1)=4k^2+k.$ The rank of $\uv=0^{k}1^{2k}0^{k}$ is \[\sum\limits_{i=k+1}^{3k}i=2k^2+2k(2k+1)/2=4k^2+k.\]   
\item $m=4k+1$ with $k\ge 1.$ The middle rank equals $(4k+1)(4k+2)/4=4k^2+3k+1/2.$ So we have two cases, either $4k^2+3k$ or $4k^2+3k+1.$ The rank of $\uv=0^{k+1}1^{2k}0^{k}$ is \[\sum\limits_{i=k+2}^{3k+1}i=4k^2+3k.\] 
\item $m=4k+2$ with $k\ge 1.$ The middle rank equals $(4k+2)(4k+3)/4=4k^2+5k+3/2.$ So we have two cases, either $4k^2+5k+1$ or $4k^2+5k+2.$ The rank of $\uv=0^{k}1^{2k}0^{k+1}1$ is \[4k+1+\sum\limits_{i=k+1}^{3k}i=4k^2+5k+1.\] 
\item$m=4k+3$ with $k\ge 1.$ The middle rank equals $(4k+3)(4k+4)/4=4k^2+7k+3.$  The rank of $\uv=0^{k}1^{2k+2}0^{k+1}$ is \[\sum\limits_{i=k+1}^{3k+2}i=4k^2+7k+3.\] 
\end{itemize}
The fact that $\overline{\uv}$ is also an element of the middle of the poset follows from $\supp{\uv}+\supp{\overline{\uv}}=\binom{m+1}{2}.$
\end{proof}

\medskip

\subsubsection{Maximum length antichain}

Because $\Ps$ is \emph{Sperner}, the maximum length of an antichain is given by the maximum of $\#\Po_i,$ which is achieved by the middle rank. 

Using a well-known theorem by Dilworth \cite{D_1950}, we are able to determine the minimum number of chains in which $\Ps$ can be partitioned.
\begin{theorem}[\cite{D_1950}]\label{thm:dilworth}
The minimum number of chains in which the elements of a poset $\Po$ can be partitioned is equal to the maximum number of elements of an antichain of $\Po$.
\end{theorem}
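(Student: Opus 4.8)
The statement is Dilworth's theorem, and the plan is to establish the two inequalities separately (the posets in question being finite throughout). One direction is immediate: if $\Po=C_1\cup\dots\cup C_k$ is any partition of $\Po$ into chains and $A$ is any antichain, then $\#(A\cap C_i)\le 1$ for every $i$ (two elements of a common chain are comparable, hence cannot both lie in $A$), so $\#A\le k$. Thus every partition of $\Po$ into chains uses at least $\max_A\#A$ chains, and the content of the theorem lies in the converse: exhibiting a partition of $\Po$ into exactly $n:=\max_A\#A$ chains.

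For the converse I would pass to a bipartite matching problem (Fulkerson's argument). Write $\Po=\{v_1,\dots,v_N\}$ and form the bipartite graph $G$ on vertex classes $\{v_1^-,\dots,v_N^-\}$ and $\{v_1^+,\dots,v_N^+\}$, joining $v_i^-$ to $v_j^+$ exactly when $v_i<v_j$ in $\Po$. The key step is the correspondence between matchings of $G$ and partitions of $\Po$ into chains: a matching $M$ yields a partial map $\sigma$ with $\sigma(v_i)=v_j$ whenever $v_i^-v_j^+\in M$; since $M$ is a matching $\sigma$ is injective, and since $\sigma$ is strictly order-increasing it has no cycle, so the functional digraph of $\sigma$ is a disjoint union of directed paths---each a chain by transitivity---with exactly $N-\#M$ of them, one for each element missing from the image of $\sigma$. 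Hence the minimum number of chains covering $\Po$ equals $N-\#M^{*}$ for a maximum matching $M^{*}$.

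Next I would invoke König's theorem to rewrite $\#M^{*}=\#S$ for a minimum vertex cover $S$ of $G$, so that the minimum number of chains is $N-\#S$. Finally I would check that $A:=\{v_i : v_i^-\notin S \text{ and } v_i^+\notin S\}$ is an antichain of $\Po$: if $v_i,v_j\in A$ with $v_i<v_j$, the edge $v_i^-v_j^+$ is present with neither endpoint in $S$, contradicting that $S$ is a vertex cover. Since at most $\#S$ indices $i$ satisfy $v_i^-\in S$ or $v_i^+\in S$, we obtain $\#A\ge N-\#S$, i.e., an antichain whose size is at least the minimum number of chains; combined with the easy inequality this squeezes $\max_A\#A$ to equal the minimum number of chains.

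The main obstacle is the bookkeeping of these two translations---confirming that matchings of $G$ really do encode chain partitions (the injectivity and acyclicity of $\sigma$), and symmetrically that a minimum vertex cover leaves an antichain of exactly the complementary size---whereas König's theorem itself may simply be quoted. A self-contained route avoiding König is Galvin's induction on $\#\Po$: delete a maximal element $a$, apply the inductive hypothesis to $\Po\setminus\{a\}$; if the maximum antichain size drops, append $\{a\}$ as a new chain, and otherwise select in each chain $C_i$ of the covering the highest element $x_i$ lying in some $n$-element antichain, show $\{x_1,\dots,x_n\}$ is itself an antichain, pick $x_{i_0}$ with $a>x_{i_0}$, and recurse on $\Po$ with the chain $\{a\}\cup\{y\in C_{i_0}: y\le x_{i_0}\}$ removed. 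There the delicate step is proving that $\{x_1,\dots,x_n\}$ is an antichain, which rests on the fact that the maximum antichains of a finite poset can be ``joined''.
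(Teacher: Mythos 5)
Your argument is correct. Note, however, that the paper does not prove this statement at all: it is quoted as Dilworth's theorem with a citation to the original 1950 paper, so there is no ``paper proof'' to compare against. What you give is the standard Fulkerson proof: the easy inequality (a chain meets an antichain in at most one element) plus the reduction of chain partitions to matchings in the split bipartite graph, K\H{o}nig's theorem to convert a maximum matching into a minimum vertex cover, and the observation that the elements untouched by the cover form an antichain of size at least $N-\#S$. All the steps check out --- in particular the count of paths in the functional digraph of the partial injection $\sigma$ is indeed $N-\#M$, and the bound $\#A\ge N-\#S$ is right since each cover vertex excludes at most one index. The Galvin induction you sketch as an alternative is also sound, and is the route one would take to keep the proof self-contained without importing K\H{o}nig's theorem. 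Either version would serve as a complete proof of the theorem the paper merely cites; for the purposes of this article the citation is the appropriate choice, since only the statement (applied to the middle rank of $\Ps$) is used.
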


\begin{example}\label{ex:max_antichain} Maximum antichains.
~
\begin{itemize}
\item $m=5$  $$\mathcal{P}_7=\{
[4, 2, 1],
[4, 3],
[5, 2]\}\quad\mathrm{ and }\quad\mathcal{P}_8=\{
[4, 3, 1],
[5, 2, 1],
[5,3]\}.$$ 

\item $m=6$  $$\mathcal{P}_{10}=\{
[4, 3, 2, 1],
[5, 3, 2],
[5, 4, 1],
[6, 3, 1],
[6, 4]\}\quad\mathrm{ and }\quad\mathcal{P}_{11}=\{
[5, 3, 2, 1],
[5, 4, 2],
[6, 3, 2],
[6, 4, 1],
[6, 5]\}.$$

\item $m=7$  $$\mathcal{P}_{14}=\{
[5, 4, 3, 2],
[6, 4, 3, 1],
[6, 5, 2, 1],
[6, 5, 3],
[7, 4, 2, 1],
[7, 4, 3],
[7, 5, 2],
[7, 6, 1]\}.$$

\item $m=8$  \begin{align*}
\mathcal{P}_{18}=\{&
[6, 5, 4, 2, 1],
[6, 5, 4, 3],
[7, 5, 3, 2, 1],
[7, 5, 4, 2],
[7, 6, 3, 2],
[7, 6, 4, 1],
[7, 6, 5],\\
&[8, 4, 3, 2, 1],
[8, 5, 3, 2],
[8, 5, 4, 1],
[8, 6, 3, 1],
[8, 6, 4],
[8, 7, 2, 1],
[8, 7, 3]\}.\\
\end{align*}
\end{itemize}
\end{example}

A consequence of Theorem \ref{thm:dilworth} is that the minimum number of chains in which $\Ps$ can be partitioned equals $\#\Po_{\binom{m+1}{2}/2}.$ Now, since $\Ps$ is isomorphic to $M(n)$ (see \cite{S_91}), in order to determine the elements of $\mathcal{P}_{\binom{m+1}{2}/2}$ we can use an algorithm that generates subsets of $\{1,\dots, m\}$, such that the sum of their elements equals $\binom{m+1}{2}/2$. Example \ref{ex:max_antichain} illustrates this for several values of $m.$ Notice that, the compositions $\uv$ given by Proposition \ref{pr:middle_u} are also elements of these antichains. 

\section{Most Reliable MMNs}\label{sec:applic_poset}
\subsection{Uniformly most reliable networks}

\setcounter{theorem}{36}
\begin{definition}
We say that $\Ns^*\in \NS{n}$ is {\it UMR-MMN} if for any $\Ns\in  \NS{n}$ we have 
\begin{equation}
\Rel(\Ns^*)\ge \Rel(\Ns)\quad \forall \; p\in [0,1].\label{eq:Bosch}
\end{equation} 
\end{definition}

This definition is not identical to that of Boesch et al. \cite{1991_BLS}, as although eq. \eqref{eq:Bosch} is the same, the set of networks is different. For Boesch et al. the domain is represented by the set of simple graphs with $n$ edges and $w$ vertices, while in our case the domain is $\NS{n}.$
  
\begin{theorem}
Let $m$ be a positive integer. Then $\Css^{\uv}$, with $\uv=\left(1,\dots,1\right)\in \{0,1\}^m$, is UMR-MMN in $\NS{2^m}$. 
\end{theorem}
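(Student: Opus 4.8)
The claim is that the all-parallel network $\Css^{\uv}$ with $\uv=(1,\dots,1)$ is UMR among \emph{all} MMNs of size $2^m$, not merely among compositions. The natural strategy is a two-stage argument. First, reduce from arbitrary MMNs to compositions by exploiting the structure already established: the set $\NS{2^m}$ decomposes as $\bigcup_{w\mid 2^m}\NN{w}{2^m/w}$ by Proposition~\ref{pr:param_comp} (and the remark after the convention, which also covers $w=1$ and $l=1$). Within each fixed dimension class $\NN{w}{l}$, Theorem~\ref{thm:order_0} says the $\preceq_M$-maximal element — the one whose matchstick incidence matrix is the all-ones matrix — is the most reliable; but the all-ones matrix is exactly the matchstick matrix of the SoP of width $w$ and length $l$, which by the Proposition following Proposition~\ref{pr:param_comp} is a composition $\Css^{(1^i 0^{m-i})}$ (with $w=2^i$). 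So it suffices to show that among the SoP compositions $\{\Css^{(1^i 0^{m-i})}: 0\le i\le m\}$, the one with $i=m$, i.e.\ $\uv=(1,\dots,1)$, is the most reliable.

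\textbf{Second stage.} Here I would invoke Theorem~\ref{thm:order} (the $\preceq_{SH}$ ordering). The all-ones vector $(1,\dots,1)$ is the maximum element of the poset $\Ps$ (this is recorded in the ``$\Ps$ is bounded'' proposition), so $\vv\preceq_{SH}(1,\dots,1)$ for every $\vv\in\{0,1\}^m$, and hence by Theorem~\ref{thm:order} we get $\Rel(\Css^{\vv};p)\le \Rel(\Css^{(1^m)};p)$ for all $p\in[0,1]$ and every composition $\vv$ — in particular for each SoP vector $1^i 0^{m-i}$. Combining the two stages: for an arbitrary $\Ns\in\NN{w}{l}$ with $w=2^i$, $l=2^{m-i}$, Theorem~\ref{thm:order_0} gives $\Rel(\Ns;p)\le\Rel(\Css^{(1^i0^{m-i})};p)$, and then Theorem~\ref{thm:order} gives $\Rel(\Css^{(1^i0^{m-i})};p)\le\Rel(\Css^{(1^m)};p)$; chaining these and ranging over all divisors $w$ of $2^m$ proves $\Rel(\Ns;p)\le \Rel(\Css^{(1^m)};p)$ for every $\Ns\in\NS{2^m}$, which is the UMR property.

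\textbf{Loose ends.} I would double-check the degenerate dimension cases: when $w=1$ the network is all-series and has reliability $p^{2^m}$, which is visibly dominated everywhere by the all-parallel $1-(1-p)^{2^m}$; when $l=1$ the network \emph{is} the all-parallel composition itself, so the inequality is an equality. For $2\le w,l$ the matchstick-matrix argument applies verbatim. One should also confirm that $\Css^{(1^m)}$ really is the all-parallel network of width $2^m$, which follows from Proposition~\ref{pr:param_comp} ($|\uv|=m$ gives $w=2^m$, $l=1$). I expect no serious obstacle here: the whole point is that the heavy lifting — the duality/matrix machinery (Theorem~\ref{thm:order_0}) and the poset ordering (Theorem~\ref{thm:order}) — has already been done, and the UMR statement is essentially an assembly of these pieces together with the decomposition $\NS{2^m}=\bigcup_{i=0}^m\NN{2^i}{2^{m-i}}$. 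The only mildly delicate point is making sure the reduction ``most reliable in $\NN{w}{l}$ is the SoP'' is correctly phrased as an instance of Theorem~\ref{thm:order_0}, i.e.\ that the all-ones matchstick matrix is simultaneously $\preceq_M$-above every MMN of those dimensions and is realized by a composition.
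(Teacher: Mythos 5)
Your proof is correct, and it is actually more complete than the paper's own argument. The paper disposes of this theorem in one line --- ``this follows from the fact that $(1,\dots,1)$ is the supremum of $\Ps$'' --- which, read literally, only compares $\Css^{(1^m)}$ against other \emph{compositions}, whereas the definition of UMR-MMN quantifies over all of $\NS{2^m}$. Your first stage supplies exactly the missing reduction: within each dimension class $\NN{2^i}{2^{m-i}}$ the SoP has the all-ones matchstick matrix, hence is $\preceq_M$-above every member of that class, so Theorem~\ref{thm:order_0} bounds every MMN by a composition, and then the supremum property of $(1,\dots,1)$ in $\Ps$ (your second stage, which is the paper's entire proof) finishes the job. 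Your handling of the degenerate cases $w=1$ and $l=1$ is also fine (and could be folded into the main argument by noting that the all-series network is itself the composition $\Css^{(0^m)}$). One small slip: the SoP of width $2^i$ and length $2^{m-i}$ is $\Css^{(0^{m-i}1^i)}$, not $\Css^{(1^i0^{m-i})}$ --- the latter is the PoS with no matchsticks (compare Proposition~\ref{pr:sop_ham_pos} and the displayed matrices $M_{PoS}$, $M_{SoP}$). This does not affect your conclusion, since either vector is below $(1,\dots,1)$ under $\preceq_{SH}$, but the intermediate composition you chain through should be the SoP one.
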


\begin{proof}
This follows directly from the fact that $\uv=\left(1,\dots,1\right)\in \{0,1\}^m$ is the supremum of $\Ps$.
\end{proof}
\subsection{Heaviside most reliable networks}

\begin{definition}
We say that $\Ns^*\in \NS{n}$ is {\it HMR-MMN} with $p_0\in (0,1)$ if for any $\Ns\in  \NS{n}$ we have 
\begin{equation}\label{eq:Heavi_MR_1}
\Rel(\Ns^*)\le \Rel(\Ns)\quad \forall \; p\in [0,p_0) 
\end{equation}
and
\begin{equation}\label{eq:Heavi_MR_2}
\Rel(\Ns^*)\ge \Rel(\Ns)\quad \forall \; p\in [p_0,1]. 
\end{equation} 
\end{definition}
This means that an MMN is an HMR-MMN if it is very close to both the minimum of the poset, for a particular range of values, as well as to the maximum of the poset, for the remaining range of values.
\begin{lemma}\label{lem:max_min_MMS}
Let $m$ be a strictly positive integer. For any $\Ns\in \NS{2^m}$ we have
\begin{equation*}\Rel(\Css^{(0,\dots,0)})<\Rel(\Ns)\quad\forall \; p\in (0,1]; 
\end{equation*}
\begin{equation*}\Rel(\Ns)<\Rel(\Css^{(1,\dots,1)})\quad\forall \; p\in [0,1). 
\end{equation*}

\end{lemma}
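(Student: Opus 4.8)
The plan is to establish the two strict inequalities separately, in each case comparing an arbitrary MMN $\Ns\in\NS{2^m}$ with the extreme compositions $\Css^{(0,\dots,0)}$ (the all-series-of-series network, which is the global all-series network of $2^m$ devices in series) and $\Css^{(1,\dots,1)}$ (the global all-parallel network of $2^m$ devices in parallel). The key observation is that for a network of size $n$, the all-series network has reliability polynomial $p^n$, and the all-parallel network has reliability polynomial $1-(1-p)^n=1-q^n$. In the $N_i$-representation of eq. \eqref{rel:N_form} these correspond to the extreme coefficient vectors: all-series has $N_n=1$ and $N_i=0$ for $i<n$, while all-parallel has $N_i=\binom{n}{i}$ for all $i$ (every subset of working devices yields $s,t$-connectedness).

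First I would handle the lower bound. For any $\Ns\in\NS{2^m}$ we always have $N_i(\Ns)\ge 0$, so Proposition \ref{prop:ineq_N_form} already gives $\Rel(\Css^{(0,\dots,0)};p)=p^n\le\Rel(\Ns;p)$. To upgrade this to a \emph{strict} inequality on $(0,1]$, I would argue that any MMN other than the all-series one has a minimal path of length $l<n$, hence $N_l(\Ns)\ge 1>0=N_l(\Css^{(0,\dots,0)})$ for some $l<n$; then for $p\in(0,1)$ the term $N_l(\Ns)p^l(1-p)^{n-l}$ is strictly positive and not present in $p^n$, while at $p=1$ both sides equal $1$ — wait, one must be careful, so instead evaluate the difference $\Rel(\Ns;p)-p^n=\sum_{i<n}N_i(\Ns)p^i(1-p)^{n-i}$, which is a sum of nonnegative terms at least one of which (the $i=l$ term, since $\Css^{(0,\dots,0)}$ is the unique MMN of size $2^m$ that is all-series) is strictly positive for $p\in(0,1)$; at $p=1$ this difference vanishes but that endpoint is excluded. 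Actually the statement claims strictness on all of $(0,1]$, so I would note that at $p=1$ the all-series network is not the issue — re-reading, the first inequality is on $(0,1]$ and indeed at $p=1$ we need $1<\Rel(\Ns;1)$, which is false. So I must instead interpret $\Css^{(0,\dots,0)}$ as genuinely smaller, meaning the claim should be understood with the roles I described; I would simply follow the structure: the difference is a nonnegative polynomial, positive on the open interval, and the claimed half-open interval follows once one checks $\Rel(\Ns;1)=1=\Rel(\Css^{(0,\dots,0)};1)$ forces equality — so the correct reading is that strictness holds on $(0,1)$ and the boundary behaviour is handled by duality with the second inequality.

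For the upper bound I would use the dual argument, invoking Theorem \ref{thm:dual_net} and the fact that the dual of the all-parallel network is the all-series network: $\Rel(\Ns^\bot;p)=1-\Rel(\Ns;1-p)$. Applying the lower bound to $\Ns^\bot$ (which is again an MMN of size $2^m$) gives $\Rel(\Css^{(0,\dots,0)};1-p)\le\Rel(\Ns^\bot;1-p)$, i.e. $(1-p)^n\le 1-\Rel(\Ns;p)$, hence $\Rel(\Ns;p)\le 1-(1-p)^n=\Rel(\Css^{(1,\dots,1)};p)$, with strictness transported from the first part. Alternatively, and more directly, one uses $N_i(\Ns)\le\binom{n}{i}=N_i(\Css^{(1,\dots,1)})$ together with the fact that any MMN other than the all-parallel one has a minimal cut of width $w<n$, so there is some index $i$ with $N_i(\Ns)<\binom{n}{i}$ (e.g. the $i=n-w$ coefficient, since removing all $w$ cut-edges disconnects $S$ from $T$ and the all-parallel network has $w=n$), giving a strictly positive defect term in $\Rel(\Css^{(1,\dots,1)};p)-\Rel(\Ns;p)$ on the relevant interval.

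The main obstacle I anticipate is pinning down \emph{which} coefficient index witnesses strict inequality, uniformly over all $\Ns\in\NS{2^m}$, and verifying that the all-series (resp. all-parallel) network is the \emph{unique} MMN of size $2^m$ attaining the extreme coefficient vector — this uses Proposition \ref{pr:param_comp} / the structure of $\NS{2^m}=\bigcup_{w|n}\NN{w}{n/w}$ to know that any MMN with $w\ge 2$ (resp. $l\ge 2$) has a genuinely shorter minimal path (resp. narrower minimal cut) than $n$, and a counting argument that $N_l(\Ns)\ge 1$ and $N_{n-w}(\Ns)<\binom{n}{n-w}$ in those cases. Once that structural fact is in hand, the analytic part — that a polynomial $\sum_{i}c_i p^i(1-p)^{n-i}$ with $c_i\ge 0$ and some $c_{i_0}>0$ is strictly positive for $p\in(0,1)$ — is immediate, and the behaviour at the endpoints $0$ and $1$ is read off directly.
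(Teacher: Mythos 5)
The paper states this lemma without proof, so there is no reference argument to compare against; judged on its own, your argument is essentially correct and is the natural one. Writing $\Rel(\Ns;p)=\sum_{i}N_i(\Ns)\,p^i(1-p)^{n-i}$ with $n=2^m$, the all-series composition has $N_n=1$ and $N_i=0$ for $i<n$, so $\Rel(\Ns;p)-p^n=\sum_{i<n}N_i(\Ns)\,p^i(1-p)^{n-i}\ge 0$; since $\NS{n}=\bigcup_{w\mid n}\NN{w}{n/w}$ and $\#\NN{1}{n}=\#\NN{n}{1}=1$, any $\Ns$ other than the all-series network lies in some $\NN{w}{n/w}$ with $w\ge 2$, hence has a minimal path of length $l=n/w<n$ and $N_l(\Ns)\ge 1$, which makes the difference strictly positive on $(0,1)$. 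The upper bound is symmetric via the cut of size $w<n$ giving $N_{n-w}(\Ns)<\binom{n}{n-w}$. This correctly resolves the "which index witnesses strictness" worry you raised. Two small points: for the all-parallel network $N_0=0\ne\binom{n}{0}$ (harmless, since $N_0(\Ns)=0$ for every network), and the identity $\Rel(\Ns^{\bot};p)=1-\Rel(\Ns;1-p)$ is not actually stated in the paper, so your direct coefficient argument is preferable to the duality detour.

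The one substantive issue --- which you detected mid-proof but then talked yourself around instead of resolving --- is that the lemma as printed cannot be literally true: every MMN satisfies $\Rel(\Ns;1)=1$ and $\Rel(\Ns;0)=0$, so the first strict inequality fails at $p=1$, the second fails at $p=0$, and both fail when $\Ns$ is itself the extremal composition. The correct statement is strict inequality on the open interval $(0,1)$ for $\Ns$ different from the respective extremal network, which is exactly what your argument proves and is all that the subsequent non-existence theorem for HMR-MMNs requires (any $p_0\in(0,1)$ makes $[0,p_0)$ and $[p_0,1]$ meet $(0,1)$). Your closing remark that "the boundary behaviour is handled by duality with the second inequality" is not right: duality exchanges the $p=1$ degeneracy of one inequality with the $p=0$ degeneracy of the other, so nothing is gained. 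You should simply state the corrected interval and note that the lemma's endpoints are a typo rather than something your proof must accommodate.
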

\begin{theorem}
Let $m$ be a strictly positive integer. Then there is no HMR-MMN for $p_0\in (0,1)$.
\end{theorem}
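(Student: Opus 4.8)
The plan is to argue by contradiction, exploiting the strict inequalities in Lemma~\ref{lem:max_min_MMS}. Suppose $\Ns^*\in\NS{2^m}$ is HMR-MMN for some $p_0\in(0,1)$. By definition, $\Rel(\Ns^*)\le\Rel(\Ns)$ on $[0,p_0)$ and $\Rel(\Ns^*)\ge\Rel(\Ns)$ on $[p_0,1]$ for every competitor $\Ns$. I would test this against the two extreme networks $\Css^{(0,\dots,0)}$ (all-series, the minimum of $\Ps$) and $\Css^{(1,\dots,1)}$ (all-parallel, the maximum of $\Ps$), which both lie in $\NS{2^m}$ by Proposition~\ref{pr:param_comp}.

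First I would take $\Ns=\Css^{(1,\dots,1)}$ and look at the interval $[0,p_0)$, which is nonempty since $p_0>0$. The HMR condition \eqref{eq:Heavi_MR_1} gives $\Rel(\Ns^*;p)\le\Rel(\Css^{(1,\dots,1)};p)$ there, which is consistent with Lemma~\ref{lem:max_min_MMS}, so this does not immediately bite. The real leverage comes from \eqref{eq:Heavi_MR_2}: on $[p_0,1]$ we must have $\Rel(\Ns^*;p)\ge\Rel(\Css^{(1,\dots,1)};p)$. But Lemma~\ref{lem:max_min_MMS} states $\Rel(\Ns)<\Rel(\Css^{(1,\dots,1)};p)$ for \emph{every} $\Ns\in\NS{2^m}$ and every $p\in[0,1)$; in particular this holds for $\Ns=\Ns^*$ at any $p\in[p_0,1)$, which is a nonempty interval since $p_0<1$. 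Thus $\Rel(\Ns^*;p)<\Rel(\Css^{(1,\dots,1)};p)$ while simultaneously $\Rel(\Ns^*;p)\ge\Rel(\Css^{(1,\dots,1)};p)$ on that interval — a contradiction. (One must first dispose of the degenerate possibility $\Ns^*=\Css^{(1,\dots,1)}$: then \eqref{eq:Heavi_MR_1} applied with $\Ns=\Css^{(0,\dots,0)}$ forces $\Rel(\Css^{(1,\dots,1)};p)\le\Rel(\Css^{(0,\dots,0)};p)$ on $[0,p_0)$, contradicting $\Rel(\Css^{(0,\dots,0)};p)<\Rel(\Css^{(1,\dots,1)};p)$ on $(0,1)$ — again from Lemma~\ref{lem:max_min_MMS}.)

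So the proof reduces to one clean application of the strict comparisons against the all-parallel network, with a short case-split to handle $\Ns^*$ being that network itself; symmetrically one could run the argument against $\Css^{(0,\dots,0)}$ on $[0,p_0)$. I expect no genuine obstacle here: the only point requiring care is making sure both test intervals $[0,p_0)$ and $[p_0,1)$ are nonempty, which is exactly why the hypothesis restricts to $p_0\in(0,1)$ rather than allowing the endpoints (indeed $p_0=0$ recovers UMR, which does exist). The whole argument rests on Lemma~\ref{lem:max_min_MMS}, so that lemma — which localizes the extremal behaviour of series and parallel compositions among all MMNs and upgrades the weak inequalities of the poset to strict ones on the open interval — is where the substantive work actually sits; the theorem itself is then essentially immediate.
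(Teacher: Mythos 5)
Your proposal is correct and follows essentially the same route as the paper: both arguments rest entirely on Lemma~\ref{lem:max_min_MMS}, which forces any network satisfying \eqref{eq:Heavi_MR_1} to be the all-series composition and any network satisfying \eqref{eq:Heavi_MR_2} to be the all-parallel one, and these differ for $m\ge 1$. Your version is marginally more careful in explicitly disposing of the degenerate case $\Ns^*=\Css^{(1,\dots,1)}$, but the substance is identical.
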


\begin{proof}
By Lemma \ref{lem:max_min_MMS} we have that the only composition that satisfies eq. \eqref{eq:Heavi_MR_1} is $\Css^{(0,\dots,0)}$. We also have that the only composition that satisfies eq. \eqref{eq:Heavi_MR_2} is $\Css^{(1,\dots,1)}$. Unless $\Css^{(0,\dots,0)}$ equals $\Css^{(1,\dots,1)}$, it is impossible to have an MMN that satisfies both eq. \eqref{eq:Heavi_MR_1} and eq. \eqref{eq:Heavi_MR_2}. So unless $m=0$  there is no HMR-MMN. 
\end{proof}

\subsection{Optimality of MMNs}
\subsubsection{Motivations}
Since HMR-MMNs do not exist, we re-define optimality as follows: establish how close $\Rel(\Ns;p)$ is to $\theta(p-p_0)$. We restrict our search only to square MMNs ($w=l=\sqrt{n}=2^{m/2}$), and support this choice by several arguments. 

One argument is given in \cite{2018_DCHGB_J,2018_DCHGB(1)}, where the authors have proposed several FOMs such as: the steepness of the reliability polynomials, and their variation in a symmetric interval with respect to $p_0=0.5$. Those simulations, as well as our own simulations, have shown that square MMNs come ``closer'' to $\theta(p-0.5)$ that non-square MMNs. Still, these have been verified only for small values of $l$ and $w.$

Another argument is a combinatorial one. Suppose that one would randomly choose from the set of all MMNs of size $n$.  The question one should ask is: Do square MMNs appear with higher probability? 
\begin{proposition}Let $m$ be a strictly positive integer. Then we have 
\begin{equation}
\lim_{m\to\infty}\dfrac{\#\NN{2^{m/2}}{2^{m/2}}}{\#\NS{2^m}}=1.
\end{equation}
\end{proposition}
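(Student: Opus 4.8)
The plan is to show that the number of square matchstick minimal networks of size $2^m$ dominates the total count, using the exact cardinalities already established in equation \eqref{eq:numb_matchstick}. First I would recall that by Proposition \ref{pr:param_comp}'s underlying structure and the decomposition $\NS{2^m}=\bigcup_{w\mid 2^m}\NN{w}{2^m/w}$, together with the lemma giving $|\NN{w}{l}|=2^{(w-1)(l-1)}$, we may write
\begin{equation*}
\#\NS{2^m}=\sum_{i=0}^{m}2^{(2^i-1)(2^{m-i}-1)},
\end{equation*}
since the only divisors of $2^m$ are the powers $2^i$ for $0\le i\le m$, and each such divisor $w=2^i$ forces $l=2^{m-i}$. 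The square case corresponds to $i=m/2$ (assuming $m$ even, which is the implicit hypothesis since $\sqrt{n}=2^{m/2}$ must be an integer), giving the single dominant term $2^{(2^{m/2}-1)^2}$.

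Next I would analyze the exponent $f(i)=(2^i-1)(2^{m-i}-1)=2^m-2^i-2^{m-i}+1$ as a function of $i\in\{0,1,\dots,m\}$. Since $2^i+2^{m-i}$ is minimized exactly at $i=m/2$ (by convexity of $x\mapsto 2^x$, or by AM--GM applied to $2^i$ and $2^{m-i}$ whose product is the constant $2^m$), the exponent $f(i)$ is maximized uniquely at $i=m/2$. Therefore every non-square term has exponent strictly smaller, and I would quantify the gap: for $i\ne m/2$ the difference $f(m/2)-f(i)=2^i+2^{m-i}-2\cdot 2^{m/2}\ge 2^{m/2}$ already for $i=m/2\pm 1$ (since $2^{m/2+1}+2^{m/2-1}-2\cdot 2^{m/2}=2^{m/2-1}$), and the gap grows as $i$ moves further from the center. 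Hence
\begin{equation*}
\frac{\#\NS{2^m}}{\#\NN{2^{m/2}}{2^{m/2}}}=1+\sum_{i\ne m/2}2^{f(i)-f(m/2)}\le 1+m\cdot 2^{-2^{m/2-1}},
\end{equation*}
using that there are at most $m$ remaining terms and each contributes at most $2^{-2^{m/2-1}}$.

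Finally I would let $m\to\infty$: the right-hand side $1+m\cdot 2^{-2^{m/2-1}}\to 1$ since the doubly-exponential decay dominates the linear factor $m$, while the ratio is always at least $1$; squeezing gives the claimed limit. The main obstacle, though a mild one, is getting the bookkeeping of the exponent gaps tight enough — one must be careful that the crude bound "at most $m$ terms, each at most $2^{-2^{m/2-1}}$" is genuinely valid, which requires checking that $f(m/2)-f(i)\ge 2^{m/2-1}$ for all $i\ne m/2$ and not merely for $i$ adjacent to the center; this follows because $2^i+2^{m-i}-2\cdot 2^{m/2}$ is itself increasing in $|i-m/2|$, so the minimum over $i\ne m/2$ is attained at $i=m/2\pm1$. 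A secondary point worth a sentence is the implicit assumption that $m$ is even so that a square MMN of size $2^m$ exists at all; if $m$ is odd the statement should be read along the even subsequence, or the proposition tacitly restricts to even $m$.
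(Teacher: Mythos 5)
Your proof is correct and follows essentially the same route as the paper's: both use the divisor decomposition $\#\NS{2^m}=\sum_{i=0}^{m}2^{(2^i-1)(2^{m-i}-1)}$, identify the square term $i=m/2$ as the unique maximizer of the exponent, and bound the remaining $m$ terms by $m$ times the largest of them (attained at $i=m/2\pm 1$, with exponent gap $2^{m/2-1}$). Your bookkeeping is in fact slightly cleaner than the paper's, whose final simplification contains a slip (it writes $\tfrac{m}{2^{m/2-1}}$ where $\tfrac{m}{2^{2^{m/2-1}}}$ is meant), though the limit is of course still $1$ either way.
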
 
\begin{proof} 
Using a known result about the cardinality of the two sets we have
\begin{align*}
\dfrac{\#\NN{2^{m/2}}{2^{m/2}}}{\#\NS{2^m}}&=\dfrac{2^{(2^{m/2}-1)(2^{m/2}-1)}}{\sum\limits_{i=0}^{m}2^{(2^i-1)(2^{m-i}-1)}} \ge \dfrac{2^{(2^{m/2}-1)^2}}{2^{(2^{m/2}-1)^2}+m\cdot2^{(2^{m/2-1}-1)(2^{m/2+1}-1)}}\\
\dfrac{\#\NN{2^{m/2}}{2^{m/2}}}{\#\NS{2^m}}&\ge\dfrac{1}{1+m\cdot 2^{2^{m/2}-2^{m/2-1}-2^{m/2+1}+2^{m/2}}}=\dfrac{1}{1+\frac{m}{2^{m/2-1}}}.\\
\end{align*}

We used here the fact that the sum can be upper bounded by the middle term plus $m$ times the previous term. In order to verify this, one has to check whether \begin{equation}\forall \; 0\le i\le \frac{m}{2}-1, \; 2^{(2^i-1)(2^{m-i}-1)}\le2^{(2^{m/2-1}-1)(2^{m/2+1}-1)}.\end{equation}
Tacking logarithms of both sides and expanding we obtain
\begin{equation}\forall 0\le i\le \frac{m}{2}-1\;,2^{m-i}+2^i-2^{m/2-1}-2^{m/2+1}\ge 0.
\end{equation}
The left part of the inequality can be viewed as an increasing function of $i$. Since for $i=0$ this is positive, as long as $m\ge 2$, the proof is concluded.  
\end{proof}

\subsubsection{Theoretical results for square MMNs} A first result that we prove involves square PoS, square hammocks and square SoP.

\begin{corollary}[\cite{DCB18}]
Let $m$ be an even positive integer. We have 
\begin{equation*}\Rel\left(\Css^{\left(1^{m/2}0^{m/2}\right)}\right)\le \Rel\left(\Hs{2^{m/2}}{2^{m/2}}\right)\le  \Rel\left(\Css^{\left(0^{m/2}1^{m/2}\right)}\right)
\end{equation*}
while there are $m^2/2$ ranks in $\Ps$ between the square PoS and the square SoP.
\end{corollary}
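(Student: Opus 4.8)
The plan is to split the claimed corollary into its two independent assertions and handle them via results already established in the excerpt. The first assertion is the chain of reliability inequalities $\Rel(\Css^{(1^{m/2}0^{m/2})})\le \Rel(\Hs{2^{m/2}}{2^{m/2}})\le \Rel(\Css^{(0^{m/2}1^{m/2})})$; the second is the combinatorial count that there are $m^2/2$ ranks of $\Ps$ strictly between the square PoS vector $\uv=(1^{m/2}0^{m/2})$ and the square SoP vector $\vv=(0^{m/2}1^{m/2})$.

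For the inequality chain, I would simply invoke Proposition \ref{pr:sop_ham_pos} with the specialization $i=m/2$. Indeed, Proposition \ref{pr:param_comp} tells us that $\Css^{(1^{m/2}0^{m/2})}$ and $\Css^{(0^{m/2}1^{m/2})}$ are both MMNs of width $w=2^{m/2}$ and length $l=2^{m/2}$, matching the dimensions of the square hammock $\Hs{2^{m/2}}{2^{m/2}}$; then Proposition \ref{pr:sop_ham_pos} gives exactly $\Rel(\Css^{(1^{m/2}0^{m/2})})\le \Rel(\Hs{2^{m/2}}{2^{m/2}})\le \Rel(\Css^{(0^{m/2}1^{m/2})})$. (The underlying engine is Theorem \ref{thm:order_0} applied to $M_{PoS}\preceq_M M_{\Hs{}{}}\preceq_M M_{SoP}$, since the PoS incidence matrix is all-zeros and the SoP incidence matrix is all-ones, but that is already packaged in Proposition \ref{pr:sop_ham_pos}.) So this half is essentially a citation.

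For the rank count, I would use the rank function $\rho(\word{x})=\sum_{i\in\supp{\word{x}}}i$ from the grading proposition. The square PoS vector $\uv=(1^{m/2}0^{m/2})$ has support $\{1,2,\dots,m/2\}$, hence $\rho(\uv)=\sum_{i=1}^{m/2}i=\binom{m/2+1}{2}=\tfrac{(m/2)(m/2+1)}{2}$. The square SoP vector $\vv=(0^{m/2}1^{m/2})$ has support $\{m/2+1,\dots,m\}$, hence $\rho(\vv)=\sum_{i=m/2+1}^{m}i=\binom{m+1}{2}-\binom{m/2+1}{2}$. The number of ranks strictly in between is $\rho(\vv)-\rho(\uv)-1$; a direct computation gives $\rho(\vv)-\rho(\uv)=\binom{m+1}{2}-2\binom{m/2+1}{2}=\tfrac{m(m+1)}{2}-\tfrac{(m/2)(m/2+1)\cdot 2}{2}=\tfrac{m^2}{4}+\tfrac{m}{2}\cdot 0$ — I should be careful here: $\tfrac{m(m+1)}{2}-(m/2)(m/2+1)=\tfrac{m^2+m}{2}-\tfrac{m^2/4+m/2}{1}=\tfrac{m^2+m}{2}-\tfrac{m^2+2m}{4}=\tfrac{2m^2+2m-m^2-2m}{4}=\tfrac{m^2}{4}$. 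Hmm, that yields $m^2/4$, not $m^2/2$; the discrepancy presumably comes from whether ``ranks between'' is counted as a difference of ranks or the authors intend a factor related to the two middle-of-poset elements or to counting rank \emph{levels} $\Po_i$ inclusively — I would reconcile this by recomputing carefully and, if the intended reading is $\rho(\vv)-\rho(\uv)$ with a different normalization of $\rho$ (e.g. not subtracting the minimum), stating the count accordingly; regardless, the computation is a one-line arithmetic identity once the rank function is plugged in.

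The main obstacle is purely bookkeeping: making sure the ``number of ranks between the square PoS and square SoP'' is interpreted consistently with the paper's rank function and its convention for counting intermediate levels, so that the stated value ($m^2/2$) is recovered exactly; there is no conceptual difficulty, since both halves reduce to applying Proposition \ref{pr:sop_ham_pos} and evaluating $\rho$ at two explicit vectors.
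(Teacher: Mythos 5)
Your handling of the inequality chain coincides exactly with the paper's own proof, which consists of the single remark that the result follows from Proposition~\ref{pr:sop_ham_pos} by taking $i=m/2$; nothing more is needed for that half. For the rank count, your computation is correct and the discrepancy you flagged is real, not a bookkeeping slip on your side: with the paper's rank function $\rho(\word{x})=\sum_{i\in\supp{\word{x}}}i$ one gets $\rho\bigl((1^{m/2}0^{m/2})\bigr)=\binom{m/2+1}{2}$ and $\rho\bigl((0^{m/2}1^{m/2})\bigr)=\binom{m+1}{2}-\binom{m/2+1}{2}$, whose difference is $\binom{m+1}{2}-2\binom{m/2+1}{2}=m^2/4$, so the number of ranks separating the square PoS from the square SoP is $m^2/4$ (or $m^2/4-1$ if counted strictly between), in no reading $m^2/2$. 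The case $m=4$ checks this against Fig.~\ref{fig:mon_order_4}: $(1100)$ sits at rank $3$ and $(0011)$ at rank $7$, a gap of $4=m^2/4$ rather than $8=m^2/2$. The paper gives no argument at all for the $m^2/2$ figure (its one-line proof covers only the reliability inequalities), so you have not missed a hidden step; you have located what appears to be an error in the statement itself, and your instinct to recompute rather than force the stated value is the right one --- none of the natural conventions (rank difference, strict count, inclusive count) recovers $m^2/2$.
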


This follows directly from Proposition \ref{pr:sop_ham_pos} by taking $i=m/2$.

\begin{corollary}
Let $m$ be an even positive integer, then \begin{equation}\Rel\left(\Css^{\left(1^{m/2-1}0^{m/2-1}10\right)}\right)\le \Rel\left(\Hs{2^{m/2}}{2^{m/2}}\right).\end{equation}
\end{corollary}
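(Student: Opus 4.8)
The plan is to derive the inequality as an immediate specialization of Proposition \ref{pr:sop_comp_ham}, setting $i = m/2$. First I would check that the hypotheses of that proposition are satisfied: it requires $m \ge 3$ and $2 \le i \le m-2$, and with $i = m/2$ these amount to $m \ge 4$, which holds for every even integer $m \ge 4$ (then indeed $2 \le m/2 \le m-2$). Next I would identify the two networks: with $i = m/2$ the composition $\Css^{(1^{i-1}0^{m-i-1}10)}$ is exactly $\Css^{(1^{m/2-1}0^{m/2-1}10)}$, and the hammock $\Hs{2^i}{2^{m-i}}$ is the square hammock $\Hs{2^{m/2}}{2^{m/2}}$. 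By Proposition \ref{pr:param_comp} the composition $\Css^{(1^{m/2-1}0^{m/2-1}10)}$ has width $2^{m/2}$ and length $2^{m/2}$, so it shares the dimensions of the square hammock and the comparison by matchstick incidence matrices underlying Proposition \ref{pr:sop_comp_ham} (through Theorem \ref{thm:order_0}) is legitimate. The conclusion of Proposition \ref{pr:sop_comp_ham} is then precisely the asserted inequality.

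The only instance not covered by this reduction is $m = 2$, which I would dispatch directly. For $m = 2$ the left-hand network is $\Css^{(1,0)}$, which by Proposition \ref{pr:param_comp} is a $2 \times 2$ MMN; moreover it is a PoS, so its matchstick incidence matrix is the $1 \times 1$ zero matrix. Since the zero matrix is $\preceq_M$-below the incidence matrix of any $2 \times 2$ MMN, in particular $\Css^{(1,0)} \preceq_M \Hs{2}{2}$, and Theorem \ref{thm:order_0} yields $\Rel(\Css^{(1,0)}) \le \Rel(\Hs{2}{2})$.

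I do not anticipate a real obstacle: the mathematical content sits entirely in the already-proved Proposition \ref{pr:sop_comp_ham}, and the statement is a corollary in the strict sense. The only point deserving attention is the boundary behaviour of the index constraint — forcing $i = m/2$ into the admissible range $2 \le i \le m-2$ requires $m \ge 4$, so the degenerate case $m = 2$ really does lie outside the scope of Proposition \ref{pr:sop_comp_ham} and needs the short standalone check given above.
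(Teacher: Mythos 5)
Your proof is correct and follows essentially the same route as the paper, which likewise obtains the corollary as a direct specialization of Proposition \ref{pr:sop_comp_ham} (the paper writes the substitution as $i=m/2-1$, but matching the exponents of $1^{i-1}0^{m-i-1}10$ against $1^{m/2-1}0^{m/2-1}10$ shows that $i=m/2$ is the right value, as you have it). Your separate treatment of the boundary case $m=2$, which indeed falls outside the admissible range $2\le i\le m-2$, is a point of care that the paper's one-line justification omits.
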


This is a direct consequence of Proposition \ref{pr:sop_comp_ham} for $i=m/2-1.$ The result represents an improvement, as it reduces the number of ranks to be analyzed from $m^2/2$ to $m^2/2-m/2-1$. This is still a large number of elements. On one hand, since $\Ps$ is unimodal and symmetric it follows that the maximum cardinality rank is given by the middle of the poset. On the other hand, for $\uv=(1^{m/2}0^{m/2}10)$ we have $\rho(\uv)<\binom{m+1}{2}/2<\rho(\overline{\uv})$. 

All of these arguments are supporting our choice to analyze only square compositions in the middle of the poset.

\begin{theorem}[\cite{2013_S}]\label{thm:midd_pos_asymp}
Let $m$ be a strictly positive integer. 
The middle of $\Ps$, has cardinality \begin{equation}
\#\Po_{\binom{m+1}{2}/2}=\sqrt{\dfrac{6}{\pi}}\dfrac{2^m}{m^{3/2}}\left(1+o(1)\right), \text{ when }m\to \infty.\label{eq:asympt_antichain}
\end{equation}
\end{theorem}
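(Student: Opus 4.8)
The idea is to recast the statement as an asymptotic coefficient‑extraction problem and treat it by a saddle‑point estimate on the unit circle (equivalently, a local central limit theorem). Since $\Ps$ is isomorphic to the poset of subsets of $\{1,\dots,m\}$ with rank $\rho(S)=\sum_{i\in S}i$, the number $\#\Po_{N}$ with $N:=\binom{m+1}{2}/2=m(m+1)/4$ counts the subsets of $\{1,\dots,m\}$ whose elements sum to $N$, i.e.\ it is the coefficient of $q^{N}$ in $P_m(q):=\prod_{k=1}^{m}(1+q^{k})$. (When $m\equiv 1,2\pmod 4$ the value $N$ is a half‑integer and, by the symmetry $S\mapsto\{1,\dots,m\}\setminus S$, the two middle ranks have the same size; the argument below applies verbatim to $\lfloor N\rfloor$ after inserting a harmless bounded phase factor $e^{it(N-\lfloor N\rfloor)}$, so I describe only the integer case.) Writing $q=e^{it}$, using $1+e^{ikt}=2e^{ikt/2}\cos(kt/2)$ and $\sum_{k=1}^{m}k/2=N$, Cauchy's formula yields the clean identity
\[ \#\Po_{N}=\frac{1}{2\pi}\int_{-\pi}^{\pi}P_m(e^{it})e^{-itN}\,dt=\frac{2^{m}}{2\pi}\int_{-\pi}^{\pi}\prod_{k=1}^{m}\cos(kt/2)\,dt . \]
Probabilistically, $\#\Po_{N}=2^{m}\,\mathbb{P}(X=N)$ where $X=\sum_{k=1}^{m}k\,\varepsilon_k$ with $\varepsilon_k$ i.i.d.\ Bernoulli$(1/2)$; note $\mathbb{E}X=N$ and $\sigma^{2}:=\operatorname{Var}X=\tfrac14\sum_{k=1}^{m}k^{2}=\tfrac{m(m+1)(2m+1)}{24}\sim m^{3}/12$, so $\sigma\sim m^{3/2}/(2\sqrt3)$.

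\textbf{Central region.} First I would treat $|t|\le\delta_m$ with $\delta_m:=m^{-3/2}\log m$. There $|kt/2|\le m\delta_m/2\to 0$ uniformly in $k\le m$, so $\log\cos(kt/2)=-k^{2}t^{2}/8+O(k^{4}t^{4})$ and hence $\sum_{k=1}^{m}\log\cos(kt/2)=-\tfrac{t^{2}}{2}\sigma^{2}+O(t^{4}m^{5})=-\tfrac{t^{2}}{2}\sigma^{2}+o(1)$ uniformly, because $\delta_m^{4}m^{5}=(\log m)^{4}/m\to 0$. Substituting $t=s/\sigma$ and using $\delta_m\sigma=\Theta(\log m)\to\infty$,
\[ \int_{-\delta_m}^{\delta_m}\prod_{k=1}^{m}\cos(kt/2)\,dt=(1+o(1))\,\frac{1}{\sigma}\int_{-\delta_m\sigma}^{\delta_m\sigma}e^{-s^{2}/2}\,ds=(1+o(1))\,\frac{\sqrt{2\pi}}{\sigma}. \]
Feeding $\sigma=\tfrac{m^{3/2}}{2\sqrt3}(1+o(1))$ into $\tfrac{2^{m}}{2\pi}\cdot\tfrac{\sqrt{2\pi}}{\sigma}=\tfrac{2^{m}}{2\pi}\cdot\tfrac{2\sqrt{6\pi}}{m^{3/2}}(1+o(1))=\sqrt{6/\pi}\,\tfrac{2^{m}}{m^{3/2}}(1+o(1))$ already produces the claimed leading term; it remains only to bound the rest of the circle.

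\emph{The tail estimate is the main obstacle}, since $\prod_{k}|\cos(kt/2)|$ can be close to $1$ whenever $t$ is near a rational multiple of $2\pi$ with small denominator, so a pairing bound using only nearest neighbours is too weak. I would split the tail into two sub‑ranges and use the elementary inequality $\cos^{2}\theta=1-\sin^{2}\theta\le e^{-\sin^{2}\theta}$, so that $\prod_{k=1}^{m}\cos^{2}(kt/2)\le\exp\!\big(-\sum_{k=1}^{m}\sin^{2}(kt/2)\big)$, and then lower‑bound the exponent. For $2\pi/m\le|t|\le\pi$, the identity $\sum_{k=1}^{m}\sin^{2}(kt/2)=\tfrac m2-\tfrac12\operatorname{Re}\sum_{k=1}^{m}e^{ikt}$ together with $\big|\sum_{k=1}^{m}e^{ikt}\big|\le 1/|\sin(t/2)|\le\pi/|t|$ gives $\sum_{k=1}^{m}\sin^{2}(kt/2)\ge\tfrac m2-\tfrac{\pi}{2|t|}\ge m/4$, hence $\prod_{k}|\cos(kt/2)|\le e^{-m/8}$ and this contributes at most $\pi e^{-m/8}$ to the integral. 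For $\delta_m\le|t|\le 2\pi/m$, every $k\le m/2$ satisfies $kt/2\le mt/4\le\pi/2$, so $\sin(kt/2)\ge\tfrac{2}{\pi}\cdot\tfrac{kt}{2}$ and $\sum_{k\le m/2}\sin^{2}(kt/2)\ge\tfrac{t^{2}}{\pi^{2}}\sum_{k\le m/2}k^{2}\ge\tfrac{t^{2}m^{3}}{24\pi^{2}}$, which is $\ge(\log m)^{2}/(24\pi^{2})$ for $|t|\ge\delta_m$; thus $\prod_{k}|\cos(kt/2)|\le\exp\!\big(-(\log m)^{2}/(24\pi^{2})\big)$ is superpolynomially small in $m$, and since this range has length $O(1/m)$ its contribution is negligible. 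Both tails are $o(m^{-3/2})$, so combining them with the central estimate gives $\#\Po_{N}=\tfrac{2^{m}}{2\pi}\big(\tfrac{\sqrt{2\pi}}{\sigma}(1+o(1))+o(m^{-3/2})\big)=\sqrt{6/\pi}\,\tfrac{2^{m}}{m^{3/2}}(1+o(1))$.

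Alternatively one may simply invoke a local limit theorem for sums of bounded independent integer‑valued random variables of lattice span $1$: the uniform tail bound on the characteristic function $\prod_{k}|\cos(kt/2)|$ established above is exactly the hypothesis that must be checked, and the variance computation supplies the constant. Either way, the subtle point is the uniform control of $\prod_{k}|\cos(kt/2)|$ for $t$ bounded away from $0$; the rest is a routine Gaussian computation. (The resulting sequence is, as expected, the growth of the largest level of the poset of subsets of $\{1,\dots,m\}$ by sum, consistent with OEIS \href{https://oeis.org/A025591}{A025591}.)
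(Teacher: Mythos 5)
Your outline is essentially correct, but note that the paper itself contains no proof of this theorem: it is imported verbatim from the cited reference \cite{2013_S} (Sullivan's proof of the Andrica--Tomescu conjecture on the maximal coefficient of $\prod_{k=1}^{m}(1+q^{k})$, OEIS A025591). So there is no internal argument to compare against; what you have written is, in substance, the standard saddle-point/local-CLT proof that the cited source also follows, starting from the same identity $\#\Po_{N}=\tfrac{2^{m}}{2\pi}\int_{-\pi}^{\pi}\prod_{k=1}^{m}\cos(kt/2)\,dt$. Your variance computation, the central-region expansion $\sum_{k}\log\cos(kt/2)=-\tfrac{t^{2}}{2}\sigma^{2}+O(t^{4}m^{5})$ with $\delta_m=m^{-3/2}\log m$, and both tail bounds check out (the bound $\bigl|\sum_{k\le m}e^{ikt}\bigr|\le \pi/|t|$ correctly kills $2\pi/m\le|t|\le\pi$, and the quadratic lower bound on $\sum_{k\le m/2}\sin^{2}(kt/2)$ handles $\delta_m\le|t|\le 2\pi/m$); the constant assembles to $\sqrt{6/\pi}$ as claimed, and the half-integer case is correctly reduced by complementation. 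Two cosmetic points: since $\prod_k\cos^2\le e^{-\sum_k\sin^2}$ gives $\prod_k|\cos|\le e^{-\frac12\sum_k\sin^2}$, your intermediate-range bound should read $\exp(-(\log m)^{2}/(48\pi^{2}))$ rather than $24\pi^{2}$ (immaterial, still superpolynomially small), and in the central region you should state explicitly that the $o(1)$ error in the exponent is uniform in $t$ before pulling it out of the integral. With those details written out this would be a complete, self-contained proof, which is more than the paper provides.
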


A direct consequence of Theorem \ref{thm:midd_pos_asymp} is that searching for HMR compositions requires computing the reliability polynomials for only $n/\log^{3/2}(n)$ compositions. Here, since we restricted the study to square MMNs, we need to determine how many compositions in the middle of $\Ps$ are square. 

Our simulations are supporting the intuition that all the compositions in the middle of the poset are either square or close to square. This was verified for $6\leq m\leq 13$. The simulations also showed that roughly half of the compositions in the middle of the poset are square. We conjecture this to be true in general.
\begin{corollary}\label{cor:47}
Let $m$ be a strictly positive integer. When $m\to \infty$ we have
\begin{equation}\dfrac{\#\left(\mathcal{P}_{\binom{m+1}{2}/2}\cap\CN{2^{m/2}}{2^{m/2}}\right)}{\#\CN{2^{m/2}}{2^{m/2}}}=\sqrt{\dfrac{3}{4}}\dfrac{1}{m}\left(1+O\left(\dfrac{1}{m}\right)\right).\end{equation}
\end{corollary}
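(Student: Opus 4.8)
The plan is to combine the exact asymptotic count of square compositions in $\CN{2^{m/2}}{2^{m/2}}$ with the asymptotic size of the middle rank from Theorem~\ref{thm:midd_pos_asymp}, using the simulation-supported claim that (asymptotically) half of the middle-rank elements are square. First I would note that by the Lemma following eq.~\eqref{eq:numb_matchstick}, $\#\CN{2^{m/2}}{2^{m/2}}=2^{(2^{m/2}-1)^2}$; however, that is the count of \emph{all} MMNs of those dimensions, not of compositions, so instead I would use that among compositions $\Css^{\uv}\in\CS{2^m}$, Proposition~\ref{pr:param_comp} gives $\Css^{\uv}\in\CN{2^{m/2}}{2^{m/2}}$ exactly when $|\uv|=m/2$, hence $\#\CN{2^{m/2}}{2^{m/2}}=\binom{m}{m/2}$ when $\CN{}{}$ is understood as the set of square compositions. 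By Stirling, $\binom{m}{m/2}=\sqrt{2/(\pi m)}\,2^m\,(1+O(1/m))$.

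Next I would invoke Theorem~\ref{thm:midd_pos_asymp}, which gives
\[
\#\Po_{\binom{m+1}{2}/2}=\sqrt{\tfrac{6}{\pi}}\,\tfrac{2^m}{m^{3/2}}\,(1+o(1)).
\]
The conjecture/simulation fact that roughly half of the middle-rank elements are square then yields
\[
\#\left(\mathcal{P}_{\binom{m+1}{2}/2}\cap\CN{2^{m/2}}{2^{m/2}}\right)=\tfrac12\sqrt{\tfrac{6}{\pi}}\,\tfrac{2^m}{m^{3/2}}\,(1+o(1)).
\]
Dividing this by $\#\CN{2^{m/2}}{2^{m/2}}=\binom{m}{m/2}$, the $2^m$ factors cancel and one is left with
\[
\frac{\tfrac12\sqrt{6/\pi}\,m^{-3/2}}{\sqrt{2/(\pi m)}}\,(1+O(1/m))
=\tfrac12\sqrt{\tfrac{6/\pi}{2/\pi}}\,m^{-3/2+1/2}\,(1+O(1/m))
=\tfrac12\sqrt{3}\,\tfrac{1}{m}\,(1+O(1/m)),
\]
and since $\tfrac12\sqrt3=\sqrt{3/4}$, this is precisely $\sqrt{3/4}\,\tfrac1m\,(1+O(1/m))$, as claimed.

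The main obstacle is justifying the factor $1/2$, i.e.\ that asymptotically half the partitions of $\{1,\dots,m\}$ summing to $\binom{m+1}{2}/2$ have exactly $m/2$ parts (equivalently, that half the weight-$(m/2)$ binary vectors of length $m$ are in the middle rank). This is stated only as a conjecture verified for $6\le m\le 13$ in the paragraph preceding the corollary, so strictly the corollary is \emph{conditional} on that conjecture; I would state it as such, or alternatively try to prove the $1/2$ claim via a local central limit theorem for the rank statistic $\rho(\uv)=\sum_{i\in\supp\uv}i$ conditioned on $|\uv|=m/2$ — the conditional distribution of $\rho$ concentrates around its mean $\tfrac12\binom{m+1}{2}$ with spread $\Theta(m^{3/2})$, so the point mass at the exact middle is $\Theta(m^{-3/2})$ of $\binom{m}{m/2}$, which would give the numerator directly and bypass the need to compare against $\#\Po_{\binom{m+1}{2}/2}$. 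The remaining steps (Stirling expansions, cancellation of $2^m$) are routine and only require care in tracking the $O(1/m)$ error terms, which I would carry through symbolically rather than numerically.
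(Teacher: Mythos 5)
Your proposal is correct and follows essentially the same route as the paper: the numerator is estimated from Theorem~\ref{thm:midd_pos_asymp} together with the (simulation-supported) assumption that half the middle-rank elements are square, and the denominator is $\binom{m}{m/2}$ expanded by Stirling, after which the $2^m$ factors cancel to give $\tfrac12\sqrt{3}\,/m$. Your observation that the statement is therefore conditional on the unproven ``half are square'' claim applies equally to the paper's own proof, which invokes the same assumption.
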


\begin{proof}
 To estimate the numerator we use Theorem \ref{thm:midd_pos_asymp} with the assumption that half of the elements in the middle of the poset are square MMNs. For the denominator, when $m\to \infty$ we have
\[\binom{m}{m/2}=\sqrt{\dfrac{2}{m\pi}}2^m\left(1-O\left(\dfrac{1}{m}\right)\right), \]
\end{proof}

A straightforward interpretation of Corollary \ref{cor:47} is that the ratio of the number of square compositions in the middle of the poset over the number of all square compositions decreases as $\log(n).$
\subsubsection{Simulation results}
\medskip

For several even values of $m$ we have computed the cardinality of the following sets: 
\begin{itemize}
\item the set of compositions;
 \item the set of square compositions;
 \item the middle of the $\Ps$;
 \item the set of square compositions in the middle of $\Ps$.
\end{itemize}
 These results can be seen in Table \ref{tab:fin}~.
Our successive optimizations have reduced the cardinality of the sets to be analyzed, in particular, when $m=12$ there are $58$ square compositions in the middle of the poset out of:
\begin{itemize}
\item $4096$ possible compositions ($\sim 1.4\%$);
\item 924 square compositions ($\sim 6.2\%$);
\item 124 compositions in the middle of the poset ($\sim 46.7\%$).
\end{itemize}

We can use duality in order to decrease even further the size of the sets \textit{by a factor of} 2. More exactly, if $\uv\in \mathcal{P}_i$ then we know that $\overline{\uv}\in \mathcal{P}_{n-i}.$ 
With all of these optimizations at hand we have computed all the square compositions in the middle of $\Ps$ for $m=6$. We selected half of them by duality, and have recovered the same results as in \cite{2018_DCHGB_J}. These were obtained by computing the reliability polynomials of only 3 MMNs instead of $64$ as in \cite{2018_DCHGB_J}.

\begin{table}[!h]
\centering
\caption{Cardinality of the set of compositions, square compositions, the middle of $\Ps$, \\square compositions in the middle of $\Ps$, and the ratio of square compositions in the middle of  $\Ps$ over all square compositions.}\label{tab:fin}
\resizebox{\textwidth}{!}{
\begin{tabular}{|c||c|c||c|c||c|}
\hline
$m$&$\#\CS{2^m}$&$\#\CN{2^{m/2}}{2^{m/2}}$&$\#\mathcal{P}_{\binom{m+1}{2}/2}$&$\#\left(\mathcal{P}_{\binom{m+1}{2}/2}\cap\CN{2^{m/2}}{2^{m/2}}\right)$&$\dfrac{\#\left(\mathcal{P}_{\binom{m+1}{2}/2}\cap\CN{2^{m/2}}{2^{m/2}}\right)}{\#\CN{2^{m/2}}{2^{m/2}}}$\\
\hline\hline
4&16&6&2&2&1\\
\hline
6&64&20&$\{5,5\}$&$\{3,3\}$&0.3\\
\hline
8&256&70&14&8&0.11\\
\hline
10&1024&252&$\{40,40\}$&$\{20,20\}$&0.16\\
\hline
12&4096&924&124&58&0.06\\
\hline
\end{tabular}
}
\end{table}

\section{Conclusions and Perspectives}

In this article, we have described the structure of a poset on the set of compositions of series and parallel two-terminal networks. We have used this structure to derive results on the existence of UMR-MMNs and HMR-MMNs. 
 
There are several directions for extending and improving on the results reported here. The first one is related to the poset of reliability for the set of all MMNs. We have set up here the starting point by defining $\preceq_M$. By means of this large poset we are working on a formal proof that hammocks are the closest MMNs to $\theta(p-0.5).$ 
The second one pertains to other forms of symmetries that could potentially reduce the computations for finding those network closest to $\theta(p-0.5).$ It is to be mentioned that different networks might lead to identical reliability polynomials, e.g., if we swap the  two terminals we obtain different networks having identical polynomials. Also a finer ordering of the reliability polynomials of the compositions would enable a more efficient algorithm for finding the optimal networks. 
%
\begin{appendix}
\section{Appendix}\label{app:A}

\begin{proposition}\label{pr:order_1}
Let $\uv$ and $\vv$ be two binary vectors of size $m.$ Then we have
$\uv\preceq_S\vv\Rightarrow \uv\le \vv.$
\end{proposition}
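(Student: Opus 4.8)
The plan is to reduce everything to a single bit–flip and then exploit the monotonicity of the two building–block functions under composition. Write $f_0(p)=p^2$ and $f_1(p)=1-(1-p)^2$ for the reliability polynomials of $\Css^{(0)}$ and $\Css^{(1)}$, so that by Theorem~\ref{thm:reliab_comp} one has $\Rel(\Css^{\uv};p)=f_{u_1}\circ\cdots\circ f_{u_m}(p)$ for every $\uv\in\{0,1\}^m$ and every $p\in[0,1]$.

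First I would record three elementary facts, all valid on $[0,1]$: (i) $f_0$ and $f_1$ map $[0,1]$ into $[0,1]$; (ii) both are non-decreasing there, since their derivatives $2p$ and $2(1-p)$ are non-negative on $[0,1]$; and (iii) $f_0(p)\le f_1(p)$, because $f_1(p)-f_0(p)=2p(1-p)\ge 0$. An immediate consequence of (i) and (ii) is that any composition $f_{c_1}\circ\cdots\circ f_{c_r}$ of such functions is again a non-decreasing self-map of $[0,1]$.

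Next I would prove the case where $\supp{\uv}$ is obtained from $\supp{\vv}$ by deleting a single index $k$, i.e.\ $u_k=0$, $v_k=1$ and $u_j=v_j$ for all $j\neq k$. Fix $p\in[0,1]$ and put $q=f_{u_{k+1}}\circ\cdots\circ f_{u_m}(p)$, which lies in $[0,1]$ by (i). By (iii), $f_0(q)\le f_1(q)$, and both values lie in $[0,1]$. Applying the non-decreasing map $g:=f_{u_1}\circ\cdots\circ f_{u_{k-1}}$ (monotone by (ii)) yields $g(f_0(q))\le g(f_1(q))$, which is precisely $\Rel(\Css^{\uv};p)\le\Rel(\Css^{\vv};p)$. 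Since $p\in[0,1]$ was arbitrary, this gives $\uv\le\vv$ in the sense of \eqref{eq:ord_def_1}.

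Finally, for general $\uv\preceq_S\vv$ I would enumerate the indices in $\supp{\vv}\setminus\supp{\uv}$ and flip them to $0$ one at a time, producing a finite chain of binary vectors running from $\vv$ down to $\uv$; applying the single–flip case to each consecutive pair and using transitivity of the pointwise order $\le$ closes the argument (the case $\supp{\vv}\setminus\supp{\uv}=\emptyset$, i.e.\ $\uv=\vv$, being trivial). The only mildly delicate point is item (ii)–(iii) combined with the observation that order is preserved only because every intermediate value stays inside $[0,1]$, where monotonicity of $f_0$ and $f_1$ holds; once that is in place the rest is routine.
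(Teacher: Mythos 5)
Your proof is correct and follows essentially the same route as the paper: both rest on the facts that $f_0,f_1$ are non-decreasing self-maps of $[0,1]$ with $f_0\le f_1$ pointwise, the paper packaging this into Lemma~\ref{lem:compo_bhattacharya} (applied once to the full compositions), while you unroll the same induction into single bit-flips chained by transitivity. No substantive difference.
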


In order to prove this proposition we need the following lemma.
\begin{lemma} \label{lem:compo_bhattacharya}Let $s$ be a strictly positive integer and for $i\in\{1,\dots,s\}$,
  let $l_i, l_i^*$ be increasing functions from $[0,1]\to [0,1]$ such
  that $\forall p\in [0,1], \quad l_i^*(p)\le l_i(p)$.  Let
  $f=l_1\circ\dots\circ l_s$ and $f^*=l_1^*\circ\dots\circ
  l_s^*$, then $\forall p\in [0,1], \quad f^*(p)\le f(p).$
\end{lemma}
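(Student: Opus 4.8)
The plan is to prove Lemma \ref{lem:compo_bhattacharya} by induction on the number of compositions $s$, exploiting the fact that composition of increasing functions preserves pointwise inequalities in a controlled way.

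\textbf{Base case.} For $s=1$ the statement is exactly the hypothesis $l_1^*(p) \le l_1(p)$ for all $p \in [0,1]$, so there is nothing to prove.

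\textbf{Inductive step.} Suppose the claim holds for $s-1$ compositions. Write $f = l_1 \circ g$ and $f^* = l_1^* \circ g^*$, where $g = l_2 \circ \dots \circ l_s$ and $g^* = l_2^* \circ \dots \circ l_s^*$. By the induction hypothesis applied to the $s-1$ functions $l_2,\dots,l_s$ and $l_2^*,\dots,l_s^*$, we have $g^*(p) \le g(p)$ for all $p \in [0,1]$; note also that $g$ and $g^*$ map $[0,1]$ into $[0,1]$ and are increasing, being compositions of such functions. Now fix $p \in [0,1]$ and estimate in two moves. First, since $l_1^*(x) \le l_1(x)$ for every $x \in [0,1]$, apply this at $x = g^*(p)$ to get $l_1^*(g^*(p)) \le l_1(g^*(p))$. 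Second, since $g^*(p) \le g(p)$ and $l_1$ is increasing, we get $l_1(g^*(p)) \le l_1(g(p))$. Chaining these, $f^*(p) = l_1^*(g^*(p)) \le l_1(g^*(p)) \le l_1(g(p)) = f(p)$, which is the desired inequality. This completes the induction and hence the proof.

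The argument is essentially a routine ``telescoping'' of the pointwise bounds through the composition, and the only genuine point requiring care is that each intermediate function $g^*$ takes values in $[0,1]$ so that the hypothesis $l_1^*(x) \le l_1(x)$ can legitimately be invoked at the argument $x = g^*(p)$; this is automatic because each $l_i, l_i^*$ maps $[0,1]$ to $[0,1]$. With the lemma in hand, Proposition \ref{pr:order_1} follows immediately: by Theorem \ref{thm:reliab_comp}, $\Rel(\Css^{\uv};p)$ is the composition of the functions $\Rel(\Css^{(u_i)};\cdot)$, and $\uv \preceq_S \vv$ means $\supp{\uv} \subseteq \supp{\vv}$, so at each coordinate either $u_i = v_i$ or $u_i = 0, v_i = 1$; since $\Rel(\Css^{(0)};p) = p^2 \le 1-(1-p)^2 = \Rel(\Css^{(1)};p)$ on $[0,1]$ and both are increasing, Lemma \ref{lem:compo_bhattacharya} gives $\Rel(\Css^{\uv};p) \le \Rel(\Css^{\vv};p)$ for all $p$, i.e.\ $\uv \le \vv$.
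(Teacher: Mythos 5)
Your proof is correct and follows exactly the route the paper intends: the paper simply states that the lemma ``can be easily proved by induction,'' and your argument supplies that induction in full, peeling off the outermost function and chaining the pointwise bound $l_1^*\le l_1$ with the monotonicity of $l_1$ applied to $g^*(p)\le g(p)$. The care you take to note that $g^*(p)$ stays in $[0,1]$ is the only nontrivial checkpoint, and you handle it properly.
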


This lemma can be easily proved by induction. With this result at hand we can prove Proposition \ref{pr:order_1}.

\begin{proof}[Proof of Proposition \ref{pr:order_1}]
First notice that $\Rel(\Css^{(0)})\leq \Rel(\Css^{(1)}).$ Then let $\uv$ and $\vv$ such that $\supp{\uv}\subset \supp{\vv}.$  By Lemma \ref{lem:compo_bhattacharya} the result holds.
\end{proof}

\begin{proposition}\label{pr:order_2}
Let $\uv$ and $\vv$ be two binary vectors of size $m$ with $|\uv|=|\vv|$. Then we have
$\uv\preceq_H\vv\Rightarrow\uv\le \vv.$
\end{proposition}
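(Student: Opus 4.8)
The plan is to reduce the case $|\uv|=|\vv|$ to the previously established order $\preceq_S$ by ``factoring out'' the agreement between the two supports. Write $\supp{\uv}=\{s_1<\dots<s_l\}$ and $\supp{\vv}=\{t_1<\dots<t_l\}$ with $s_i\le t_i$ for all $i$. The key observation is that $\preceq_H$ is generated by elementary covering moves: $\vv$ can be reached from $\uv$ by a finite sequence of steps, each of which shifts a single entry of the support up by one position into an adjacent empty slot (this is exactly the covering relation for $\preceq_H$ identified in the proof that $\rho$ is a rank function). By transitivity of $\le$ (which is clear, being defined pointwise), it suffices to prove the statement for one such elementary move. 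So I would reduce to: $\supp{\uv}$ and $\supp{\vv}$ differ only in that some index $a\in\supp{\uv}$ is replaced by $a+1\in\supp{\vv}$, with $a+1\notin\supp{\uv}$ and $a\notin\supp{\vv}$.

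Next I would exploit Theorem \ref{thm:reliab_comp}, which expresses $\Rel(\Css^{\word{x}};p)$ as the composition $\Rel(\Css^{(x_1)})\circ\dots\circ\Rel(\Css^{(x_m)})$ of the two basic increasing functions $g_0(p)=p^2$ and $g_1(p)=1-(1-p)^2$. For an elementary move affecting only positions $a$ and $a+1$, the two composition chains for $\uv$ and $\vv$ agree in all coordinates except that at positions $(a,a+1)$ one has the block $g_1\circ g_0$ and the other has $g_0\circ g_1$ (with all coordinates $u_j=v_j$ for $j\ne a,a+1$; in particular $u_a=1,u_{a+1}=0$ becomes $v_a=0,v_{a+1}=1$). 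By Lemma \ref{lem:compo_bhattacharya} (monotonicity of composition chains), it is then enough to prove the two-variable inequality $g_1\circ g_0(p)\le g_0\circ g_1(p)$ for all $p\in[0,1]$, and similarly one must check that composing on the left and right by arbitrary increasing functions preserves the inequality — which is immediate since composition of increasing functions is monotone in each slot. So the whole argument collapses to verifying one explicit polynomial inequality.

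The remaining step is the concrete computation: $g_1(g_0(p))=1-(1-p^2)^2$ and $g_0(g_1(p))=\bigl(1-(1-p)^2\bigr)^2=(2p-p^2)^2$. Expanding, $g_1\circ g_0(p)=2p^2-p^4$ and $g_0\circ g_1(p)=4p^2-4p^3+p^4$, so $g_0\circ g_1(p)-g_1\circ g_0(p)=2p^2-4p^3+2p^4=2p^2(1-p)^2\ge 0$ on $[0,1]$. This is where the hypothesis that we shift \emph{up} (i.e.\ $s_i\le t_i$) is used: moving a ``parallel'' composition to act on the inside rather than the outside increases reliability. This inequality $g_1\circ g_0\le g_0\circ g_1$ is also the natural base case one would expect — it says that ``series-of-parallel dominates parallel-of-series'' at the smallest scale, consistent with Proposition \ref{pr:sop_ham_pos}.

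\textbf{Main obstacle.} The only genuinely delicate point is justifying the reduction to a single elementary move: one must be sure that $\uv\preceq_H\vv$ really does decompose as a chain $\uv=\word{w}^{(0)}\preceq_H\word{w}^{(1)}\preceq_H\dots\preceq_H\word{w}^{(r)}=\vv$ of covering steps, each of which shifts exactly one support element up by one into a vacant adjacent position without ever colliding with another support element. This is a purely combinatorial fact about the poset $\preceq_H$ (it is the statement that $\preceq_H$ is graded with the rank function $\sum_{i\in\supp{\uv}}i$, already proved in the excerpt), and can be shown by induction on $\rho(\vv)-\rho(\uv)$: pick the largest index $i$ with $s_i<t_i$, and argue that the slot $s_i+1$ is empty in the current configuration (since all larger support elements already match $\vv$ and hence sit at positions $\ge t_{i+1}>t_i>s_i$, while $s_i+1\le t_i$). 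Shifting $s_i$ to $s_i+1$ is then a legal covering move decreasing the rank gap by one, and one recurses. Everything after this reduction is the routine polynomial check above together with a direct appeal to Lemma \ref{lem:compo_bhattacharya}.
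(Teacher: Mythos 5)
Your proof is correct, and it follows the same overall skeleton as the paper's: write the reliability as a composition chain via Theorem \ref{thm:reliab_comp}, decompose $\uv\preceq_H\vv$ into single-support-element moves, localize each move to a block of the chain using Lemma \ref{lem:compo_bhattacharya}, and settle the block by a width-$2$ SoP-versus-PoS comparison. The two points where you diverge are worth recording. First, your elementary step is a unit covering move (shift one support index up by exactly one into a vacant adjacent slot), which buys you the fully explicit base case $g_0\circ g_1-g_1\circ g_0=2p^2(1-p)^2\ge 0$, but costs you the combinatorial fact that $\preceq_H$ decomposes into such moves; you correctly flag this as the delicate point, and your induction (shift the largest mismatched $s_i$ into the slot $s_i+1\le t_i<t_{i+1}$, which is vacant) is sound. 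The paper sidesteps this entirely by letting a single support element jump an arbitrary distance in one step (Lemma \ref{lem:pos_bit}): it interpolates through the vectors $\uv^{(*i)}$ with $\supp{\uv^{(*i)}}=\{j_1,\dots,j_i,k_{i+1},\dots,k_s\}$, so only $s$ steps are needed and no gradedness argument is invoked; the corresponding block inequality $\Rel(\Css^{(1)})\circ\Rel(\Css^{(0)})^{k}\le\Rel(\Css^{(0)})^{k}\circ\Rel(\Css^{(1)})$ is then deduced from Theorem \ref{thm:order_0} (PoS $\preceq_M$ SoP for $w=2$, $l=2^{k}$) rather than by expansion. Your route is more self-contained at the base case; the paper's is shorter on the combinatorial side because its intermediate vectors never need the moved element to land in an adjacent empty position.
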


 We will first prove a slightly weaker claim which provides a building block for the final proof.

\begin{lemma}\label{lem:pos_bit}
  Let $1\leq s<m$ and $\uv\in \{0,1\}^m$ be such that $\supp{\uv}=\{{j_1},\dots ,j_{s}\}$ with $1\le j_1<\dots<j_s\le
  m.$ Now let $\uv^*$ be such that $\supp{\uv^*}=\{{j_1},\dots,j_i,j^{*}_{i+1},j_{i+2},\dots,j_{s}\}$ with $j_i\le j^*_{i+1}\le j_{i+1}$. Then $\uv^*\preceq_H \uv$ and 
  $\uv^*\leq \uv.$
 \end{lemma}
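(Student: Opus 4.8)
\textbf{Proof plan for Lemma \ref{lem:pos_bit}.}

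The plan is to reduce the general statement to the single-composition case and then invoke Lemma \ref{lem:compo_bhattacharya}. First I would observe that since $\uv^*$ and $\uv$ agree outside the single coordinate block $\{j_i,\dots,j_{i+1}\}$, the reliability polynomials $\Rel(\Css^{\uv})$ and $\Rel(\Css^{\uv^*})$ are obtained from the \emph{same} outer composition of functions, differing only in one ``middle'' factor: namely $\Rel(\Css^{\uv})=g\circ h_{j^{}_{i+1}}\circ k$ and $\Rel(\Css^{\uv^*})=g\circ h_{j^{*}_{i+1}}\circ k$, where $g$ collects the compositions coming from coordinates $1,\dots,i$ (these are identical for both vectors), $k$ collects those from coordinates $i+2,\dots,m$ (also identical), and the middle factor differs because moving the $(i{+}1)$-st support element from position $j_{i+1}$ down to $j^*_{i+1}$ changes which of the intermediate coordinates are $0$'s versus $1$'s. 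By Theorem \ref{thm:reliab_comp} each of $g$, $k$ and the two candidate middle factors is an increasing function $[0,1]\to[0,1]$, being a composition of $p\mapsto p^2$ and $p\mapsto 1-(1-p)^2$.

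The key step is therefore to show that the middle factor associated with $j^*_{i+1}$ is pointwise $\le$ the one associated with $j_{i+1}$. Writing $t=j_{i+1}-j_i$ and $t^*=j^*_{i+1}-j_i$ (so $0\le t^*\le t$) and $r=j_{i+2}-j_{i+1}$, etc., the relevant middle factor for $\uv$ has the shape $\underbrace{S\circ\dots\circ S}_{a}\circ P\circ\underbrace{S\circ\dots\circ S}_{b}$ where $S(p)=p^2$ is the series map, $P(p)=1-(1-p)^2$ is the parallel map, $a$ is the number of $0$-coordinates strictly between $j_i$ and $j_{i+1}$ that lie to the left of position $j^*_{i+1}$... more simply: both middle factors are compositions over the coordinates $j_i+1,\dots,j_{i+1}$, a string of $0$'s with a single $1$ in it, and the only difference is \emph{where} that single $1$ sits — it sits in position $j^*_{i+1}$ for $\uv^*$ and in position $j_{i+1}$ (the rightmost) for $\uv$. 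So the claim boils down to: pushing the unique $1$ rightward within a fixed-length coordinate window only increases the resulting composed function pointwise. This is exactly an instance of $\preceq_H$ on two binary strings of length $t$ with a single $1$, with the $1$ at a smaller position in $\uv^*$; equivalently it follows from Lemma \ref{lem:compo_bhattacharya} once we verify the one-step inequality $P\circ S(p)\le S\circ P(p)$ for all $p\in[0,1]$, i.e. $1-(1-p^2)^2\le (1-(1-p)^2)^2$, which after expansion reduces to a polynomial inequality that is nonnegative on $[0,1]$ (it factors through $p^2(1-p)^2$ times a nonnegative term). Iterating this commutation swap moves the $1$ from position $j_{i+1}$ down to position $j^*_{i+1}$, each swap only decreasing the composed value, which gives the desired pointwise bound on the middle factors.

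Having established the middle-factor inequality, I would finish by applying Lemma \ref{lem:compo_bhattacharya} to the triple composition: with $l_1=l_1^*=g$, $l_2$ the middle factor for $\uv$, $l_2^*$ the middle factor for $\uv^*$ (now known to satisfy $l_2^*\le l_2$ pointwise), and $l_3=l_3^*=k$, all of them increasing $[0,1]\to[0,1]$, we conclude $\Rel(\Css^{\uv^*};p)\le \Rel(\Css^{\uv};p)$ for all $p$, hence $\uv^*\le\uv$ by eq. \eqref{eq:ord_def_1}. The statement $\uv^*\preceq_H\uv$ is immediate from the definition of $\preceq_H$, since the two support tuples agree in every coordinate except the $(i{+}1)$-st, where $j^*_{i+1}\le j_{i+1}$. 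The main obstacle I anticipate is purely bookkeeping: carefully decomposing $\Rel(\Css^{\uv})$ into the outer part, the single variable middle factor, and the inner part in a way that makes transparent that only one factor changes, and then verifying the elementary one-step inequality $P\circ S\le S\circ P$ cleanly; neither is deep, but the indexing needs care to get right.
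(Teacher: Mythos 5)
Your proposal is correct and follows essentially the same route as the paper: both decompose $\Rel(\Css^{\uv})$ and $\Rel(\Css^{\uv^*})$ into identical outer and inner compositions sandwiching a single differing middle factor, reduce the claim to $P\circ S^{k}\le S^{k}\circ P$ with $k=j_{i+1}-j^*_{i+1}$, and conclude via Lemma \ref{lem:compo_bhattacharya}. The only (harmless) difference is that the paper obtains the middle-factor inequality in one shot by recognizing $S^{k}\circ P$ and $P\circ S^{k}$ as the SoP and PoS of width $2$ and length $2^{k}$ and invoking Theorem \ref{thm:order_0}, whereas you verify the single swap $P\circ S\le S\circ P$ directly (indeed $S\circ P-P\circ S=2p^2(1-p)^2\ge 0$) and iterate; both are valid.
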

 
 \begin{proof}
 From Theorem \ref{thm:reliab_comp} we have 
 
 \begin{align*}\Rel(\Css^{\uv})&=f_1\circ \Rel(\Css^{(0)})^{j_{i+1}-j_{i+1}^*}\circ \Rel(\Css^{(1)})\circ f_2\\
  \Rel(\Css^{\uv^*})&=f_1\circ \Rel(\Css^{(1)})\circ \Rel(\Css^{(0)})^{j_{i+1}-j_{i+1}^*}\circ f_2\\
  \end{align*}
  where $f_1=\Rel(\Css^{(0)})^{j_1}\circ \Rel(\Css^{(1)})\circ \cdots\circ \Rel(\Css^{(0)})^{j_{i}-j_{i-1}-1}
\circ \Rel(\Css^{(1)})\circ \Rel(\Css^{(0)})^{j_{i+1}^*-j_{i}-1}$
and $f_2=\Rel(\Css^{(0)})^{j_{i+2}-j_{i+1}-1}\circ \Rel(\Css^{(1)}) \circ \dots 
\circ \Rel(\Css^{(0)})^{m-j_s-1}.$
  
Notice that $\Rel(\Css^{(0)})^{j_{i+1}-j_{i+1}^*}\circ \Rel(\Css^{(1)})$ and $\Rel(\Css^{(1)})\circ \Rel(\Css^{(0)})^{j_{i+1}-j_{i+1}^*}$ are the reliability polynomials of a SoP, respectively a PoS of $w=2$ and $l=2^{j_{i+1}-j_{i+1}^*}.$ Hence by Theorem \ref{thm:order_0} we have $\Rel(\Css^{(1)})\circ \Rel(\Css^{(0)})^{j_{i+1}-j_{i+1}^*} \le \Rel(\Css^{(0)})^{j_{i+1}-j_{i+1}^*}\circ \Rel(\Css^{(1)}).$ Using Lemma \ref{lem:compo_bhattacharya} applied to $\Rel(\Css^{\uv})$ and $\Rel(\Css^{\uv^*})$ we obtain the desired result.   
 \end{proof}

 \begin{proof}[Proof of Proposition \ref{pr:order_2}]
   Let $\uv\preceq_H\vv$ with $\supp{\uv}=\{j_1,\dots,j_s\}$ and $\supp{\vv}=\{k_1,\dots,k_s\}.$ Define for $i=0,\dots,s$ the binary vectors $\uv^{(*i)}$ such that $\supp{\uv^{(*i)}}=\{j_1\dots j_i,
   k_{i+1}\dots k_s\}$. We have $\uv^{(*0)}=\vv$, $\uv^{(*s)}=\uv$, and
   $\uv^{(*(i+1))}\preceq_H \uv^{(*i)}$ verify the hypotheses of the
   previous lemma. Applying the previous lemma $s$ times, we get
   $\Rel(\Css^{\uv})\le \Rel(\Css^{\vv}).$
 \end{proof}
\end{appendix}

\section*{acknowledgements}
Research supported in part by the EU through the European Research Development Fund under the Competitiveness Operational Program ({\it BioCell-NanoART = Novel Bio-inspired Cellular Nano-architectures}, POC-A1-A1.1.4-E-2015 nr. 30/01.09.2016).%

\bibliographystyle{IEEEtran}
\IEEEtriggeratref{35}

\end{document}